\newif\ifdraft
\newif\iffullversioncontent
\newif\ifplots
\newif\ifpopets
    \newcommand{\myparagraph}[1]{\noindent\textbf{#1.}}
    \newcommand{\myparagraph}[1]{\vspace*{1em}\noindent\textbf{#1.}\newline}
    \newcommand{\fv}[1]{#1}
    \newcommand{\fv}[1]{}
    \newcommand{\new}[1]{{\color{blue-violet}#1}}
    \newcommand{\potentiallyfullversion}[1]{{\color{cyan}#1}}
    \newcommand{\redcomment}[1]{}
    \newcommand{\redemph}[1]{#1}
    \let\oldfv\fv
    \renewcommand{\fv}[1]{{\color{cordovan}\oldfv{#1}}}
    \newcommand{\potentiallyfullversion}[1]{}
    \newcommand{\new}[1]{{#1}}
    \newcommand{\redcomment}[1]{}
    \newcommand{\redemph}[1]{{#1}}
\pgfplotsset{compat=1.16}
\tikzset{external/system call= {pdflatex -save-size=800000 
                          -pool-size=100000000 
                          -extra-mem-top=500000000 
                          -extra-mem-bot=100000000 
                          -main-memory=900000000 
                          \tikzexternalcheckshellescape 
                          -halt-on-error 
                          -interaction=batchmode
                          -jobname "\image" "\texsource"}}
\definecolor{amethyst}{rgb}{0.6, 0.4, 0.8}
\definecolor{blue-violet}{rgb}{0.54, 0.17, 0.89}
\definecolor{cordovan}{rgb}{0.54, 0.25, 0.27}
\newcommand{\eps}{\varepsilon}
\newcommand{\expectation}[1]{\ensuremath{\mathop{\mathbb{E}}_{#1}}}
\newcommand{\indicatorfunc}[1]{\ensuremath{\text{\large$\mathds{1}$\normalfont}_{#1}}}
\newcommand{\pr}[2][\vphantom{}]{\ensuremath{\Pr_{#1} \left [#2 \right ]}}
\newcommand{\drawXfromY}[2]{\pr{#1\! \leftarrow\! #2 }}
\newcommand{\minfty}{{\ensuremath{\text{-}\infty}}}
\newcommand{\utilityweight}{\ensuremath{w}}
\newcommand{\utilityweightdecayrate}{\ensuremath{\gamma}}
\newcommand{\currentEpoch}{\ensuremath{t}}
\newcommand{\utilityloss}{\ensuremath{U}}
\newcommand{\rangebegin}{\ensuremath{r}}
\newcommand{\scales}{\ensuremath{B}}
\newcommand{\slopes}{\ensuremath{C}}
\newcommand{\sigmoidc}{\ensuremath{F}}
\newcommand{\bias}{\ensuremath{A}}
\newcommand{\LossWeMinimize}{{\ensuremath{L}}}
\newcommand{\learningRate}{{\ensuremath{\theta}}}
\newcommand{\Losssymbol}{{\ensuremath{\mathcal{L}}}}
\newcommand{\loss}[3]{{\ensuremath{\Losssymbol_{#1 / #2}\left( #3 \right)}}}
\newcommand{\losstwo}[2]{{\ensuremath{\Losssymbol_{#1 / #2}}}}
\newcommand{\checkInfty}[1]{\ifthenelse{\equal{#1}{\infty}}{\ensuremath{\hspace{-0.9pt}\infty\hspace{-0.9pt}}}{#1}}
\newcommand{\bucketSymbol}{\ensuremath{\omega}} 
\newcommand{\bucket}[2][\vphantom{}]{\ensuremath{\bucketSymbol\!_{#1}\left(\checkInfty{#2}\right)}}
\newcommand{\bucketd}[2][\vphantom{}]{\ensuremath{\sqcup_{#1}\!\left(\checkInfty{#2}\right)}}
\newcommand{\bucketdl}[2][\vphantom{}]{\ensuremath{\bar{\sqcup}_{#1}\!\left(\checkInfty{#2}\right)}}
\DeclareMathOperator*{\Sigmoid}{\ensuremath{\sigma}}
\DeclareMathOperator*{\SoftMax}{SoftMax}
\DeclareMathOperator{\erf}{erf}
\DeclareMathOperator{\smaller}{smallerThan}
\DeclareMathOperator{\equal}{equal}
\DeclareMathOperator{\ceil}{ceil}
\DeclareMathOperator*{\argmax}{arg\,max}
\newcounter{mathenv}
\newtheorem{theorem}[mathenv]{Theorem}
\newtheorem*{theorem*}{Theorem}
\newtheorem{lemma}[mathenv]{Lemma}
\newtheorem*{lemma*}{Lemma}
\newtheorem{proposition}[mathenv]{Proposition}
\newtheorem*{claim*}{Claim}  
\newtheorem{definition}[mathenv]{Definition}
\newtheorem{example}[mathenv]{Example}
\providecommand{\keywords}[1]
{
  {\small	
  \vspace*{1em}\noindent\textbf{\textit{Keywords:}} #1
  }
}
\renewenvironment{proof}[1][\proofname]{%
   \par\pushQED{\qed}\normalfont%
   \topsep6\p@\@plus6\p@\relax
   \trivlist\item[\hskip\labelsep\bfseries{Proof of #1}\@addpunct{.}]%
   \ignorespaces
}{%
   \popQED\endtrivlist\@endpefalse
}
\date{July 2021}
\begin{document}

\newif\ifauthors
\authorsfalse
\authorstrue

\ifpopets
\ifauthors
  \author*[1]{David M. Sommer}

  \author[2]{Lukas Abfalterer}

  \author[3]{Sheila Zingg}

  \author[4]{Esfandiar Mohammadi}

  \affil[1]{ETH Zurich, E-mail: david.sommer@inf.ethz.ch}

  \affil[2]{ETH Zurich, E-mail: labfalterer@gmail.com}

  \affil[3]{ETH Zurich, E-mail: zinggsh@gmail.com}

  \affil[4]{University of Lübeck, E-mail: esfandiar.mohammadi@uni-luebeck.de}
  
\else 
\author[1]{Anonymized}
\affil[1]{Affiliations are anonymized.}
\fi
\else
\author{
   David M. Sommer\\
   \texttt{david.sommer@inf.ethz.ch}
   \and
   Lukas Abfalterer\\
   \texttt{labfalterer@gmail.com}
   \and
   Sheila Zingg\\
   \texttt{zinggsh@gmail.com}
   \and
   Esfandiar Mohammadi\\
   \texttt{esfandiar.mohammadi@uni-luebeck.de}
 }
\fi

  \title{\huge Learning Numeric Optimal Differentially Private\\ Truncated Additive Mechanisms}

\ifpopets
  \runningtitle{Learning numeric optimal differentially private truncated additive mechanisms}
\else
  \maketitle
\fi

  \begin{abstract}
{
Differentially private (DP) mechanisms face the challenge of providing accurate results while protecting their inputs: the privacy-utility trade-off.
A simple but powerful technique for DP adds noise to sensitivity-bounded query outputs to blur the exact query output: additive mechanisms. 
While a vast body of work considers infinitely wide noise distributions, some applications (e.g., real-time operating systems) require hard bounds on the deviations from the real query, and only limited work on such mechanisms exist. An additive mechanism with truncated noise (i.e., with bounded range) can offer such hard bounds.
We introduce a gradient-descent-based tool to learn truncated noise for additive mechanisms with strong utility bounds while simultaneously optimizing for differential privacy under sequential composition, i.e., scenarios where multiple noisy queries on the same data are revealed.
Our method can learn discrete noise patterns and not only hyper-parameters of a predefined probability distribution.
For sensitivity bounded mechanisms, we show that it is sufficient to consider symmetric and that\new{, for from the mean monotonically falling noise,} ensuring privacy for a pair of representative query outputs guarantees privacy for all pairs of inputs (that differ in one element). 
We find that the utility-privacy trade-off curves of our generated noise are remarkably close to truncated Gaussians and even replicate their shape for $l_2$ utility-loss.
\new{For a low number of compositions, we also improved DP-SGD (sub-sampling).}
Moreover, we extend Moments Accountant to truncated distributions, allowing to incorporate mechanism output events with varying input-dependent zero occurrence probability.
}
\end{abstract}

  \keywords{differential privacy, optimal noise, moments accountant, truncated noise}

\ifpopets
    \maketitle
    \journalname{Proceedings on Privacy Enhancing Technologies}

\DOI{Editor to enter DOI}
  \startpage{1}
\potentiallyfullversion{
  \received{..}
  \revised{..}
  \accepted{..}

  \journalyear{..}
  \journalvolume{..}
  \journalissue{..}
} 
\fi

\redcomment{
\section{Paperarbeiten}
    \begin{itemize}
        \item " Other papers have shown that it is sometimes sufficient to reduce to a non-adaptive sequence of mechanisms that that will apply for the adaptive setting.  However, work from Dong, Durfee, Rogers’20 shows that this does not always hold true and in fact there is a gap between the privacy loss of adaptively chosen and non-adaptively chosen mechanisms from a specific class.  This point should be explicit in the paper with the stated limitation" 
    \end{itemize}
}


\section{Introduction}

Differentially private (DP) mechanisms for queries face the challenge of providing accurate approximations (i.e., high utility) while sufficiently protecting the input data points against any curious recipient of the response.
Many DP mechanisms consider a robust family of queries $q$ (sensitivity-bounded queries) and add noise $N$%
 (additive mechanisms: $D \mapsto q(D) + N$). For additive mechanisms, the noise distribution is independent of the input; hence, additive mechanisms can be easily modified by modifying the noise distribution, and finding strong utility-privacy trade-offs boils down to finding suitable noise distributions.

If combined with range-bounded noise distributions (truncated distributions), additive mechanisms additionally provide strong utility guarantees on the deviation from the query result $q(D)$, e.g., in real-time operating systems or for strengthening DP anonymous communication~\cite{vuvuzela,stadium,karaoke} with bounded latency overhead. 
However, utilizing truncated distributions comes with additional privacy challenges. 
With truncated noise, some outputs can only originate from specific inputs. 
We call all perturbed output events where the attacker can clearly exclude a certain set of inputs distinguishing events. No prior work provides methods for finding truncated additive mechanisms with optimal utility-privacy trade-offs.

In practical applications, privacy has to hold even when an adversary asks several queries on the same input dataset.
There is a line of work on so-called analytical (Moments Accountant~\cite{abadi2016deeplearning}) and numerical (PrivacyBuckets~\cite{meisermohammadi2018PB}) sequential composition bounds that show how differential privacy bounds are amplified if a DP mechanism responds to multiple queries on the same dataset. This line shows that a relaxation of differential privacy, called approximate differential privacy (ADP), leads to stronger sequential composition bounds than pure differential privacy. ADP accepts a (typically very small) error $\delta$ that characterizes the mass that violates the pure $\eps$-differential privacy guarantees. However, 
prior work does not provide a generic method for finding noise distributions with strong utility-privacy trade-offs for ADP under sequential composition.

\ifpopets\bigskip\fi
\myparagraph{Summary of contribution} 
\vspace{-1em}
\begin{itemize}[wide, labelwidth=0pt, labelindent=0pt]
\item
We present a \textbf{gradient-descent-based tool} for learning additive DP mechanisms with strong utility-privacy trade-offs for ADP under sequential composition.
\item 
We prove with several \textbf{theoretical contributions} that learning additive DP mechanisms can be reduced to a \textbf{feasible optimization problem}. One results that might be of independent interest is the \textbf{extension of the Moments Accountant} (MA)
to mechanisms with \textbf{distinguishing events}. MA is formulated as an optimization problem in its moments. 
We identify \textbf{sufficient conditions} on mechanisms under which the optimization problem is feasible.
\new{
\item We learn optimal noise for \textbf{sensitivity-bounded queries and DP-SGD } (sub-sampling) and compare it to \textbf{truncated Gaussian noise}. 
We illustrate that our learned mechanisms have privacy-utility trade-offs that are close to the truncated Gaussian noise and \textbf{replicate} the latter with an increasing number of compositions in the non-sub-sampling scenario when considering an $l_2$ utility-penalty.
}
\end{itemize}

\ifpopets\bigskip\fi
\myparagraph{Our contribution in detail}\fv{We divide our contribution into six parts.}
\ifpopets\vspace{-1em}\fi
\begin{enumerate}[wide, labelwidth=0pt, labelindent=0pt]
%
\item We developed an optimization engine that learns differentially private additive noise distributions for sensitivity-bounded queries and an even more general set of queries, so-called queries with worst-case pairs of output distributions (intuitively, queries for which it suffices to analyze a pair of output distributions to derive privacy-guarantees for all inputs), while also maximizing utility.
%
\item We reduce the problem of learning additive mechanisms with strong utility-privacy trade-offs to learning noise distributions for a specific pair of outputs via several theoretical contributions. 
We proved that for sensitivity-bounded queries with noise distributions that are symmetric and monotonically falling from the mean, it suffices to validate ($\eps,\delta$)-ADP guarantees for pairs of neighboring inputs (differing in one element) where the unnoised response has the maximal distance (i.e., sensitivity). We show that such a validation already implies ($\eps$,$\delta$)-ADP for any query response for neighboring inputs.  
\item 
We derive an analytical bound for ($\eps$,$\delta$)-ADP under sequential composition and higher numbers of compositions. 
The original Moment Accountant (MA)~\cite{abadi2016deeplearning} is currently \new{one of} the best known analytical bounds and relies on minimizing the log-moment generating function $\alpha$ that characterizes the privacy-losses. This approach is, however, inherently incompatible with mechanisms that have distinguishing events. Incorporating distinguishing events poses two problems: 1) There are two partially dependent privacy leakage variables: the distinguishing events and $\alpha$. 2) The naïve approach to consider them separately distorts the privacy-leakage characterization of $\alpha$ due to the possibly unbalanced input-distribution renormalization when excluding distinguishing events. We solved these challenges for mechanisms with a worst-case reduction and provided an extended MA, able to handle truncated mechanisms. 
\item We implement PrivacyBuckets~\cite{meisermohammadi2018PB} our proven extension of MA as a differentiable function in PyTorch to estimate ($\eps$,$\delta$)-ADP under sequential composition. 
To achieve a strong utility-privacy trade-off for a given $\eps$, our optimization engine minimizes the sum of two loss terms: the numerical upper bound for $\delta$ and either a $L_1$-utility-loss (the mean absolute deviation) or a $L_2$-utility-loss (the standard deviation) of the noise distribution.
\new{We have published our source code on GitHub.}

\item 
We provide strong evidence for low numerical errors by showing that without composition, our tool finds a truncated (and improved) version of the staircase mechanism, which has been proven to be optimal in the non-truncated case. With a high number of sequential compositions, our tool finds the shape of truncated Gaussians when considering a $L_2$-utility-loss (maximal KL-divergence of $10^{-4}$ for $128$ compositions) while for $l_1$-loss it approaches the privacy-utility curve of the latter. With our Extended MA bound for a low number of compositions, we outperformed truncated Gaussians.
Our results suggest a strong utility-privacy trade-off for the truncated Gaussian mechanism.
%
\item 
We apply our engine to learn noise for DP-SGD by using the same worst-case reduction as in the work of Abadi et al.~\cite{abadi2016deeplearning}. 
Our experimental results suggest that also here, the truncated Gaussian distribution provides a strong utility-privacy trade-off.
\end{enumerate}

\section{Background}
    
This section covers the background required for our contributions in \cref{sec:approach,sec:results}. First, we introduce differential privacy with its variants and sequential composition results. Second, we discuss additive noise, truncated Gaussian distributions, and counting queries before concluding with a description of utility and optimal noise.

\subsection{Differential Privacy}

To quantify the privacy of a mechanism, Dwork et al.~\cite{dwork2006our} proposed a strong privacy notion, called $(\eps,\delta)$-approximate differential privacy (ADP). Given a privacy-preserving mechanism, this definition argues about the maximal output event probability deviation of that mechanism when comparing the result of any two \textit{neighboring} inputs, rendering the contribution of an individual plausibly deniable.
While the exact meaning of \textit{neighboring} depends on the specific task, it can be understood as two inputs that differ only by the contribution of a single individual. 

\begin{definition}[Approximate Differential Privacy~\cite{dwork2006our}]\label{definition:adp}
A randomized algorithm $M$ with domain $\mathcal D$ is ($\eps$, $\delta$)-approximate differentially private (($\eps$, $\delta$)-ADP) if for all $\mathcal S \in \text{Range}(M)$ and for all neighboring $D_0, D_1\in \mathcal D$:
\begin{equation*}
 \pr{M(D_0) \in S} \leq e^\eps \pr{M(D_1) \in S} + \delta.
\end{equation*}
where the probability space is over the coin flips of the mechanism $M$. If $\delta$ = 0, we say that $M$ is $\eps$-differentially private.  
\end{definition}
ADP guarantees that no post-processing can deteriorate its privacy guarantees~\cite{dwork2014algorithmic}, as long as no additional knowledge about the inputs is incorporated~\cite{narayanan2006break}.

For probabilistic differential privacy (PDP), a more intuitive formulation, this does not hold.

\begin{definition}[Probabilistic Differential Privacy~\cite{GoMaWaXiGe_09pdp}]\label{definition:pdp}
A randomized algorithm $M$ with domain $\mathcal D$ is ($\eps$, $\delta$)-probabilistic differentially private (($\eps$, $\delta$)-PDP) if for all neighboring $D_0, D_1\in \mathcal D$ we can divide the output space in two sets $O,S$ such that $\forall o\in O$
\begin{align*}
    \pr{M(D_0) = o\,} \leq e^\eps \cdot \pr{M(D_1) = o\,} \\
    \text{and\quad} \pr{M(D_0) \in S\,} \leq \delta
\end{align*}
\end{definition}



\subsubsection{Worst-case, Privacy Loss, and Distinguishing Events}

Classically, differential privacy argues about the outputs of a probabilistic mechanism $M$ that runs on similar (neighboring) inputs. In literature, however, a reduction to so-called \textit{worst-case output distributions}, i.e., a pair of mechanism output distributions $M(D_0)$ and $M(D_1)$ such that no pair of inputs induces more privacy leakage, is common to simplify the privacy analysis~\cite{dwork2014algorithmic,kairouz2017composition,MuVa_16:psharp,abadi2016deeplearning,mironov2017renyidp,CDPBun}.
While this formalization is unconventional and, at first glance, seems to restrict the applicability to particular queries, this approach leads to far more general results~\cite{sommer2019privacy}.
\fv{For example, analyzing the approximate randomized response (ARR) mechanism, i.e., analyzing two worst-case output distributions parametric solely in a ($\eps_j,\delta_j$) pair, exactly yields optimal mechanism-oblivious bounds~\cite{kairouz2017composition,MuVa_16:psharp}.}
In particular, we are often interested in quantifying the privacy of a particular mechanism under composition instead of the privacy of adversarially chosen mechanisms. 
Recent results show that better fitting worst-case distributions can lead to significantly tighter privacy bounds under composition~\cite{CDPBun,DwRo_16:concentrated,mironov2017renyidp,abadi2016deeplearning,meisermohammadi2018PB}. 
These methods started to more intensely use the \emph{privacy loss} of a mechanism%
\fv{~that has been proposed by a seminal work by Dinur and Nissim}%
~\cite{DiNi_03:confidence-gain}. 
Sommer et al.~\cite{sommer2019privacy} extended this definition, allowing to include mechanism output events with zero occurrence probability.
\begin{definition}[Privacy Loss Random Variable~\cite{sommer2019privacy}]\label{def:privacyloss}
Given a probabilistic mechanism $M: \mathcal D \rightarrow \mathbb{R}$, let $o \in \mathbb{R}$ be any potential output of $M$ and let $D_0,\,D_1\in\mathcal D$ be two inputs. We define the \emph{privacy loss} random variable of an output $o$ of $M$ for $D_0,\,D_1$ as\renewcommand{\drawXfromY}[2]{\pr{#2=#1}}
 \begin{align*}
&\loss{M(D_0)}{M(D_1)}{o}\\
&\hspace*{0.5em}=\begin{cases}
 \infty & \text{if } \drawXfromY{o\,}{M(D_0)} \!\neq\! 0\text{ and } \\
 & \drawXfromY{o\,}{M(D_1)} \!=\! 0 \\
 \ln\left(\frac{\drawXfromY{o}{M(D_0)}}{\drawXfromY{o}{M(D_1)}}\right) & \text{if }\drawXfromY{o\,}{M(D_i)}\! \neq \!0\,\,\,\forall i\!\in\!\{0,\!1\}\\
 \minfty & \text{else}
 \end{cases}
 \end{align*}
 where we consider $\infty$ and $\minfty$ to be distinct symbols.
\end{definition}

For a specific worst-case mechanisms $M(D_0)$, the events with privacy loss $\loss{M(D_0)}{M(D_1)}{o} = \minfty$ do never occur. The events with infinite privacy loss $\loss{M(D_0)}{M(D_1)}{o} = \infty$, however, reveal immediately that input $D_0$ is used for the mechanism $M$, which differential privacy tries to hide. Therefore, we coin these events \textit{distinguishing events}.
\begin{definition}[Distinguishing Events]
Let $D_0$,$D_1$ be worst-case inputs for a mechanism $M:\mathcal D\rightarrow \mathbb{R}$. Then,
\textit{distinguishing events} are defined as all events $o$ in 
\begin{equation*}
    \{o\mid \loss{M(D_0)}{M(D_1)}{o} = \infty,\,o\in \mathbb{R}\}.
\end{equation*}
\end{definition}

\subsubsection{Composition}\label{sec:background:DP:composition}

Previous definitions consider an adversary who has observed one mechanism output only. However, when a mechanism is regularly used, an adversary may encounter multiple (sequential) outputs originating from the same input $D$. Intuitively, privacy degrades with increasing numbers of collected observations. Obtaining a tight bound for privacy leakage under sequential composition is of essence as loose estimations force to weaken the mechanism to still fulfill the overly conservative privacy requirements while having a devastating effect on the usefulness of the obtained output.
While there are bounds considering adaptive mechanisms~\cite{abadi2016deeplearning,kairouz2017composition,dwork2010boosting}, this work considers only independent composition because adaptive mechanisms are often reduced to non-adaptive ones for simplifying the privacy analysis.

Abadi et al.~\cite{abadi2016deeplearning} introduced \textit{Moments Accountant} that allows simple composition of adaptive mechanisms and an upper bound of the resulting ADP-guarantees. To achieve this, they have shown that it is sufficient to bound the moment-generating function of the privacy-loss random variable.


\begin{lemma}[Moments Accountant~\cite{abadi2016deeplearning, mironov2017renyidp}]\label{lemma:moments-accountant}
For $\lambda > 0$, let $\alpha_{M}(\lambda)$ be the logarithm of the maximal moment-generating function of the privacy-loss random variable generated by the mechanism $M$ for any auxiliary input $aux$ and any neighboring inputs $D,D'$.
    \begin{enumerate}
        \item \textbf{Composability:}
            Suppose that a mechanism $M$ consists of a sequence of adaptive mechanisms $M_1,\ldots,M_k$ where $M_i:\left(\prod_i^{j=1} R_j\right)\times \mathcal D \rightarrow R_i$. Then, for any $D_0, D_1\in \mathcal D$, $\lambda > 0$
            \begin{equation*}
                \alpha_M\left(\lambda\right) \leq \sum_{i=1}^k \alpha_{M_i}\left(\lambda \right)
            \end{equation*}
        \item \textbf{Tail-bound:}
            For any $\eps >0$, the mechanism is $(\eps, \delta)$-ADP for 
            \begin{equation*}
                \delta = \min_\lambda \exp\left(\alpha_M\left(\lambda\right)-\lambda \eps\right)
            \end{equation*}
    \end{enumerate}
\end{lemma}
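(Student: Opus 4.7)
The plan is to prove the two claims separately. For composability, I would start from the definition of the privacy loss random variable for the composed mechanism on adjacent inputs $D_0, D_1$ and outputs $(o_1, \dots, o_k)$. Using the chain rule for conditional probabilities, the log-ratio
$$\log \frac{\Pr[M(D_0) = (o_1,\dots,o_k)]}{\Pr[M(D_1) = (o_1,\dots,o_k)]} = \sum_{i=1}^k \log\frac{\Pr[M_i(o_{<i}, D_0)=o_i]}{\Pr[M_i(o_{<i}, D_1)=o_i]}$$
decomposes telescopically into the per-round privacy losses $L_i$. Thus the composed privacy loss is $L = \sum_i L_i$, and the moment generating function is
$$\mathbb{E}\bigl[e^{\lambda L}\bigr] = \mathbb{E}\Bigl[\prod_{i=1}^k e^{\lambda L_i}\Bigr].$$

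The key step is then the tower property: conditioning on the history $o_{<i}$, each inner expectation $\mathbb{E}[e^{\lambda L_i} \mid o_{<i}]$ is bounded by $\exp(\alpha_{M_i}(\lambda))$ uniformly in the auxiliary input (here $o_{<i}$ plays the role of auxiliary input, which is why $\alpha_{M_i}$ is defined as a supremum over $aux$). Iterating this from $i=k$ down to $i=1$ yields $\mathbb{E}[e^{\lambda L}] \leq \prod_i \exp(\alpha_{M_i}(\lambda))$. Taking logarithms and then the supremum over neighboring $D_0, D_1$ gives the additive bound on $\alpha_M(\lambda)$.

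For the tail-bound, I would use the standard Markov-on-exponential argument. Fix any $\lambda > 0$ and define the "bad" set $B = \{o : \loss{M(D_0)}{M(D_1)}{o} > \eps\}$. For any measurable $S$,
$$\Pr[M(D_0)\in S] \leq \Pr[M(D_0)\in S \setminus B] + \Pr[M(D_0) \in B].$$
On $S \setminus B$ the privacy loss is bounded by $\eps$, so $\Pr[M(D_0)\in S\setminus B] \leq e^\eps \Pr[M(D_1)\in S\setminus B] \leq e^\eps \Pr[M(D_1)\in S]$. For the second term, Markov's inequality applied to the random variable $e^{\lambda L}$ yields
$$\Pr[M(D_0)\in B] = \Pr[e^{\lambda L} > e^{\lambda\eps}] \leq \frac{\mathbb{E}[e^{\lambda L}]}{e^{\lambda\eps}} \leq \exp(\alpha_M(\lambda) - \lambda\eps).$$
Since the bound holds for every $\lambda>0$, taking the minimum gives the claimed $\delta$.

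The main obstacle I foresee is the composability argument in the presence of distinguishing events (infinite privacy loss), which the paper's extended privacy loss definition explicitly allows: if any $L_i$ takes value $\infty$ or $\minfty$, the product and the conditioning must be handled carefully. In the standard case (no distinguishing events on the support reached), everything goes through cleanly; for truncated mechanisms with distinguishing events the author presumably requires the worst-case reduction that excludes these events from $\alpha_M$ and handles them via the $\delta$-budget separately — which is exactly the motivation for the paper's extended MA. For this lemma as stated (following Abadi et al.\ and Mironov), I would assume the non-distinguishing regime so that all logarithms and expectations are finite, and the proof proceeds as sketched.
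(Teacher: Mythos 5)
Your proof is correct and follows essentially the same route the paper relies on: the lemma is stated as a cited background result (Abadi et al.\ and Mironov), and the paper's only related proof is of its \emph{Extended} MA in the appendix, which explicitly defers the core composability step to Abadi et al.\ (``Abadi et al.\ have already shown the case $U^+_{M_i}\times U^+_{M_{i+1}}$'') and uses the same Markov-inequality-plus-set-splitting argument for the tail bound that you give. Your telescoping/tower-property derivation of the moment bound and your $B=\{o: \Losssymbol(o)>\eps\}$ splitting for the ADP reduction match that cited proof; and your closing remark -- that the standard argument breaks down when some $L_i$ takes value $\infty$ or $\minfty$ and that this is precisely the gap the paper's Extended MA closes -- is exactly the observation the paper uses to motivate \Cref{thm:apprixmate-rdp-composition}.
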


Providing tighter bounds for ADP and PDP, Meiser et al. introduces PrivacyBuckets~\cite{meisermohammadi2018PB}, a numerical tool for computing tight ADP- and PDP-bounds for sequential and independent composition, which was later refined and analytically consolidated by Sommer et al.~\cite{sommer2019privacy}. Their approach utilizes the distribution of the privacy-loss random variable~\cite{sommer2019privacy} and the fact that independent sequential composition can be expressed as a convolution of its (privacy-loss) distribution, obtaining a new privacy-loss distribution representing the loss of two adversarial observations, and deriving upper bounds for ADP and PDP guarantees from it.
Their work assumes that the analyzed mechanism can be reduced to a worst-case distribution that captures the worst privacy leakage possible. For most mechanisms, this assumption is feasible.
We will use an implementation of PrivacyBuckets by Sommer\footnote{\label{footnote:privacybuckets}Available at \url{https://github.com/sommerda/privacybuckets}} and denote the corresponding upper-bound for a given $\eps$ by $\delta^{PB}$.

\subsection{Sensitivity and Additive Noise}

In this work, we search optimal noise for mechanisms with additive noise. To formally introduce the required notation, we first define a noise function which is in fact a probability density function (pdf):
\new{
\begin{definition}[Noise Function and Sampling] \label{def:noiseFunc}
Let $X=\{x_i\}_{i\in\mathbb{Z}}$ with $x_i\scriptstyle<\textstyle\! x_j$ for $i\scriptstyle<\textstyle\!j$ be a discretization of $\mathbb{R}$. 
 Then $p:X\longrightarrow [0,1]$ is a noise function if $\forall x\in X$ 
 \begin{gather*}
  0 \leq p(x) \text{~~~~and~~~~}
  \sum\nolimits_{x\in X}\, p(x) = 1
 \end{gather*}
We sample from $p$ by first sampling $x_i \in X$ according to $p(x)$ and then sampling uniformly  from $[\scriptstyle\frac{x_{i-1} + x_i}{2},\, \frac{x_{i} + x_{i+1}}{2}\textstyle)$.
\end{definition}
}
We introduce the later frequently referenced truncated Gaussian noise as an example.
\begin{example}[Truncated Gaussian Noise]\label{def:truncated-gaussian} A symmetric, zero-centered and at distance $r$ from zero truncated Gaussian with variance $\sigma^2$ is defined as 
\begin{equation*}
p(x)dx = \indicatorfunc{|x| \leq \rangebegin} \frac{1}{\sqrt{2\pi\sigma^2}N}  \exp\left(-\frac{x^2}{2\sigma^2}\right)\,dx 
\end{equation*}
where $N \!=\! \erf\!\left(\!\frac{r}{\sqrt{2}\sigma}\!\right)$ is the normalisation constant with $\erf$ as the error function and $\mathds{1}$ as the indicator function.
\end{example}
Note that our numerically generated noise functions are discrete, i.e., the integral is replaced with a sum. Furthermore, the generated noise functions are generally truncated, which is reflected by $p(x) = 0$ beyond a certain distance from the center. 
We now define additive noise mechanisms that add noise to a processed output $q(D)$ for a dataset $D\in\mathcal D$.
\begin{definition}[Additive Noise Mechanism]\hfill\\
Let $q:\mathcal D \rightarrow \mathbb{R}$ and $p$ be a noise function. Then, an additive noise mechanism is defined as
\begin{equation*}
    M_q(D;p) := q(D) + x,\quad\text{with}\quad x \sim p
\end{equation*}
In abuse of notation, we write $M_q(D; x) = q(D) + x, x \in \mathbb R$, enforcing a deterministic output of $M_q$ and coin it \textit{deterministic additive noise mechanism}.
\end{definition}
A simple example are counting queries that have been shown to be differentially private when noising them with Laplace noise (noise $\sim \frac{\eps}{2} e^{\eps|x|}$)~\cite{dwork2006calibrating}.
\begin{example}[Counting Queries]
Let $q:\mathcal D\rightarrow \mathbb N$ count the occurrences of an attribute in dataset $D\in\mathcal D$. 
\end{example}
By definition, counting queries from two neighboring datasets, i.e., two datasets that differ only in one record, can only differ by $0$ or $1$. 
In literature, privacy guarantees are often assigned under the constraint that the processing result of two neighboring inputs does not deviate more than a specific value and that less deviation does not incur more privacy loss~\cite{dwork2006calibrating,abadi2016deeplearning,dwork2014algorithmic}. This maximal deviation is called \textit{sensitivity}.
\new{
\begin{definition}[Sensitivity] \label{sensitivity}
Let $q: \mathcal{D} \rightarrow \mathbb{R}^d$ be a real-valued query function. Its sensitivity $s$ is defined as
\begin{align*}
 s = \max_{\forall~\text{neighboring}~D_1,D_2 \in \mathcal{D}} ||q(D_1)-q(D_2)||_2
\end{align*}
\end{definition}
}

\subsection{Utility and Optimal Noise}

Utility describes the closeness of the randomized response from a mechanism $M(D)$ to the true result $q(D)$. For any $D$, a large deviation from $q(D)$ degrades the usefulness of the mechanism. 
Previous work formalized utility by a cost-function penalizing such deviations~\cite{geng2017staircase}. 
\new{
\begin{definition}[Utility-loss]\label{definition:utility}
Let $M_q: \mathcal D \times \mathcal R \rightarrow \mathbb R$ be a deterministic additive noise mechanism with noise function $p$, let $u:\mathbb R\rightarrow \mathbb R$ be a cost function penalizing deviations from $q(D)$. Let $\mathcal D$ be the space of all possible databases. Then, a \textit{utility-loss function} is defined as 
\begin{equation*}
\utilityloss{}(p; u, M_q, \mathcal D) = \sup_{\mathcal D}\int_{\mathcal R} u(M_q(D, x), q(D)) \,p(x) dx
\end{equation*}
\end{definition}
}
Note that the utility-loss considers only the deviation from $q(D)$, and is equal for all $D\in\mathcal D$. The domain $\mathcal R$ can be seen as a randomness space and, thereby, allowing the definition to be applicable to a much wider class of mechanisms than just additive noise. In this work, however, we are applying only $L_1$ and $L_2$ losses, effectively reducing the definition to $U_{L_1}(p) = \expectation{p}\!\left[|x|\right]$ and $U_{L_2}(p) = \expectation{p}\!\left[|x|^2\right]$.




Utility and privacy maximization stand in direct conflict with each other. The mechanism $M_q$ delivering the highest possible utility is returning the result $q(D)$ directly without noise, neglecting any privacy concerns. On the other hand, a mechanism $M$ guaranteeing high privacy \new{might} need to distort the output close to complete uselessness. Thus, we are interested in the noise that maximizes utility while not violating given $(\eps,\delta$)-differential privacy constraints. As we focus on truncated noise distributions, we encounter necessarily distinguishing events and, thereby, a non-zero $\delta$. 




Previous work has formalized optimal noise as a minimization problem, optimizing the utility-loss as much as possible while fulfilling pure $\eps$-DP ($(\eps,0)$-ADP)~\cite{geng2015optimal}. As our first contribution, we extend this definition to include a non-zero $\delta$.
\begin{definition}[Optimal Noise]\label{def:optmal-noise} \label{definition:optimal-noise}
Let $U$ be a utility-loss function and $M_q:\mathcal D\times \mathcal P \rightarrow \mathbb R$ a randomised mechanism. Then, for any $\eps, \delta > 0$, optimal noise is defined as
\begin{gather*}
    \hat{p} = \min_p \text{~} \utilityloss(p; u, M_q, \mathcal D) \\
    \text{where $M_q(D;p)$ is $(\eps, \delta)$-ADP.}
\end{gather*}
\end{definition}
\fv{Note that our optimization algorithm that we introduce later does not take $\eps$ and $\delta$ as input but balances between utility-loss and $\delta$ for a fixed $\eps$.}

\section{Approach \& Theoretic Results}\label{sec:approach}
This section presents our approach to finding optimal noise numerically and elaborating on our theoretical results. First, we prove that it is sufficient for many cases to assume symmetric and monotonic noise distributions (\cref{sec:approach:monotonicity}).
For a fixed $\eps$, we aim to minimize a weighted sum of utility-loss and the resulting privacy parameter $\delta$ by searching a suitable noise $p$ with gradient descent (\cref{sec:approach:gradient-descent}). We provide three numerical and derivable upper bounds for the minimizer: our Extended Moments Accountant incorporating distinguishing events (\cref{sec:approach:RDP}) and an ADP and PDP bound based on PrivacyBuckets (\cref{sec:approach:adp-bound,sec:approach:pdp-bound}). 
Any proofs and the extensive formulation of the Extended Moments Accountant have been deferred to the appendix. 

\subsection{Monotonicity \& Symmetry Assumptions}\label{sec:approach:monotonicity}
 
 
 
         

Now, we introduce simplifying assumptions about the additive optimal noise-shape and show them to be feasible.
Under the reasonable assumption that the utility-loss $u$ is symmetric, we will show that it is sufficient to consider only symmetric noise distributions \new{and argue that we only need to consider monotonically falling distributions.}

Intuitively, the symmetry property of optimal noise originates from the fact that we do not know in what direction the difference in outputs of $q(D)$ and $q(D')$ will occur. In compliance with the information-theoretic guarantees of differential privacy, we need to protect both equally. Hence, the symmetry. 
For our case, it is sufficient to show that there exists a symmetric noise distribution producing the same utility-loss as an asymmetric one while fulfilling the same ADP-guarantees and, thereby, be equally considerable optimal noise. 
\begin{lemma}\label{lemma:symmetric-noise}
 Let $u$ be a cost function symmetric in $x$. Let $\eps,\delta>0$. Given a noise distribution $p$ satisfying ($\eps,\delta$)-ADP, there exists a symmetric noise distribution $\hat{p}$ satisfying ($\eps,\delta$)-ADP and 
 \begin{equation*}
     \utilityloss(p; u, M_p, \mathcal D) = \utilityloss(\hat{p}; u, M_p, \mathcal D)
 \end{equation*}
\end{lemma}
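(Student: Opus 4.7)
Proof proposal: The natural candidate is the symmetrisation $\hat p(x) := \tfrac{1}{2}(p(x)+p(-x))$, and the plan is to verify, in turn, (i) that $\hat p$ is a valid symmetric noise function, (ii) that the utility-loss is preserved, and (iii) that the ADP guarantee is preserved. Items (i) and (ii) are short: $\hat p \ge 0$, $\sum_x \hat p(x) = 1$, and $\hat p(-x)=\hat p(x)$ follow directly from the definition; for the utility-loss, I would split the integral defining $\utilityloss(\hat p;u,M_q,\mathcal D)$ into the two summands of $\hat p$ and substitute $x\mapsto -x$ in the second one — by symmetry of $u$ in the deviation, the two halves equal $\tfrac{1}{2}\utilityloss(p;u,M_q,\mathcal D)$ each, giving the claimed equality.

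The real work is (iii). The mechanism $M_{\hat p}$ is the fair mixture of $M_p$ and $M_{p^-}$, where $p^-(x) := p(-x)$, and $(\eps,\delta)$-ADP is preserved under such mixtures (just average the two inequalities of \cref{definition:adp}), so it suffices to show that $M_{p^-}$ is itself $(\eps,\delta)$-ADP. My plan is to reduce the ADP condition for any additive mechanism to a family of shifted-pair inequalities indexed by the \emph{shift set} $\mathcal S := \{q(D_1)-q(D_0) : (D_0,D_1)\text{ neighbouring}\}$: the condition $\Pr[M_p(D_0)\in S']\le e^\eps\Pr[M_p(D_1)\in S']+\delta$ is equivalent, after the substitution $S = S'-q(D_0)$, to $\Pr[Y\in S]\le e^\eps\Pr[Y+\Delta\in S]+\delta$ with $Y\sim p$ and $\Delta\in\mathcal S$. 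Writing the analogous reduction for $M_{p^-}$ and substituting $Y\mapsto -Y$ inside the probability, the $p^-$ inequalities at shift $\Delta$ turn out to be exactly the $p$ inequalities at shift $-\Delta$. Since the neighbouring relation is symmetric ($(D_0,D_1)$ neighbouring $\iff (D_1,D_0)$ neighbouring), $\mathcal S$ is closed under negation; hence the family of inequalities required for $M_{p^-}$ coincides with the one already granted by the ADP of $M_p$, and (iii) follows.

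The main obstacle I anticipate is resisting a naive postprocessing shortcut: one might be tempted to write $M_{p^-}(D) = 2q(D) - M_p(D)$ and invoke the postprocessing invariance of ADP, but the reflection point $2q(D)$ depends on the input $D$, so this is \emph{not} a valid data-independent postprocessing channel and yields no direct ADP guarantee. The clean route is the shift-set argument above, which exploits the fact that ADP of an additive mechanism factors through a set of shifts that is automatically closed under negation. A small auxiliary point that should also be handled carefully is the discrete sampling convention of \cref{def:noiseFunc}: because sampling from $p$ and from $p^-$ uses symmetric half-intervals, the identification of $M_{p^-}$ with a reflected additive mechanism is literally correct, so no rounding subtleties enter.
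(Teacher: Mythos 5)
Your proposal is correct and follows essentially the same route as the paper's proof: the same symmetrization $\hat p(x)=\tfrac12\bigl(p(x)+p(-x)\bigr)$, the same observation that averaging the two one-sided ADP inequalities preserves the bound, and the same appeal to the symmetry of the neighbouring relation to supply the reflected shift $-\Delta$. The only presentational difference is that the paper carries out the computation inline in one chain of inequalities, whereas you make the decomposition into a fair mixture of $M_p$ and $M_{p^-}$ and the closure of the shift set under negation explicit; your write-up also correctly keeps the ADP condition one-sided, where the paper's use of absolute values and the triangle inequality is an unnecessary (though ultimately harmless) detour.
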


\new{
We argue now that it is sufficient to consider piece-wise continuous (c.f. \cref{def:noiseFunc}) and monotonic noise. Geng et al. have shown that, for a given $\delta$, the optimal noise for \textit{a single} mechanism invocation is monotonic on all measurable sets with a potential point mass at $0$~\cite{geng2019pmlr}, maximizing utility by potentially returning the true query result in certain settings. However, in our setting where we aim to minimize $\delta$ together with the utility-loss, such a centered point mass is detrimental as the impact of $\delta$ gets strongly amplified under composition, see \cref{sec:aproach:privacybounds} for mathematical descriptions. Furthermore, our monotonic noise without a centered point mass still allows for a large mass at the center if preferred by the optimizer. 
Besides the center, a monotonic noise distribution is definitely optimal because non-monotonicity implies that there exist two query output differences $o, o'$ with $|o|<|o'|<s$ where $o$ leads to higher $\delta$ for a fixed $\eps$ than $o'$ (compared to the center) while $o'$ is occupying more probability mass, effectively worsening utility for the same $\delta$.
}

The optimal noise $p$ we search for must provide differential privacy guarantees to $M_q$ for any deviations $q(D)$ - $q(D')$ considering neighboring $D,D'$. While a numerical approach could check the conditions for any (discrete) deviation, the computational complexity increases linearly with the number of discretization steps.
However, if we assume from the center monotonically falling noise and two inputs $D,D'$ with a deviation smaller than the maximal sensitivity $s$, then the two identically shaped but differently centered output distributions produced by $M_q$ cannot induce a higher privacy leakage than the one occurring by inputs that deviate by $s$ because for any output events $o$, the privacy-loss is smaller or equal due to monotonicity. Therefore, it is sufficient to check for DP-abidance only at the maximal deviation~$s$. In combination with the previously shown symmetry property, such noise needs to be centered at zero.

\new{
Before we prove this simplification, we introduce a technical lemma proving that considering only the discrete noise occurrence probabilities $p(x)$ is sufficient to guarantee ADP for any query output $q(D)$ and any drawn noise in $\mathbb R$. This effectively connects our discrete numerical search to the continuous domain. 
While we only show the claim for ADP, we conjecture its validity for PDP as well.
\begin{lemma}\label{lemma:discretization-reduction} 
Let $M_q$ be a additive mechanism with noise $p$ and sensitivity $s$. Let $X=\{x_i\}_{i\in \mathbb Z}$ be a equidistant and ordered (cf. \cref{def:optmal-noise}) discretization of $\mathbb R$ with $s= x_i - x_j$ for some $i,j$. Then, for any $r\in \{x_i - x_j | i,j\in\mathbb Z\},\, r\leq s$
and any $S\subseteq X$
\begin{equation*}
    \sum_{x\in S} p(x) \leq \delta + e^\eps \sum_{x\in S} p(x+r) 
\end{equation*}
implies that $M_q$ is $(\eps, \delta)$-ADP for any query output in $\mathbb R$.
\end{lemma}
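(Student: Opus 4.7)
The plan is to reduce the continuous ADP inequality for $M_q$ to the discrete condition assumed in the hypothesis by exploiting the piecewise-constant structure of the sampled noise. By the sampling prescription in \cref{def:noiseFunc}, the smoothed noise random variable has density $p(x_i)/\Delta$ on $I_i := [x_i - \Delta/2,\, x_i + \Delta/2)$, where $\Delta$ denotes the common spacing of $X$, and $M_q(D)$ therefore has a step-function density $f(\cdot - q(D))$. For neighboring $D_0,D_1$, writing $T = S' - q(D_0)$ and $r = q(D_0) - q(D_1) \in [-s,s]$, the defining inequality of $(\eps,\delta)$-ADP becomes
\[
\int_T f(x)\,dx \;\leq\; \delta + e^{\eps}\int_T f(x-r)\,dx ,
\]
which I would establish for every measurable $T \subseteq \mathbb{R}$ and every such $r$.

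Next I would lift the discrete hypothesis from $\{0,1\}$-valued to $[0,1]$-valued weights via a layer-cake decomposition ($\ell_i = \int_0^1 \mathds{1}[\ell_i \geq t]\,dt$), obtaining that for every $r' \in \Delta\mathbb{Z}$ with $r' \leq s$ and every family $\{\ell_i\}_i \subset [0,1]$,
\[
\sum_i p(x_i)\,\ell_i \;\leq\; \delta + e^{\eps}\sum_i p(x_i + r')\,\ell_i .
\]
For a pure grid shift $r = k\Delta$, setting $\ell_i = \mathrm{Leb}(T \cap I_i)/\Delta$ and $r' = -k\Delta$ immediately yields the continuous inequality.

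For a non-grid shift, write $r = k\Delta + \rho$ with $k \in \mathbb{Z}$, $0 < \rho < \Delta$, and split every $I_j$ into a left slice $J_j^L$ of length $\rho$ and a right slice $J_j^R$ of length $\Delta-\rho$. Since $I_i + r = J_{i+k}^R \cup J_{i+k+1}^L$, direct bookkeeping gives
\[
\int_T f(x)\,dx = \alpha \sum_j p(x_j)\,u_j + \beta \sum_j p(x_j)\,v_j ,
\]
\[
\int_T f(x-r)\,dx = \alpha \sum_j p(x_{j-k})\,u_j + \beta \sum_j p(x_{j-k-1})\,v_j ,
\]
with $\alpha = (\Delta-\rho)/\Delta$, $\beta = \rho/\Delta$, and $u_j,v_j \in [0,1]$ measuring the normalized Lebesgue mass of $T \cap J_j^R$ and $T \cap J_j^L$. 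Applying the lifted inequality with shift $-k\Delta$ to the $u$-weights and with shift $-(k+1)\Delta$ to the $v$-weights, then combining the two bounds with weights $\alpha$ and $\beta$ (using $\alpha + \beta = 1$ so that the $\delta$ terms do not compound), produces the desired continuous inequality.

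The main obstacle is verifying that both grid shifts invoked in the non-grid case actually satisfy the hypothesis condition $r' \leq s$. Since $s = K\Delta$ is itself a grid difference by assumption and $r = k\Delta + \rho \in [-s,s]$ with $0 < \rho < \Delta$ forces $k \in \{-K,\ldots,K-1\}$, one obtains $-k\Delta \leq K\Delta = s$ and $-(k+1)\Delta \leq (K-1)\Delta < s$, so both shifts lie in $(-\infty, s]$ and the hypothesis applies. The reverse direction of ADP (swapping $D_0$ and $D_1$) corresponds to negating $r$, which remains in $[-s,s]$, so the same argument handles both directions uniformly; no separate case analysis is required.
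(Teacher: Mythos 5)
Your proposal is correct and follows essentially the same route as the paper's proof: both split each discretization cell into two sub-slices determined by the fractional part of the shift, apply the discrete hypothesis for the two adjacent grid shifts, and recombine with convex weights summing to one so that the $\delta$ terms do not compound. Your layer-cake lifting to $[0,1]$-valued weights is merely a cleaner bookkeeping device than the paper's per-cell coverage coefficients $\xi_l,\xi_r$ (taken as maxima over cells); the underlying argument is the same.
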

}

Now, we prove the sensitivity-reduction property for a 1-dimensional noise distribution as one of our main theoretic contributions. 
Note that it is straightforward to generalize the claim to arbitrarily dimensional and spherically rotation-symmetric noise distributions and sensitivity conditions because these problems can be reduced to a 1-dimensional privacy analysis as, e.g., Abadi et al.~\cite{abadi2016deeplearning} have shown. While we only show the claim for ADP, we conjecture its validity for PDP as well.
\redcomment{Besserer Name als Shift Invariance}
\begin{theorem}[Shift Invariance]\label{thm:shift-invariance}
Let $p$ be a symmetric and from $0$ monotonically decreasing noise function, i.e., for $0<r'<r$,
\begin{align*}
    p(- r) = p(r) &\text{~~~(symmetry)}\\
    p(r') \geq p(r) &\text{~~~(monotonicity)}
\end{align*}
Let $M_q$ be a additive noise mechanism with sensitivity $s$. If $\forall S \subseteq X$
\begin{equation*}
\sum_{x\in S} p(x) \leq \delta + e^\eps \sum_{x\in S} p(x+s) 
\end{equation*}
Then $M_q(\!D\!)$ is $(\eps,\delta)$-\!ADP for any input $D\!\in\!\mathcal D$.
\end{theorem}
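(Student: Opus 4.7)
The plan is to reduce the ADP condition at an arbitrary shift $r$ with $|r| \leq s$ to the hypothesis at the shift $s$, and then invoke \cref{lemma:discretization-reduction} to lift the resulting discrete statement to any $D \in \mathcal D$. Since sensitivity bounds the gap $q(D_0) - q(D_1)$ to lie in $[-s, s]$, what I need to establish is that for every discrete shift $r$ with $|r| \leq s$ and every $S \subseteq X$,
\begin{equation*}
\sum_{x \in S} p(x) \;\leq\; \delta + e^\eps \sum_{x \in S} p(x+r).
\end{equation*}
Symmetry of $p$ (via the change of variables $y = -x - r$) makes the case $r < 0$ equivalent to $r > 0$, so I would restrict attention to $r \in [0, s]$.

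The crux is the pointwise geometric comparison
\begin{equation*}
p(x+r) \;\geq\; \min\bigl(p(x),\, p(x+s)\bigr)\quad \text{for all } x \in X,\ r \in [0, s],
\end{equation*}
which, by monotonicity of $p$ in $|\cdot|$, reduces to checking $|x+r| \leq \max(|x|, |x+s|)$. This is a short case analysis on the signs of $x$ and $x+s$: when both signs agree, the entire interval $[x, x+s]$ lies on a single half-line and $|x+r|$ is sandwiched between $|x|$ and $|x+s|$; when $x \leq 0 \leq x+s$, a further split on the sign of $x+r$ still keeps $|x+r|$ below the larger of $|x|$ and $|x+s|$.

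With this in hand I would partition $X$ into $X_1 = \{x : p(x+r) \geq p(x)\}$ and $X_2 = \{x : p(x+r) < p(x)\}$. For $x \in X_1$ the term $p(x) - e^\eps p(x+r)$ is non-positive and contributes nothing to the ADP gap. For $x \in X_2$ the geometric inequality forces the minimum to be attained at $p(x+s)$, so $p(x+s) \leq p(x+r)$ and therefore $p(x) - e^\eps p(x+r) \leq p(x) - e^\eps p(x+s)$. Summing over an arbitrary $S \subseteq X$ and applying the hypothesis to $S \cap X_2$ then gives
\begin{equation*}
\sum_{x \in S}\! p(x) - e^\eps\!\! \sum_{x \in S}\! p(x+r) \leq\!\! \sum_{x \in S \cap X_2}\!\! \bigl(p(x) - e^\eps p(x+s)\bigr) \leq \delta.
\end{equation*}
Finally, \cref{lemma:discretization-reduction} upgrades the discrete inequality to $(\eps, \delta)$-ADP for any continuous output, and sensitivity-$s$ guarantees that no neighbouring $D_0, D_1$ induces a shift outside $[-s, s]$.

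The main obstacle I expect is the geometric sandwich $|x+r| \leq \max(|x|, |x+s|)$ in the mixed-sign regime $x \leq 0 \leq x + s$, where the sub-case analysis on the sign of $x+r$ must be carried out carefully and alignment with the discretization grid (so that $x+r, x+s \in X$) has to be respected. Everything else is routine bookkeeping: monotonicity does the pointwise work, symmetry halves the shift range, and the discretization-to-continuous reduction closes the argument.
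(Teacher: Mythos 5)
Your proof is correct, but it takes a genuinely different route from the paper's. You isolate the pointwise geometric fact $p(x+r) \geq \min\bigl(p(x),\, p(x+s)\bigr)$ for $r \in [0,s]$ (which is really just convexity of $|\cdot|$ on the interval $[x,\, x+s]$ combined with the monotone-in-$|\cdot|$ property of $p$), and then split by the sign of $p(x) - p(x+r)$: on $X_1$ each term is already $\leq 0$, and on $X_2$ the $\min$ is forced onto $p(x+s)$ so the hypothesis at shift $s$ can be imported directly. The paper instead fixes the shift $s$ from the start, carves $S$ into the subset $S'$ where the pointwise inequality $p(x) \leq e^\eps p(x+s)$ fails and its complement, and then runs two separate arguments: a budget-transfer argument on $S'$ (showing $p(x+s) \leq p(x+s')$ there) and a contrapositive pointwise argument with an explicit propositional case analysis on $S\setminus S'$. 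The two partitions are different objects (yours depends on $r$ and carries no $e^\eps$; the paper's depends on $s$ and does), and your single-lemma reduction is noticeably shorter and more transparent — the convexity observation packages the case analysis on signs of $x,\,x+s,\,x+r$ into one clean inequality, whereas the paper has to rerun essentially that same case analysis inside a contrapositive wrapper. Both versions then close via \cref{lemma:discretization-reduction} in the same way, and you are right that symmetry of $p$ disposes of negative shifts.
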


\subsection{Optimal Noise by Gradient Descent}\label{sec:approach:gradient-descent}
Gradient descent locally minimizes a differentiable function by stepping in the opposite direction of its gradient.
We search for optimal noise $p$ by minimizing the utility-loss $\utilityloss$, i.e., minimizing the dispersion of additive noise while simultaneously punishing the algorithm for privacy leakage. We fix the maximal privacy-loss we allow without contributing to $\delta$, represented by the parameter $\eps$, and then minimize the incurring utility-loss together with a numerical upper bound $l^X$ on the $\delta$ originating from the generated noise $p$. We provide three different methods for $l^X$ later in this section, namely $l^{MA}$, $l^{ADP}$, and $l^{PDP}$. 
Specifically, we apply gradient descent with the following loss function
\begin{equation}\label{sec:approach:lossfunction}
        \LossWeMinimize_{\mathcal D,M_q}^{\utilityweight_t,\eps,n}(p) = l^X(I_0^p, I_1^p;\eps, n) + \utilityweight_t\!\cdot\utilityloss(p; u, M_p, \mathcal D) 
\end{equation}
where the utility-weight $\utilityweight_t$ is a weighting coefficient which might depend on the training epoch $t$, and $n$ denotes the number of mechanism invocations we consider. The distributions $I_0^p$ and $I_1^p$ are worst-case distributions of $M_q$, resulting in maximal privacy leakage after adding noise. For many applications, such worst-case distributions are known as, e.g., for counting queries where $I_0^p=p$ and $I_1^p=p_{\text{shifted-by-1}}$.  
\Cref{thm:shift-invariance} has shown that for symmetric and monotonic noises $p$, such $I$ exist.

To achieve as much generality as possible, we do not choose a well-known probability distribution and optimize its hyper-parameters. Instead, we use trainable noise-models that provide a discrete probability density function $p$ at predefined discretization steps $X$ with sufficiently high granularity.
Including all assumptions, we obtain for $\lambda\in\{1,2\}$ referring to $L_1$ or $L_2$ utility-loss
\begin{equation}\label{eq:approach:minimizing-equation}
    \LossWeMinimize_{\mathcal D,M_q,\lambda}^{\utilityweight_t,\eps,n}(p) = l^X(I_0^p, I_1^p;\eps, n) +  \utilityweight_t\cdot \left(\sum_{x \in X} |x|^\lambda p(x) \right)^{\!\frac{1}{\lambda}}\!\!
\end{equation}
\new{In accordance with \cref{eq:approach:minimizing-equation}, we linearly interpolate the cost-function $u$ between discretization steps $x$ to reduce computational complexity.} 

\myparagraph{Utility-weight decay}
Fixing the utility-weight $\utilityweight_t$ often resulted in a large utility-loss, dominating the corresponding privacy-loss $\delta$ enough to let the optimizer prefer a centrally collapsed noise distribution with all mass in the center. Such a distribution (illustrated in \cref{fig:utility_domination}) has almost no utility-loss but a $\delta=1$ which goes against the spirit of differential privacy. We have found to achieve more reliable results when we let the utility weight decay exponentially from a starting value $\utilityweight_{start}$ with the rate $\utilityweightdecayrate$, especially for a higher number of compositions. A lower bound $\utilityweight_{min}$ limits the decay.
    \begin{equation}\label{eq:utility_weight_decay}
            \utilityweight_\currentEpoch = \max\left(\frac{\utilityweight_{start}}{2^{\currentEpoch/\utilityweightdecayrate}}, \utilityweight_{min}\right)
    \end{equation}

    


\myparagraph{Convergence} 
For from the center monotonically falling noise, we argue that the loss in \cref{eq:approach:minimizing-equation} has a convex minimum. The monotonic noise can be characterized solely by the steepness of the noise function gradients. On the one hand, the applied utility-loss is convex in that steepness. On the other hand, we argue that a tight $\delta$-bound is also convex in said steepness. There are two extreme cases: first, all probability mass is concentrated at the center ($\delta^{PB}\! =\! 1$), and second, a horizontal line induces maximal distinguishing events at the outermost regions ($\delta^{PB}\! =\! \bucketSymbol_\infty$). Starting from the latter, an increasing steepness first monotonically decreases $\delta^{PB}$ because distinguishing events shrink and the privacy-loss approaches $\eps$ (monotonically reducing $\delta$), before the noise is too steep, diverging from the minimum again towards the other extreme case. While $l^X$ are upper bounds for such a tight $\delta$, at least Meiser et al. have shown that the infinite limit tightens their bound.
Finally, the sum of two convex functions is convex again.


\subsection{Noise Model}\label{sec:approach:noise-model}
Our examinations have shown that the noise generating model requires strong dependence between neighboring discretization-steps $x_i$ on the x-axis. Therefore, we decided to generate the first, monotonically increasing half of the noise $p$ from a model of $K$ stacked Sigmoid functions $\sigma(x) = \left(1+e^{-x}\right)^{-1}$. Their output is then normalised by a $\SoftMax(r_i;\overrightarrow{r}) = e^{r_i} / \sum_j e^{r_j}$ step before being mirrored and concatenated to obtain symmetric noise. We evaluate this model on $2N$ discretised and equidistant steps $x_i$ on the x-axis, $x_i\in [-\rangebegin, \rangebegin], i\in \{1,\ldots, 2N\}$, $x_i = i\cdot\frac{r}{N}-\rangebegin$. Due to mirroring, we only need to consider $x_i <0$. Our model $p$ is composed as follows: for $i\in\{1,\ldots N\}$, 
    \begin{equation}\label{eq:model}
        \begin{aligned}
            r_i &= \ln \left[\bias^2 + \sum_{j=0}^{K} \scales_j^2\cdot \Sigmoid\left(\slopes (x_i-\sigmoidc_j)\right)\right]\\
            p_i &= \frac{1}{2}\SoftMax\left(r_i; \{r_0, \ldots, r_N\}\right)\\
            p_j &= p_{2N-j+1}\quad\text{for } j\in\{N\!+\!1,\ldots, 2N\}
        \end{aligned}
    \end{equation}
The parameters $\bias, \scales_j,$ and $\sigmoidc_j\in\mathbb{R}$ are learned by gradient descent while the slope $\slopes\in\mathbb{R}$ is a fixed hyper-parameter, usually set to $500$ to allow sudden jumps of the noise $p$. We have initialised $\bias$ with $10$, the $\scales_j$ are drawn uniformly from $[0,1]$, and all $\sigmoidc_j$ are initially equidistantly distributed in $[-\rangebegin, 0]$ where $\rangebegin$ is the half-width of the noise, more distant from the center we truncate. By squaring $\bias$ and $\scales_j$, we enforce a monotonic increasing output. By applying $\SoftMax$, we obtain $\forall i,\,\,p_i >0$  and $\sum_i p_i = 1$. 

This model cannot support perfectly vertical jumps, which poses an important limitation. However, we avoid these effects by choosing a high slope $\slopes$ such that the output can sufficiently change between neighboring discretization-steps $x_i$. 

\subsection{Privacy Bounds}\label{sec:aproach:privacybounds}
We now introduce three different numerical upper differential privacy bounds for $l^\delta(I_0^p, I_1^p;\eps, n)$, namely $l^{MA}$ through our Extended Moments Accountant, and $l^{ADP}$ and $l^{PDP}$ via PrivacyBuckets.


\subsubsection{Extended Moments Accountant}\label{sec:approach:RDP}

Frequently used when applying differential privacy to Deep Learning, Moments Accountant (MA) (see \cref{lemma:moments-accountant}) and its ADP bound is applied. However, the formulation by Abadi et al.~\cite{abadi2016deeplearning} does not incorporate the existence of distinguishing events. Their theorem assumes implicitly infinitely wide noise that nowhere has zero occurrence probability. To remedy this shortcoming and to allow the analysis of truncated noise, we introduce our second major theoretical contribution, the Extend Moments Accountant, that incorporates the existence of such distinguishing events. Subsequently, we derive our numerical ADP-bound $l^{MA}$.

The extension of MA requires a more complex reduction to worst-case output distributions of $M$, which simultaneously fulfill the maximization of the distinguishing events $\bucketSymbol_\infty$ and the maximization of the corresponding moments $\alpha(\lambda)$. However, such worst-case output distributions $M(D_0)$ and $M(D_1)$ usually exist and are applied often to simplify the privacy analysis. The use of such worst-case distributions renders the theorem itself adaptive (the mechanism is allowed to use previous outputs) as no other inputs can achieve a higher privacy leakage by definition. Abadi et al. themselves reduce the privacy analysis of their differentially private stochastic gradient descent algorithm (DP-SGD) by comparing a Gaussian to a Gaussian-mixture distribution. For brevity, we show a simplified version of our extended MA theorem here and refer to \cref{sec:full-extended-MA} for the fully defined version and a discussion of its formulations.

\newcommand{\modifiedminus}{\text{-}}
\begin{theorem}[Extended MA {[informal]}]\label{thm:extended_ma:informal} 
For a mechanism M and $\forall \lambda > 0$, let $k \in \{0, 1\}$ (the $l$ might change depending on $\lambda$), s.~t. $\alpha_{M}^{D_k,D_{1\modifiedminus k}}(\lambda)$ and $\bucketSymbol_{\infty,M}^{D_k,D_{1\modifiedminus k}}$ dominate all $\alpha_{M}^{D,D'}(\lambda)$ and $\bucketSymbol_{\infty,M}^{D,D'}$ for all neighboring $D,D'\in\mathcal D$. Then, 
\begin{enumerate}
    \item \textbf{[Composability]} Suppose that a mechanism M consists of a sequence of adaptive mechanisms $M_1,\ldots,M_n$ where $M_i=\prod_{j=1}^{i-j} R_j\times \mathcal D\rightarrow R_i$. Then, for any $\lambda > 0$ 
    \begin{align*}
        \alpha_M^{D_k,D_{1\modifiedminus k}}(\lambda) &\leq \sum_i^n \alpha_{M_i}^{D_k,D_{1\modifiedminus k}}(\lambda)\\
        \bucketSymbol_{\infty,M}^{D_k,D_{1\modifiedminus k}} &= 1 - \prod_i^n \left[1-\bucketSymbol_{\infty,M_i}^{D_k,D_{1\modifiedminus k}}\right]
    \end{align*}
    \item \textbf{[Tail Bound]} For any $\eps>0$, $M$ is $(\eps, \delta)$-differentially private for $\delta = \max(\delta_{D_0, D_1}, \delta_{D_1, D_0})$ with
    \begin{equation*}\hspace{-1.8em}
        \delta_{D_k, D_{1\modifiedminus k}} \!=\!\bucketSymbol_{\infty,M}^{D_k,D_{1\modifiedminus k}}\! + \min_\lambda \left(1\!-\!\bucketSymbol_{\infty,M}^{D_k,D_{1\modifiedminus k}}\right)\cdot e^{ \left(\alpha_M^{D_k,D_{1\modifiedminus k}}(\lambda) - \lambda\cdot\eps\right)}
    \end{equation*}
\end{enumerate}
\end{theorem}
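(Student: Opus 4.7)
My plan is to treat the distinguishing events and the remainder of the output space separately, so that the standard Moments Accountant argument applies to the non-distinguishing portion while the distinguishing events contribute a residual probability mass directly to $\delta$. The worst-case hypothesis, namely that one index $k$ simultaneously dominates both $\bucketSymbol_\infty$ and $\alpha(\lambda)$, ensures this split corresponds to a single pair of distributions, avoiding the unbalanced-renormalization pitfall that the introduction warns against. I would define $\alpha_M^{D_k,D_{1-k}}(\lambda)$ as a moment-generating integral over only the non-distinguishing region of $M(D_k)$, using the unnormalized densities of $M(D_k)$ and $M(D_{1-k})$, so that the residual $\bucketSymbol_\infty$ mass is accounted for separately rather than through renormalization.

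For the composability of $\bucketSymbol_\infty$, I would observe that for independent component mechanisms $M_1, \ldots, M_n$, a composed output is distinguishing iff at least one component output is. By independence of the coin flips across components, the complementary (no-distinguishing-event) probability factorises as $\prod_{i=1}^n (1 - \bucketSymbol_{\infty, M_i}^{D_k, D_{1-k}})$, which rearranges to the stated product form. For the composability of $\alpha$, I would condition on the intersection event that no component yielded a distinguishing output. On this subevent the privacy-loss random variable of the composition is the ordinary sum of the components' privacy losses (from the product of densities), so its moment generating function factorises into a product of the component MGFs; taking logarithms yields the additive bound. The modified-$\alpha$ definition on unnormalized densities ensures that the $(1-\bucketSymbol_\infty)$ renormalisation factor does not distort the per-component contributions.

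For the tail bound, I would partition any measurable target set $S$ into $S \cap E_\infty$ and $S \setminus E_\infty$, where $E_\infty$ denotes the distinguishing-event support. The first piece contributes at most $\bucketSymbol_\infty^{D_k, D_{1-k}}$ to $\Pr[M(D_k) \in S]$ by definition. On $S \setminus E_\infty$ both densities are positive, so a Chernoff-style bound derived from Markov's inequality applied to $e^{\lambda L}$, where $L$ is the finite privacy loss on the non-distinguishing support, yields the term $(1 - \bucketSymbol_\infty^{D_k, D_{1-k}}) e^{\alpha_M^{D_k, D_{1-k}}(\lambda) - \lambda \eps}$, and minimising over $\lambda$ yields the stated form. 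Taking the maximum over the two directions of comparison $(D_0, D_1)$ and $(D_1, D_0)$ then yields the symmetric ADP guarantee.

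The main obstacle I anticipate is handling renormalisation correctly. Conditioning on the non-distinguishing subevent scales $M(D_k)$ by $1/(1-\bucketSymbol_\infty)$ but leaves $M(D_{1-k})$ effectively unchanged, since its mass already lies entirely in the non-distinguishing support. Consequently the conditional privacy loss differs from the unconditional one by an additive $\log(1-\bucketSymbol_\infty)$ term. If $\alpha$ were defined with respect to the conditional distributions, this shift would leak into the composition formula and break additivity. Defining $\alpha$ on the unnormalised restriction of $M(D_k)$ to the common support absorbs this factor cleanly, and coupling it with the worst-case assumption that a single $k$ dominates both $\bucketSymbol_\infty$ and $\alpha(\lambda)$ is exactly what makes $\bucketSymbol_\infty$ behave as a clean multiplicative pre-factor on the MA tail bound rather than coupling adversarially with the MGF across compositions.
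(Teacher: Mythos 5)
Your proposal follows essentially the same route as the paper's proof: split each component's output space into distinguishing and finite-loss parts, obtain $\bucketSymbol_\infty$ under composition as one minus the product of the complements via independence, reuse the Abadi-style additivity of the log-moment-generating function on the finite-loss part, and finish with Markov's inequality plus the standard ADP set-splitting reduction, maximizing over the two directions $(D_0,D_1)$ and $(D_1,D_0)$. One bookkeeping point to fix: the paper's $\alpha$ is the \emph{conditional} (renormalized by $1-\bucketSymbol_\infty$) log-MGF with losses computed from unnormalized densities, so the prefactor $\left(1-\bucketSymbol_\infty\right)$ in the tail bound arises from deconditioning, whereas if you define $\alpha$ on the unnormalized restriction (as in the paper's numerical variant $\Gamma_\lambda$) Markov already yields $e^{\alpha-\lambda\eps}$ without that prefactor, and pairing your unnormalized $\alpha$ with the stated $\left(1-\bucketSymbol_\infty\right)$ factor would double-count it.
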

While this theorem formally proves ADP-guarantees, the tail estimation of the privacy leakage is obtained by the Markov inequality, actually producing a PDP-guarantee (which itself implies ADP) identically to the original theorem by Abadi et al.~\cite{abadi2016deeplearning}. For certain optimization scenarios shown later, the noise obtained will resemble a result from the PrivacyBuckets PDP-bound more than the PrivacyBuckets ADP-bound. 

From his theorem, we derive our fully differentiable and gradient descent suitable upper bound $l^{MA}$ for two worst-case mechanism output distributions $M(D_0)$ and $M(D_1)$. 

\newcommand{\Alphadiv}[3]{\Gamma_{\!\!#1}\left(#2||#3\right)}
\newcommand{\mypr}[1]{\pr{o\leftarrow {#1}}}

\begin{proposition}[Numerical Extended MA]\label{lemma:delta-from-rdp}\hfill\\
    Let $M(D_0)$ and $M(D_1)$ be worst-case mechanism output distributions. With the number of compositions $n\geq 1$, $\eps > 0$, and 
    \begin{align*}
        \bucketSymbol_\infty^{A,B} &= \pr{\loss{A}{B}{o} = \infty}\\
        \Alphadiv{\lambda}{A}{B} &= \log\hspace{-2.5em} \sum_{\{o\, | \mypr{A}, \mypr{B} \neq 0\}}\hspace{-2.5em} \mypr{A} \left(\frac{\mypr{A}}{\mypr{B}}\right)^\lambda\\
        \delta_{A,B}^{MA}(\eps) &= \min_\lambda 1 - \left[1- \bucketSymbol_{\infty}^{A,B}\right]^n +e^{ n\cdot\Alphadiv{\lambda}{A}{B} - \lambda\cdot\eps}, 
    \end{align*}
    the mechanism $M$ is $(\eps, \delta)$-ADP with 
    \begin{align*}
        \delta = \max(\delta_{M(D_0), M(D_1)}^{MA}, \delta_{M(D_1), M(D_0)}^{MA})
    \end{align*}
\end{proposition}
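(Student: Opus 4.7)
The plan is to obtain this numerical statement as a direct corollary of the informal Extended Moments Accountant (\cref{thm:extended_ma:informal}), specialized to the i.i.d.\ $n$-fold composition of $M$ with the fixed worst-case output pair $M(D_0),M(D_1)$, followed by one elementary simplification.

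First I would identify $\Gamma_\lambda(A\|B)$ with the single-invocation ``truncated'' log moment generating function used in the extended MA. Rewriting the summation in the proposition as
\begin{equation*}
\Gamma_\lambda(A\|B) = \log \sum_{o\,:\,\loss{A}{B}{o}\text{ finite}} \pr{o\leftarrow A}\cdot e^{\lambda\,\loss{A}{B}{o}},
\end{equation*}
we see this is precisely the natural extension of the classical $\alpha_M^{D_0,D_1}(\lambda)$ of \cref{lemma:moments-accountant} to mechanisms with distinguishing events: the moment generating function is summed as an \emph{unnormalized} sum over finite-loss outputs, so the distinguishing events (where $\loss{A}{B}{o}=\infty$) are cleanly separated out and tracked by $\bucketSymbol_\infty^{A,B}$ instead of blowing up the exponential.

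Having made this identification, I would invoke the two conclusions of \cref{thm:extended_ma:informal} on the $n$-fold i.i.d.\ product of $M$. Composability yields
\begin{equation*}
\alpha_{M^n}^{D_0,D_1}(\lambda)\;\leq\; n\cdot\Gamma_\lambda\!\left(M(D_0)\,\big\|\,M(D_1)\right),
\end{equation*}
while the distinguishing-event mass accumulates according to the product formula of the theorem,
\begin{equation*}
\bucketSymbol_{\infty,M^n}^{D_0,D_1}\;=\;1-\left(1-\bucketSymbol_\infty^{M(D_0),M(D_1)}\right)^{n},
\end{equation*}
since the compositions are independent and a distinguishing event occurs in the product iff at least one component round produces one. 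Plugging both into the tail bound from the theorem gives, for any $\lambda>0$,
\begin{equation*}
\delta \;\leq\; \bucketSymbol_{\infty,M^n}^{D_0,D_1} + \left(1-\bucketSymbol_{\infty,M^n}^{D_0,D_1}\right)e^{\,n\Gamma_\lambda(M(D_0)\|M(D_1))-\lambda\eps}.
\end{equation*}
Replacing the prefactor $(1-\bucketSymbol_{\infty,M^n}^{D_0,D_1})$ by $1$ and minimizing over $\lambda$ reproduces the stated $\delta_{M(D_0),M(D_1)}^{MA}(\eps)$. Because ADP is directional, I would then take the maximum over the two orderings $(D_0,D_1)$ and $(D_1,D_0)$ to obtain the symmetric two-sided bound in the proposition.

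The main obstacle I anticipate is not in invoking composability or the tail bound — both drop straight out of \cref{thm:extended_ma:informal} — but in the careful bookkeeping around the identification of $\Gamma_\lambda$ with the restricted $\alpha$. One has to insist that $\Gamma_\lambda$ is an \emph{unnormalized} sum over finite-loss outputs rather than a conditional expectation given ``no distinguishing event''; otherwise, under $n$-fold composition an additional factor of $(1-\bucketSymbol_\infty)^n$ would reappear multiplying the exponential, and it is precisely this convention that lets the remaining $(1-\bucketSymbol_{\infty,M^n}^{D_0,D_1})$ factor be trivially bounded by $1$ in the last step. Getting this consistency right between the informal theorem and its numerical specialization is the subtle point on which a clean proof rests.
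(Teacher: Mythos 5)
Your overall strategy---specializing the Extended MA theorem to the i.i.d.\ $n$-fold composition of the fixed worst-case pair and then simplifying---is the same as the paper's, but your normalization bookkeeping is backwards, and this breaks the two middle steps as written. In the paper's Extended MA (\cref{thm:extended_ma:informal}, formally \cref{thm:apprixmate-rdp-composition} with \cref{def:worst-case-MA}), the per-invocation moment is the \emph{conditional} (normalized) quantity $\alpha_{M_i}(\lambda)=\log \expectation{o\sim M_i(D_0)}\left[e^{\lambda c(o)}\,\middle|\, c(o)\neq\infty,\minfty\right]=\log\tfrac{1}{1-\bucketSymbol_\infty}+\Gamma_\lambda(A\|B)$, i.e.\ it is \emph{not} the unnormalized sum $\Gamma_\lambda(A\|B)$ that you insist on (only the privacy loss $c(o)$ itself is computed from unnormalized distributions). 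Consequently your composability step $\alpha_{M^n}(\lambda)\le n\,\Gamma_\lambda(A\|B)$ does not follow from the theorem: with the theorem's definitions one obtains $\alpha_{M^n}(\lambda)\le n\,\Gamma_\lambda(A\|B)+n\log\tfrac{1}{1-\bucketSymbol_\infty}$, which is strictly larger whenever $\bucketSymbol_\infty>0$. Likewise your intermediate tail bound $\delta\le \bucketSymbol_{\infty,M^n}+\bigl(1-\bucketSymbol_{\infty,M^n}\bigr)e^{\,n\Gamma_\lambda-\lambda\eps}$ is not what the theorem (or Markov's inequality) yields: Markov applied to the finite-loss part only gives $\Pr\!\left[c\ge\eps \text{ and } c \text{ finite}\right]\le e^{\,n\Gamma_\lambda-\lambda\eps}$, so your claimed inequality is a factor $\bigl(1-\bucketSymbol_{\infty,M^n}\bigr)$ tighter than anything established; you recover the correct final formula only because you subsequently discard that factor, but the chain of inequalities leading there is unsound.

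The repair is exactly the paper's calculation, and it needs no bounding-by-one step: plug $\alpha_{M_i}(\lambda)=\log\tfrac{1}{1-\bucketSymbol_\infty}+\Gamma_\lambda(A\|B)$ into composability and the tail bound; the prefactor $1-\bucketSymbol_{\infty,M}=(1-\bucketSymbol_\infty)^n$ cancels \emph{exactly} against $e^{\,n\log\frac{1}{1-\bucketSymbol_\infty}}$, leaving $\delta_{A,B}^{MA}(\eps)=1-\bigl(1-\bucketSymbol_\infty^{A,B}\bigr)^n+\min_\lambda e^{\,n\Gamma_\lambda(A\|B)-\lambda\eps}$. Your worry that the conditional-expectation convention would leave a stray $(1-\bucketSymbol_\infty)^n$ multiplying the exponential is therefore misplaced---that normalization is precisely what produces the clean cancellation. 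The remaining ingredients of your proposal (the accumulation $\bucketSymbol_{\infty,M^n}=1-(1-\bucketSymbol_\infty)^n$ for independent invocations, and taking the maximum over the two orderings of $M(D_0),M(D_1)$ at the end) do match the paper.
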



\myparagraph{Implementation of $l^{MA}$} We compiled $l^{MA}$ to be equal to $\delta$ in \cref{lemma:delta-from-rdp}, optimizing $\lambda$ and the noise $p$ simultaneously. We fix the truncation range of the learned noise $p$ to avoid problems with non-differentiable discrete inclusion-indices of distinguishing events $\bucketSymbol_\infty$. To ensure numerical stability, we executed all computations in log-space\footnote{We used PyTorch's built-in function \textit{logsumexp} (\url{https://pytorch.org/docs/stable/generated/torch.logsumexp.html})}. Additionally, we set $\lambda = \lambda_{sq}^2 + 10^{-4}$ and trained $\lambda_{sq}$ to enforce positivity.

\subsubsection{PrivacyBuckets ADP-bound}\label{sec:approach:adp-bound}

PrivacyBuckets~\cite{meisermohammadi2018PB} is another method to provide upper bounds for privacy leakage.
It utilizes that the privacy-loss random variable of two independent invocations of a mechanism is equal to a \textit{convolution} of the privacy-loss random variables produced by single invocations. 
\new{Based on this observation, PrivacyBuckets distributes probability mass in a finite amount of equidistant and discrete buckets spanning a specific privacy-loss range and convolves them, in contrast to evaluating each combination of privacy-losses individually which grows exponentially with the number of compositions. The estimation accuracy depends on the number of buckets and their width described by the discretization factor $f > 1$ (while $f-1$ stays small), capturing the maximal difference between two privacy losses that obtain the same bucket. This method is used in practice in Google's Differential Privacy Library.\!\footnote{\url{https://github.com/google/differential-privacy/tree/main/python/dp_accounting}} Due to computational complexity, we did not implement error-correction~\cite{meisermohammadi2018PB}.
}

Formally, given $D,D'$ and utilizing the discrete output distribution $\{p^{D}_i\}_i$ originating from $M(D;p)$, we combine the probability mass of similar privacy-loss occurrences in discrete buckets referenced by their index $j\in \mathbb Z$
\begin{equation*}
    \bucketd[1]{j} = \sum_{f^{j\text{-}1}<e^\losstwo{M(D;p)}{M(D';p)}\leq f^j} p^{D}_i
\end{equation*}
for a suitable discretization-factor \new{$f$}. 
The subscript denotes the number of mechanism invocations, only one in this case. The events with privacy-loss $\Losssymbol(o) = \infty$ are collected separately in their own bucket $\bucketd[1]{\infty}$, \new{a corner case with special treatment}. 
For discretization purposes, the number of buckets are limited to $2h$ such that both ends span a series of indices from $-h$ to $h$. We denote these confined buckets by $\bucketdl{j}$, $j\in\{-h,\ldots,h\}$. We define the bucket $\bucketdl[1]{-h}=\sum_{e^\losstwo{M(D;p)}{M(D';p)}\leq f^{-h}} p^{D}_i$ to contain all masses with privacy loss smaller or equal than $\log(f^{-h})$ and incorporate the resulting events larger than $\log f^h$ in $\bucketdl[1]{\infty}=\bucketd{\infty} + \sum_{f^{h} < e^\losstwo{M(D;p)}{M(D';p)}} p^{D}_i$. \new{All other buckets with $i\! \in\! [\text{-}h\!+\!1, h]_\mathbb N$ stay same: $\bucketdl[1]{i} = \bucketd[1]{i}$.}
Meiser et al.~\cite{meisermohammadi2018PB} have shown that independent sequential composition can be expressed as
\begin{align*}
    &\bucketdl[2]{j} = \sum_{\hspace{-3em}\mathrlap{i\in\{|j|-h,h\}}}
    \bucketdl[1]{i} \cdot \bucketdl[1]{j\!-\!i}  \quad \text{for}\quad j \in \{-h+1,\ldots,h\}\\
    &\bucketdl[2]{-h} = \sum_{\hspace{-3em}\mathrlap{j\in\{-2h,-h\}}}\hspace{2.6em}\sum_{\hspace{-1em}\mathrlap{i\in\{-h,h+j\}}} \bucketdl[1]{i} \cdot \bucketdl[1]{j\!-\!i} \quad\text{and}\\
    &\bucketdl[2]{\infty} = 2\cdot\bucketdl[1]{\infty}\!-\!\left(\bucketdl[1]{\infty}\right)^2\!+\!  \sum_{\hspace{-0.5em}j=h+1}^{2h}\sum_{\hspace{-1em}\mathrlap{i=j-h}}^h \bucketdl[1]{i} \cdot \bucketdl[1]{j\!-\!i}
\end{align*}
This composition is equivalent to a convolution of the buckets~\cite{sommer2019privacy} and it can be repeated as long as numerically feasible, even with other $n$ than exponents of two~\cite{meisermohammadi2018PB}. We obtain the upper bound after $n$ compositions by evaluating
\begin{equation*}
    \delta_{M(D_0),M(D_1)}^{ADP} = \bucketdl[n]{\infty} + \!\sum_{\mathclap{j\geq \frac{\eps}{\log f}}} \left(1 - e^{j\cdot \log (f) - \eps}\right)\cdot \bucketdl[n]{j}
\end{equation*}
with the final $\delta=\max\!\left(\!\delta_{M(D_0),M(D_1)}^{ADP}, \delta_{M(D_1),M(D_0)}^{ADP}\!\right)$. 

\myparagraph{Implementation}
We implemented $l^{ADP}$ according to the $\delta$ introduced above, computing values in log-space where possible. For non-symmetric problems, we computed $l^{ADP}_{A,B}$ and $l^{ADP}_{B,A}$ separately and chose the noise-producing the smaller $\delta^{PB}$. The distinguishing events are included in $\bucketdl[n]{\infty}$. For gradient-descent algorithms, the gradients need to be derivable, which is problematic for \textit{equal} and \textit{smaller} relations or for the \textit{ceiling} function when computing indices. As a remedy, we replaced the derivative of such Boolean functions with a sharp function while relying on the built-in function for the forward-pass. Only for $\smaller$, we replaced the forward-pass function with a Sigmoid and used its derivative for the backward-pass. 
\begin{align*}
      \smaller(x,y) :=& \Sigmoid(-5\cdot(x-y+0.5))\\
      \frac{\partial}{\partial x} \equal(x,y) :=& \frac{1}{1+\frac{(x-y)^2}{0.01}}\\
      \frac{\partial}{\partial x} \ceil(x) :=& 1
\end{align*}



\subsubsection{PrivacyBuckets PDP-bound}\label{sec:approach:pdp-bound}

For this bound, we used the same composition technique as for the previously introduced PrivacyBuckets ADP-bound. In contrast to $\delta^{ADP}$\!\!, events with a privacy-loss exceeding $\eps$ are not weighted correspondingly but fully included%
~\cite{sommer2019privacy}. 
\begin{equation*}
    \delta_{M(D_0),M(D_1)}^{PDP} = \bucketdl[n]{\infty} + \sum_{\mathclap{j\geq \frac{\eps}{\log f}}} \bucketdl[n]{j}
\end{equation*}
The numerically differentiable bound $l^{PDP}$ is obtained analogically to $l^{ADP}$\!\! with the same constraints applying.


\section{Evaluation}\label{sec:results}

\begin{table}[t]
    \setlength{\extrarowheight}{0.5mm}
    \ifpopets
        \tiny
    \else
        \footnotesize
    \fi
    \centering
    \caption{Nomenclature}
    \label{tab:nomenclature}
    \begin{tabular}{p{16mm}|l}
       $p$ & \textmd{generated noise distribution}\\
       $n$ &\textmd{number of mechanism invocations (compositions)}\\
       $\utilityloss_{L_1}$, $\utilityloss_{L_2}$  & \textmd{utility-loss with $L_1$ or $L_2$ norm}\\
       $l^{MA}$  &  \textmd{optimizer loss for Extended MA}\\
       $l^{ADP}$  &  \textmd{optimizer loss for PrivacyBuckets ADP}\\
       $l^{PDP}$  &  \textmd{optimizer loss for PrivacyBuckets PDP}\\
       $\delta,\eps$  & \textmd{ADP or PDP guarantees}\\
       $\delta^{PB}$ & \textmd{numerical exact upper-bound\textsuperscript {\ref{footnote:privacybuckets}}}\\
       $\delta^{PB}_n$ & \textmd{$\delta^{PB}$ evaluated for n compositions}\ifpopets \bigstrut \fi \\
       $\delta^{PB}_{A/B},\, \delta^{PB}_{B/A}$ & \textmd{one-sided evaluation of $\delta^{PB}$} \\ 
       $l^X_{A,B},\, l^X_{B,A}$ &\textmd{one-sided evaluation, $X$ in \{MA, ADP, PDP\}}\\
    \end{tabular}
\end{table}

In this section, we present the results of our implementation. \Cref{sec:results:implementation} introduces the methodology and implementation. Then, we reproduce the analytical results from Geng et al.~\cite{geng2017staircase} in \cref{sec:results:reproduction}.  \Cref{sec:results:utility_delta} points out the trade-offs between utility and $\delta^{PB}$ and partly outperforms truncated Gaussian noise. In addition, we generated noise for DP-SGD in \cref{sec:results:dp_sgd} and illustrate truncation effects for finite additive noise events in \cref{sec:evaluation:truncation}. Finally, \cref{sec:results:numerical_stability} presents the numerical stability of our approach. \new{We have made  the source code for our approach available on GitHub\footnote{\url{https://github.com/teuron/optimal_truncated_noise}}.}

\subsection{Evaluation Details}\label{sec:results:implementation}
While we have minimized the noise distributions with different losses $l^X$, any shown value for $(\eps,\delta)$-ADP (or PDP) guarantees are computed using the GitHub PrivacyBuckets implementation\textsuperscript{\ref{footnote:privacybuckets}} with the generated worst-case noise distribution(s) as input. This implementation supports error correction for its upper-bounds $\delta^{PB}$, which outperforms our complexity-reduced losses $l^{ADP}$ and $l^{PDP}$. \new{Please note that such provided $\delta^{PB}$ for the noises generated are accurate and numerical errors are negligible.} For comparative reasons, we produce for $l^{PDP}$ an ADP-$\delta^{PB}$ as well, despite the PDP-formulation used during optimization. We used $250.000$ buckets and an adaptively chosen factor $f$. For worst-case output distributions, we evaluated both combinations $\delta^{PB}_{A/B}$ and $\delta^{PB}_{B/A}$ for the indicated $\eps$ and showed their maximum. 

Contrarily to the broad understanding of ($\eps,\delta$)-differential privacy guarantees as a property of mechanisms, we aim to find the \textit{minimal} $\delta$ for a given $\eps$. Accordingly, we say that two worst-case noise distributions \textit{produce} a (minimal) $\delta$. If not otherwise indicated, we restricted ourselves to optimizations for $\eps=0.3$.

\myparagraph{Implementation Details}
We use \textit{PyTorch} (v1.7.1)~\cite{pytorch} as optimization framework performing gradient descent in double-precision with Adam~\cite{kingma2017adam} while applying an exponential learning-rate and utility-weight decay. 
We use the following hyper-parameters if not otherwise indicated:
The noise model (\cref{sec:approach:noise-model}) contains $K=10000$ Sigmoid functions and the corresponding parameters are initialized uniformly at random.
The model is evaluated at $60000$ equidistant points on the x-axis within a symmetric range $[-r+a, r+a]$ for $500$.
For numerical stability, we bias the range by $a=10^{-5}$ as numerical instabilities can occur at $x$=$0$.
Also, using more discretization steps and more Sigmoid functions did not result in considerably better noise due to numerical instability and internal rounding errors.
We minimize \cref{eq:approach:minimizing-equation} with a learning rate $\learningRate=0.001$ (exponential decay factor $0.99995$) and a utility-weight $\utilityweight=0.5$ (halving period $\utilityweightdecayrate=2.500$ epochs, $\utilityweight_{min}=10^{-7}$). The privacy bound $l^{MA}$ is evaluated in $100.000$ epochs and $l^{ADP}$ and $l^{PDP}$ in a coarseness factor $f=1.000001$ and $15.000$ epochs using $2h$=$1000$ buckets.  

\myparagraph{Hardware Details} We run our experiments on a cluster which consists of 8 Nvidia A100 $40GB$ GPUs, an AMD EPYC 7742 CPU, and 1TB RAM. A single run of $l^{ADP}$ or $l^{PDP}$ takes $1$h $21$min, $l^{MA}$ is faster with $1.02$h. Extending the number of buckets increases the computational complexity exponentially; a single run with $10000$ buckets requires $1.41$ days.
Re-running all experiments once requires approximately $25$ days. 

\subsection{Reproduction of Analytic Optimal Noise}\label{sec:results:reproduction}

    \begin{figure}[t]
	    \centering
	    \begin{subfigure}[t]{0.48\textwidth}
	        \centering
	        \includegraphics[width=1.0\textwidth]{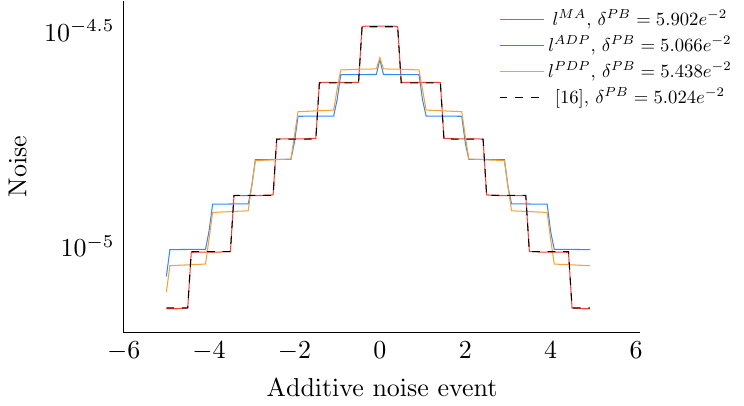}
	        \caption{with monotonicity enforcement.}
	        \label{fig:gengrepoduction:correct}
	    \end{subfigure}
	    \begin{subfigure}[t]{0.48\textwidth}
    	    \includegraphics[width=1.0\textwidth]{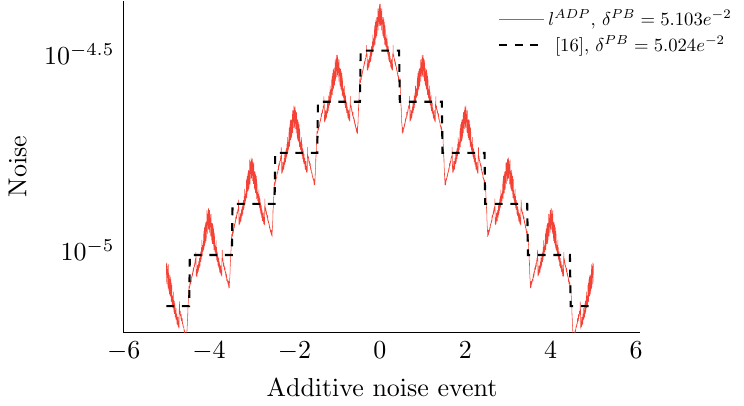}
    	    \caption{without monotonicity enforcement.}
	    	\label{fig:gengrepoduction:churchtowers}
	    \end{subfigure}
	    \caption{
	    Log-noise of numerically approximated staircase mechanism compared to optimal noise (Geng et al.~\cite{geng2017staircase}). (a) $l^{ADP}$\!\! covers the analytic noise with a relative difference in $\delta^{PB}$ of $10^{-2}$. The half-step shift for $l_{MA}$ coincides with $l^{PDP}$\!\! as $l^{MA}$ is conceptionally PDP. (b) without monotonicity enforcement, our method produces "pointy towers" guaranteeing ADP with lower utility-loss only for output differences exactly $0$ or $1$. $\eps=0.3$,  $\learningRate=0.01$.}
	    \label{fig:gengrepoduction}
	\end{figure} 
Geng et al.~\cite{geng2017staircase} proposed to replace the Laplace mechanism (additive Laplace noise to sensitivity bounded mechanisms) with their staircase mechanism. They showed that the staircase mechanism provides the same $\eps$-ADP guarantees as the Laplace mechanisms but minimizes the $L_1$-utility-loss \expectation{p}{|x|} for sensitivity $1$.  

First, we show in \cref{fig:gengrepoduction:correct} that our numerical approach $l^{ADP}$ reproduces the staircase-mechanism for $L_1$ utility. Contrarily to our numerical approach, the analytic staircase noise is stretched infinitely. For comparison, we truncated and re-normalized the proposed analytic noise, resulting in distinguishing events at the outermost stair-step. Consequently, the truncated analytic mechanism is not pure $\eps$-ADP anymore but contains a small $\delta$. The relative difference for $\delta^{PB}$ between the generated and analytic noise is smaller than $10^{-2}$. For completeness, we show the optimization results for our other mechanisms $l^{MA}$ and $l^{PDP}$ as well. As $l^{MA}$ internally uses a PDP-bound (discussed in \cref{sec:approach:RDP}), its generated noise is similar to $l^{PDP}$. With this figure, we want to illustrate that our implementations are sound: $l^{ADP}$ reproduces the analytic noise, and $l^{MA}$ and $l^{PDP}$ produce similar results despite being conceptionally different. We show full $\eps$-$\delta$-graphs and closeness to the analytic noise in \cref{fig:truncation_effects-epsilon-delta}. 


Without the monotonicity enforcement (see \cref{sec:approach:monotonicity}) \new{on a CNN model (see \cref{appendix:cnn_model}), the gradient-descent based tool} does not consider un-noised mechanism outputs $q(D)$ other than $0$ or $1$; the model optimizes for privacy only at the edges of the steps. There, the optimizer shifts probability mass towards the center of the steps, creating thereby shapes resembling \textit{pointy towers}, illustrated in \cref{fig:gengrepoduction:churchtowers}.  Note that the pointy-tower noise still guarantees $(\eps,\delta)$-ADP with lower utility cost ($l^{ADP}$: $\utilityloss_{L_1}=1.879$, Geng et al. $\utilityloss_{L_1}=1.884$), but only for cases where the differences $|q(D) - q(D')|$ of neighboring $D, D'$ results exactly in $0$ or $1$ as, e.g., for counting queries. 

\begin{figure}[t]
    \centering
    \begin{subfigure}[t]{0.48\textwidth}
        \includegraphics[width=1.0\textwidth]{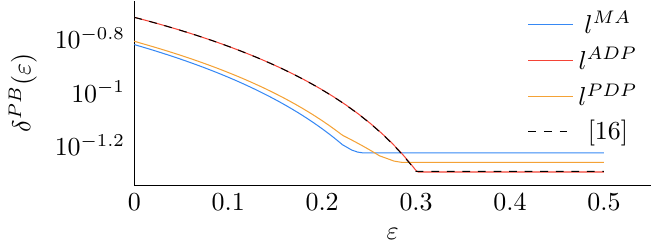}
    \end{subfigure}
    \begin{subfigure}[t]{0.48\textwidth}
	    \includegraphics[width=1.0\textwidth]{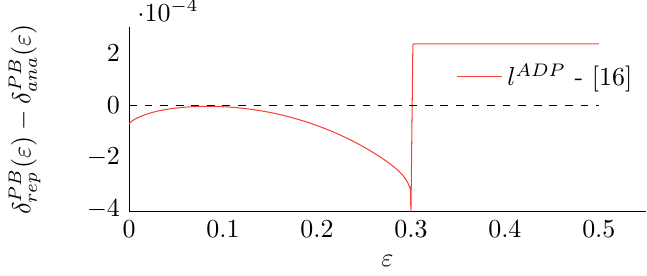}
    \end{subfigure}
    \hfill\phantom{*}\vspace{-1.5em}
    \caption{The $\delta^{PB}\!(\eps)$ graphs for truncated staircase mechanism~\cite{geng2017staircase} (\textit{ana}) and the reproduced noise distribution (\textit{rep}) compared, both generated with $\eps$=$0.3$. Note that this $\eps$ and the privacy-loss $\eps$ in the graphs differ. The  difference between $l^{ADP}$\!\! and \cite{geng2017staircase} is negligible, mostly dominated by distinguished events occurring after $\eps=0.3$}
\label{fig:truncation_effects-epsilon-delta}
\end{figure} 








\subsection{Utility vs. \texorpdfstring{$\delta(\eps)$}{delta(epsilon)} under Composition}\label{sec:results:utility_delta}

    \begin{figure*}[hbt!]
	    \centering
	    \begin{subfigure}[t]{0.6\textwidth}	
	    \phantom{******}
	    \includegraphics[width=0.8\textwidth]{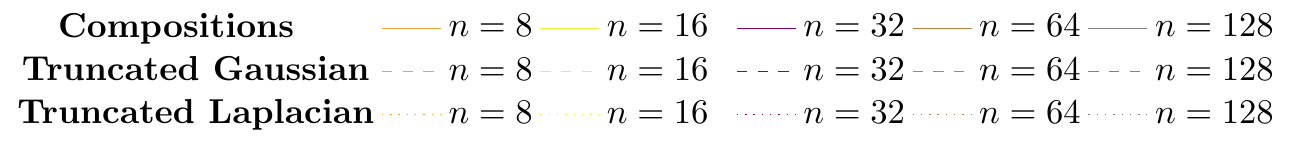}
	    \end{subfigure}\hfill\phantom{*}
	    
	    \begin{subfigure}[t]{0.28\textwidth}
	        \includegraphics[width=1.0\textwidth]{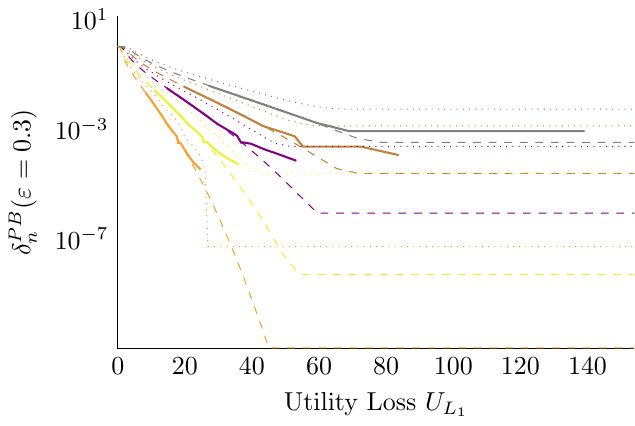}
	        \caption{PrivacyBuckets ADP $l^{ADP}$ with $L_1$ utility-loss}
            \label{fig:optimality_graph:pb_adp_l1_uwd_false}
	    \end{subfigure}
	    \begin{subfigure}[t]{0.28\textwidth}
    	    \includegraphics[width=1.0\textwidth]{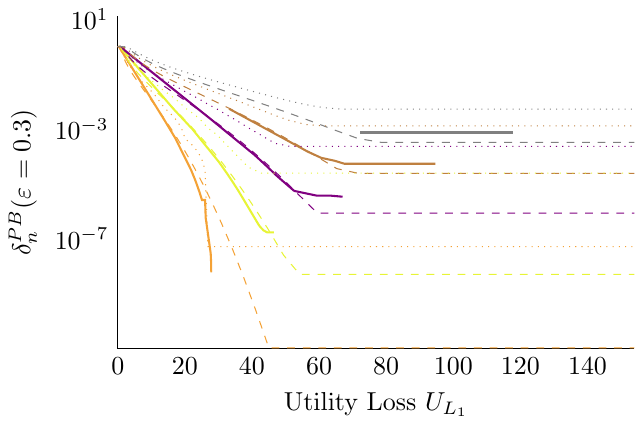}
	        \caption{Extended MA $l^{MA}$ with $L_1$ utility-loss}
            \label{fig:optimality_graph:pb_renyi_markov_l1_uwd_false}
	    \end{subfigure}
	    \begin{subfigure}[t]{0.28\textwidth}
    	    \includegraphics[width=1.0\textwidth]{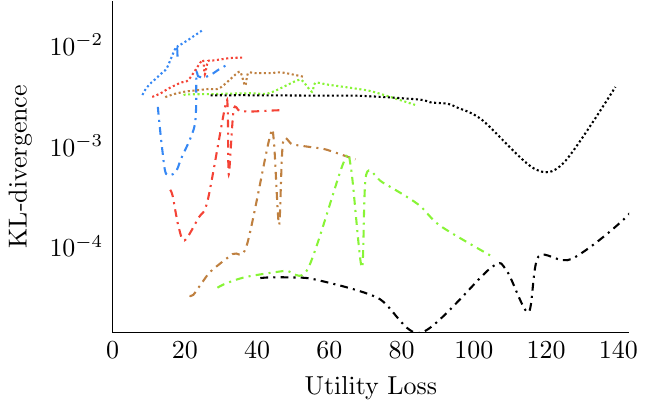}
	        \caption{For $L_2$-loss, $D_{KL}(\text{\setstackgap{S}{1pt}\tiny\stackanchor{truncated}{Gaussian}}||\text{generated})$ decreases with $n$ but stagnates for $L_1$-loss.
	        }
            \label{fig:optimality_graph:difference}
	    \end{subfigure}
	    \begin{subfigure}[t]{0.12\textwidth}\vspace{-9.3em}
        \includegraphics[width=0.92\textwidth]{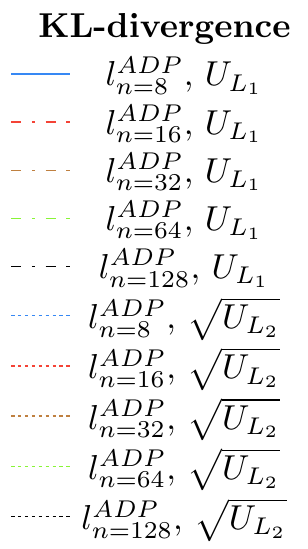}
	    \end{subfigure}
	    \hfill\phantom{*}
	    \begin{subfigure}[t]{0.28\textwidth}
	        \includegraphics[width=1.0\textwidth]{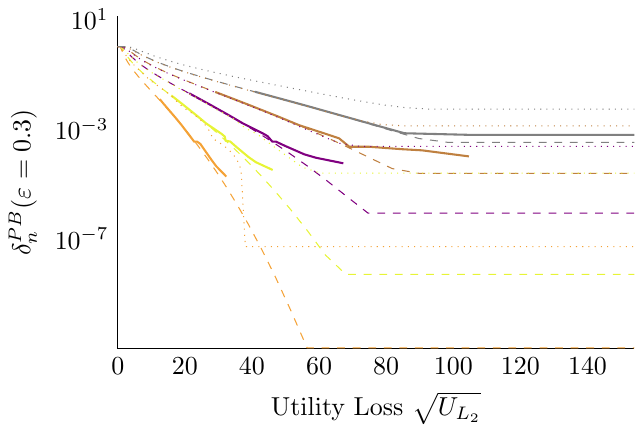}
	        \caption{PrivacyBuckets ADP $l^{ADP}$ with $L_2$ utility-loss}
            \label{fig:optimality_graph:pb_adp_l2_uwd_false}
	    \end{subfigure}
	    \begin{subfigure}[t]{0.28\textwidth}
    	    \includegraphics[width=1.0\textwidth]{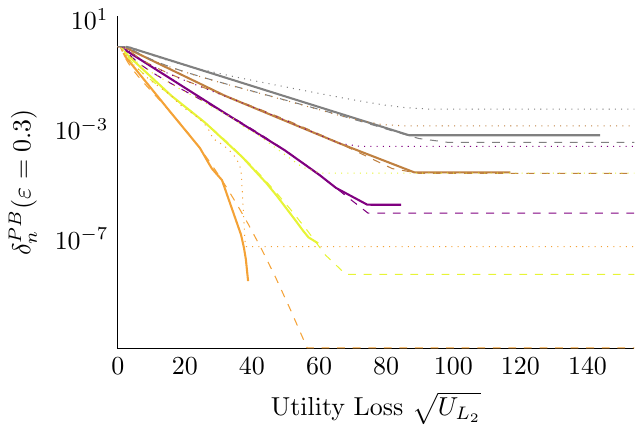}
	        \caption{Extended MA $l^{MA}$ with $L_2$ utility-loss}
            \label{fig:optimality_graph:pb_renyi_markov_l2_uwd_false}
	    \end{subfigure}
	    \begin{subfigure}[t]{0.28\textwidth}
	        \includegraphics[width=1.0\textwidth]{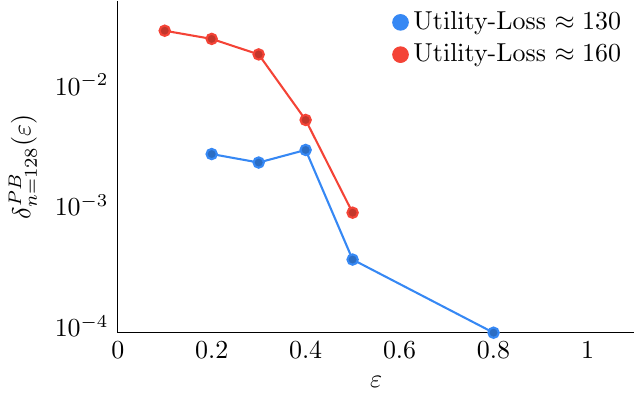}
            \caption{$\eps$-sweep for $l^{ADP}_{n=128} U_{L_2}$}
            \label{fig:different_eps:eps_delta}
	    \end{subfigure}
	    \hfill\phantom{*}
        \caption{Utility-loss vs. privacy leakage for sensitivity-bounded queries (sensitivity $s\!=\!1$) and multiple number of compositions $n$. PrivacyBuckets-ADP $l^{ADP}$\! and Extended Moments Accountant (MA) $l^{MA}$\! are compared to the truncated Gaussian mechanism. 
        \textbf{(c)} Despite the clear tendency for $L_2$-loss, the KL-divergence shows erratic behavior for a fixed $n$ due to numerical challenges. $\eps=0.3$}
	    \label{fig:optimality_graph}
    \end{figure*} 


Utility, the usefulness of the perturbed output $M(q;p)$, degrades with a larger width of noise. Adding wider noise increases the expected deviation from the real value, while a lower width comes with lower privacy guarantees. We examine the trade-off between utility and the corresponding $\delta^{PB}$ for a fixed $\eps=0.3$. Varying utility-outcomes are achieved by weighting the penalty for low utility via $\utilityweight$ in our loss-function differently (see \cref{sec:approach:lossfunction}). 
The effective utility-weights $\utilityweight$ span a range from $10^{-7}$ to $2$.

\Cref{fig:optimality_graph} illustrates our results for real-valued sensitivity-bounded query functions for $L_1$ and $L_2$ utility losses. Next to a single invocation of the mechanism $M$, we extend our results to independent sequential composition, i.e., the case where an adversary receives multiple independent outputs from a mechanism, all based on the same input $D$ and the same noise function $p$. Moreover, we trained the model with and without \textit{utility weight decay}. The upper-bounds $\delta^{PB}_n$ are obtained by the GitHub implementation\textsuperscript{\ref{footnote:privacybuckets}} where the privacy leakage of a single invocation was composed with the required number of compositions $n$. \new{A selection of generated noises are illustrated in appendix \cref{fig:optimalit:showcase}.} Noise that needs to achieve a certain $\delta$ after multiple compositions is wider than noise aiming for the same $\delta$ with fewer compositions, resulting in lower utility for higher $n$.
We compare our results to the privacy leakage of truncated Gaussian noise generated by clipping a zero-centered Gaussian distribution to the same range in which we optimize our numerical optimal noise. See \cref{def:truncated-gaussian} for a formal definition. 

There exists a minimal $\delta^{PB}$ for truncated Gaussian noise. With a small standard deviation $\sigma$, the resulting $\delta^{PB}$ is large but reduces with increasing $\sigma$. However, with growing $\sigma$, probability mass is shifted to the outermost regions of the noise just before the truncation-barrier and increasingly contributes distinguishing events which not only dominate $\delta^{PB}$ for large $\sigma$ but also increase it again. This effect is shown in appendix \cref{fig:std-dev-struncated-gaussian} and applies in a broader sense to our generated noise as well. As the standard deviation corresponds to the utility-loss monotonically, the optimality-graphs shown in \cref{fig:optimality_graph} would increase again for larger $\sigma$. There is no meaning in having noise with larger $\delta^{PB}$ \textit{and} larger utility-loss. Therefore, we replace these increasing values with the more optimal minimal, producing the horizontal lines. We show our raw results in appendix \cref{fig:optimality_graph:old}. There, we also included $l^{PDP}$ which we omitted in the main body because we have not shown \redemph{Shift Invariance} (\cref{thm:shift-invariance}) for this case. 


Our numerical setup produced inappropriate results for certain hyper-parameter configurations, primarily due to collapsing noise distributions with low utility-loss and $\delta \approx 1$ originating from a bestriding penalty for the utility-loss. 
See \cref{fig:utility_domination} for an example of such noise. For illustration clarity, we removed these samples from our plots. 

In conclusion, we are remarkably close to the results for truncated Gaussians. While we can outperform the Gaussian in certain cases for $l^{MA}$, we need to admit that these differences are minor (<4\%), except for $l^{MA}$ for $n=8$ and $n=16$ compositions where they are larger. With $l^{ADP}$, our optimality graphs were always above the truncated Gaussians. With $l^{ADP}$ for $L_2$-utility-loss, however, we reproduced the shape of a Gaussian with a KL-divergence of $<10^{-4}$ for $128$ compositions, illustrated in \cref{fig:optimality_graph:difference}. Our results suggest that this difference shrinks with a larger number of compositions, posing (truncated) Gaussians as a near-optimal shape considering $L_2$-loss.
\new{The performance of other privacy budgets parameters than $\eps=0.3$ are illustrated in \cref{fig:different_eps:eps_delta}.}

\subsection{DP-SGD}\label{sec:results:dp_sgd}

    \begin{figure*}[htb!]
	    \centering
	    \begin{subfigure}[t]{0.42\textwidth}
    	    \includegraphics[width=1.0\textwidth]{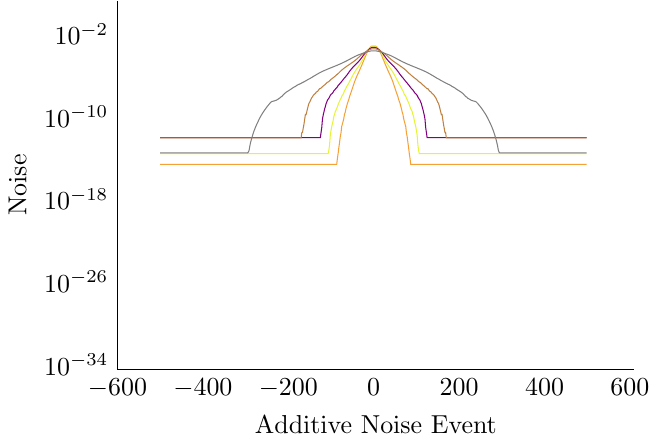}
	        \caption{$l_{B,A}^{ADP}$ for multiple number of compositions}
            \label{fig:dp_sgd:pb_adp_b_a}
	    \end{subfigure}
	    \begin{subfigure}[t]{0.42\textwidth}
    	    \includegraphics[width=1.0\textwidth]{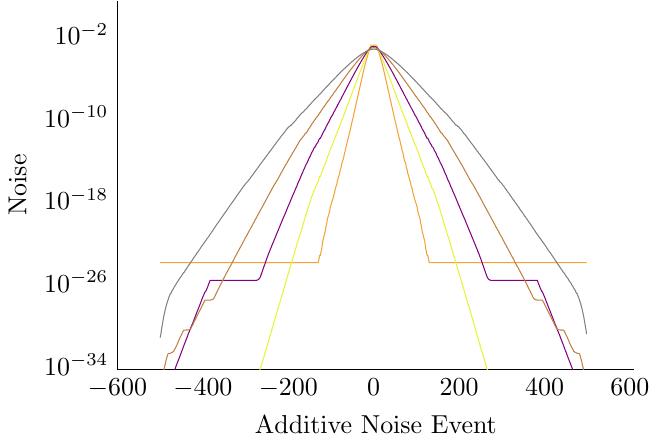}
	        \caption{$l_{B,A}^{MA}$ for multiple number of compositions}
            \label{fig:dp_sgd:ma_b_a}
	    \end{subfigure}
	    \begin{subfigure}[t]{0.12\textwidth}\vspace{-14em}
    	    \includegraphics[width=0.9\textwidth]{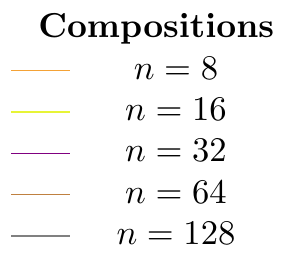}
	    \end{subfigure}
	    \hfill\phantom{*}
	    \begin{subfigure}[t]{0.42\textwidth}
    	    \includegraphics[width=1.0\textwidth]{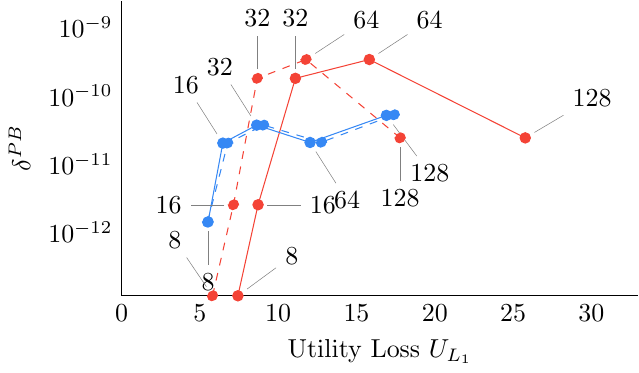}
	        \caption{Generated noise distributions compared to a truncated Gaussian with same $\delta^{PB}$\!\!. Numbers indicate compositions.}
            \label{fig:dp_sgd:optimality}
	    \end{subfigure}
	    \begin{subfigure}[t]{0.42\textwidth}
    	    \includegraphics[width=1.0\textwidth]{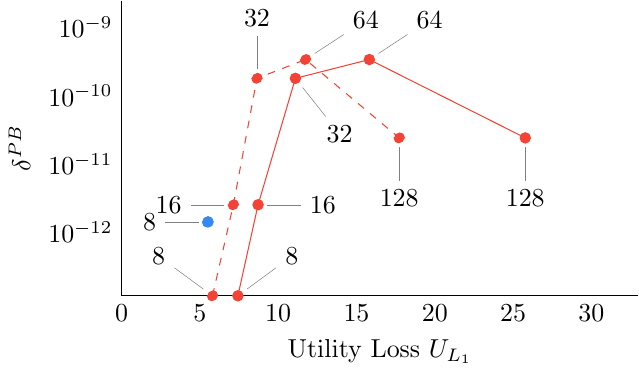}
	        \caption{Same as (c) but generated noise distributions and the corresponding truncated Gaussian are equally range-reduced.}
            \label{fig:dp_sgd:optimality_cut}
	    \end{subfigure}
	    \begin{subfigure}[t]{0.12\textwidth}\vspace{-12em}
    	    \includegraphics[width=0.9\textwidth]{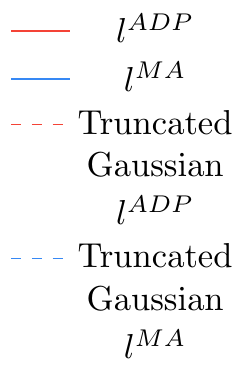}
	    \end{subfigure}
	    \hfill\phantom{*}
	    \caption{Illustration of DP-SGD (sub-sampling probability $q=0.1$) for multiple number of compositions $n$ and clipping distance $C=1$. PrivacyBuckets-ADP and Extended Moments Accountant (MA) are compared to the truncated Gaussian mechanism with same $\delta^{PB}$\!. $\eps=0.3$}
        \label{fig:dp_sgd}
    \end{figure*}

Differentially private stochastic gradient descent (DP-SGD) is a technique to train parameters with an SGD algorithm differentially private, protecting individual training samples. The highly influential work by Abadi et al.~\cite{abadi2016deeplearning} showed that the privacy leakage is sufficiently bounded by comparing the following two noise distributions:
\begin{align*}
    A &:= N(0, \sigma)\\
    B &:=(1-q) \cdot N(0, \sigma) + q\cdot N(C, \sigma)
\end{align*}
for a clipping constant $C$, a sampling probability $q$, and a noise-multiplier $\sigma$. 
As the added noise is monotonically falling from the mean, any gradient-update resulting in less than $C$ deviation has less privacy leakage than when maximally distanced by $C$ due to the larger overlap of the noising Gaussians.
We adapted our algorithm to optimize $p$ by constructing the following intermediate worst-case distributions 
\begin{equation}
\begin{aligned}
    A &:= p\\
    B &:= (1-q)\cdot p + q\cdot p_{\text{shifted-by-$C$}}
\end{aligned}\label{eq:DP-SGD-optimzer-worst-case-distribtutions}
\end{equation}
and then minimizing $l^X_{A,B}$ and $l^X_{B,A}$ (instead of $l^X_{p,p+s}$ for sensitivity $s$). While we have not proven the \redemph{Shift Invariance} \cref{thm:shift-invariance} for this case, we argue that this assumption is applicable as well because the same reasoning applies: With sensitivity $C$, all outputs with smaller deviation than $C$ induce less privacy leakage. 

For each of the two cases $X\in\{MA,ADP\}$, we run the minimizer twice (once for $l^X_{A,B}$ and once for $l^X_{B,A}$), each run producing two graphs $\delta^{PB}_{A/B}$ and $\delta^{PB}_{B/A}$, eight graphs in total. For a given $\eps$, the optimal noise $p$ originates from the run where $\delta = \max(\delta^{PB}_{A/B}, \delta^{PB}_{B/A})$ is smaller. 
We find that for our setup, it is sufficient to consider $l^X_{B,A}$ only as it dominates $l^X_{A,B}$ everywhere.  
For clarification, the four corresponding lines for each case are shown in \cref{fig:delta-eps-graph-DP-SGD}.

\new{For $L_1$ utility-loss, \Cref{fig:dp_sgd} illustrates our findings for a fixed $\eps=0.3$ and $C=1$ while the $L_2$ results are shown in appendix $\cref{fig:dp_sgd_appendix}$.} We chose a high $q=0.1$ to make potential effects visible. The first row shows actual noise distributions for several number of compositions. We conjecture that the appearing horizontal plateaus originate from the pursuit of the optimizer to achieve zero there, but the bias $A$ in our model prevents that. The combinations $l^X_{A,B}$ and noise by $l^{PDP}$ are shown in appendix \cref{fig:dp_sgd_appendix}. For numerical reasons revisited later in the discussion, we did not allow our model to learn the truncation range adaptively. 

In \cref{fig:dp_sgd:optimality}, we compare our generated noise to truncated Gaussians spanning the full range $[-500,500]$ and with adapted standard deviation $\sigma$ to produce the same $\delta^{PB}_n$. A single plot-line is produced by the different number of compositions. Unfortunately, all generated noise distributions with obtained horizontal plateaus have a higher utility-loss, most likely due to these plateaus. In contrast, noises without plateaus ($l^{MA}$ for $n$=$16,32,64,128$) show better utility. 
For remedy, \cref{fig:dp_sgd:optimality_cut} illustrates the same procedure, but the range of the already trained generated noise and the Gaussian is reduced to the point where the horizontal plateau begins, minus $1$ range-unit to minimize the distinguishing events when the curve is shifted by sensitivity $1$. For $l^{MA}$, this is only meaningful for $8$ compositions. It did not bring the desired effect, most likely because the reduction required a noise-renormalization the optimizer did not anticipate. \Cref{fig:dp_sgd:comparison:pb_adp,fig:dp_sgd:comparision:ma} in the appendix illustrate truncated Gaussians for both cases, full-range and reduced. They almost overlap with a maximal relative difference in their standard deviation of $0.33\%$. In contrast to the noise discussed in the previous section, however, DP-SGD generated noise does not resemble a Gaussian shape. 


\myparagraph{Sampling from our numerical distribution}
Rotation-symmetric noise can be reduced to a 1-dimensional privacy-analysis~\cite{abadi2016deeplearning} for which we provide an optimization method. To sample from the generated noise, we can extend the 1-dimensional numerical optimal noise $p$ to a $d$-dimensional \textit{rotation-symmetric} distribution, expressed by a point on the unit-sphere, which is then projected (multiplied) by the radius $r$ from the origin. The point on the unit-sphere can be sampled by drawing $d$ normally distributed  ($\mu=0,\sigma=1$) random numbers $Y_i$ and normalize them by their norm $\sqrt{\sum_i Y_i^2}$, see~\cite{donald1969art}.  
The radius $r$ is sampled from the generated noise $p$ directly according to \cref{def:noiseFunc}. 
We leave a proof and a detailed evaluation of the proposed sampling method for future work.





\subsection{Truncation Effects}\label{sec:evaluation:truncation}

    \begin{figure*}[t]
	    \centering
	    \begin{subfigure}[t]{0.32\textwidth}
	        \includegraphics[width=1.0\textwidth]{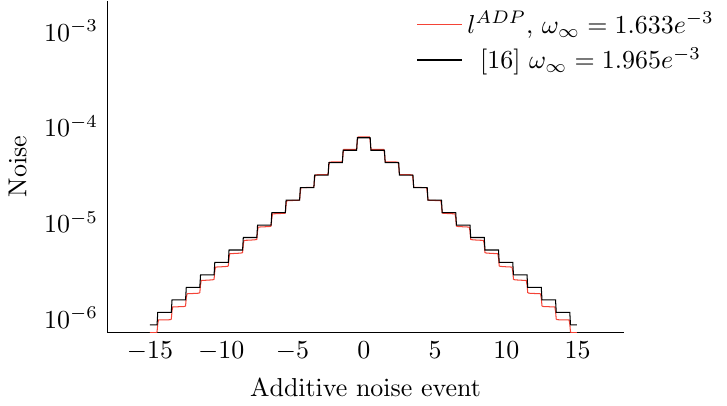}
            \label{fig:truncation_effects:range15}
	    \end{subfigure}
	    \begin{subfigure}[t]{0.32\textwidth}
    	    \includegraphics[width=1.0\textwidth]{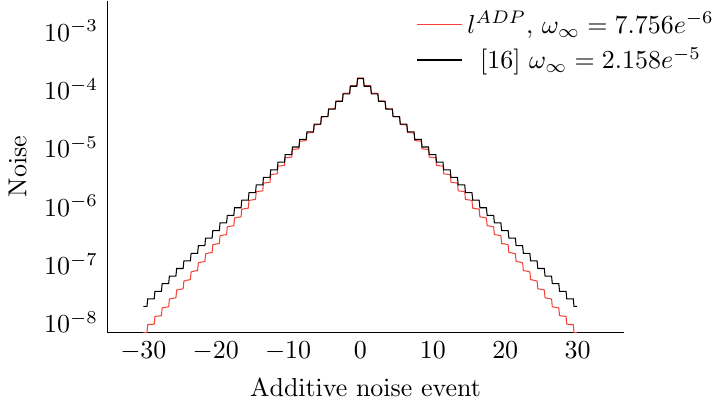}
            \label{fig:truncation_effects:range30}
	    \end{subfigure}
	    \begin{subfigure}[t]{0.32\textwidth}
    	    \includegraphics[width=1.0\textwidth]{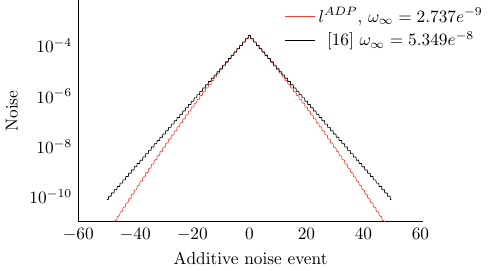}
            \label{fig:truncation_effects:range50}
	    \end{subfigure}\vspace{-1.5em}
	    \caption{Truncation effects illustrated on $l^{ADP}$ for different ranges by comparing to truncated and re-normalized analytic staircase noise (Geng et al.~\cite{geng2017staircase}) with sensitivity $s=1$ for a single mechanism invocation $n=1$. The model attempts to minimize $l^{ADP}$ and consequently shifts mass away from the distinguishing events $\bucketSymbol_\infty$. Learning rate $\learningRate=0.01$, $\eps=0.3$}
        \label{fig:truncation_effects}
	\end{figure*}

Analytical noise distributions in differential privacy usually spread to infinity, e.g., the staircase mechanism by Geng et al.~\cite{geng2017staircase}. Our numerical approximation considers only bounded widths of additive noise, resulting in distinguishing events at the outermost areas of the noise pattern as such events can only originate from one of the two possible inputs $D$ or $D'$. Our numerical approach seeks to minimize such distinguishing events, as their probability mass directly contributes to $l^X$. Consequently, the center of the noise distribution is elevated due to the normalization requirement that all 
probabilities sum up to $1$.

This particular effect is visible in the noise distributions shown in \cref{fig:truncation_effects} where we compare generated noise patterns to the truncated and re-normalized staircase mechanism: the wider the truncated noise, the more visible this effect.



%
%
%

\subsection{Numerical Reproducibility}\label{sec:results:numerical_stability}
    Our approach provides reproducible and stable results for different random parameter initialization.
    For the analytic noise reproduction experiment with range $[-5, 5]$ and $n=1$, we have run each privacy accountant method $10$ consecutive times with different initialization seeds. The gained insights apply to the other setups as well.
    \Cref{fig:training} illustrates the convergence of the resulting $\delta^{PB}$ and its variance, obtained from $10$ runs. The privacy-bound methods $l^{ADP}$ and $l^{PDP}$ converge between $5\,000$ and $8\,000$ and $l^{MA}$ roughly after $80\,000$. Consequently, we have set the training duration to $15\,000$, and $100\,000$ respectively.
    
    As demonstrated in \cref{fig:training:std_all}, all three methods show a significantly smaller standard deviation of the corresponding $l^X$ compared to the absolute value $|l^X|$ (<1\%), indicating stable convergence. The alternating pattern of $l^{ADP}$ and $l^{PDP}$ is an artifact of the logarithmic y-axis. The method $l^{MA}$ alternates in a similar absolute range. 
    
    

    \begin{figure}[t]
	    \centering
	    \begin{subfigure}[t]{0.43\textwidth}
	        \includegraphics[width=1.0\textwidth]{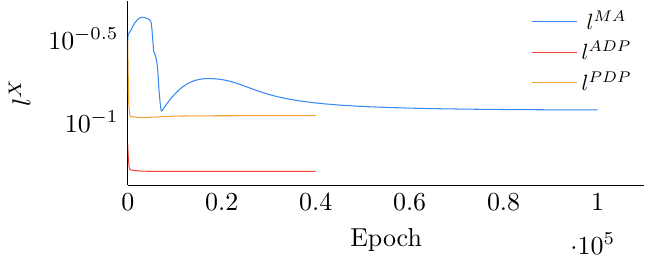}
	        \caption{$l^X$ mean  over epochs}
            \label{fig:training:delta_over_epochs}
	    \end{subfigure}
	    \hfill\phantom{*}
	    \begin{subfigure}[t]{0.43\textwidth}
    	    \includegraphics[width=1.0\textwidth]{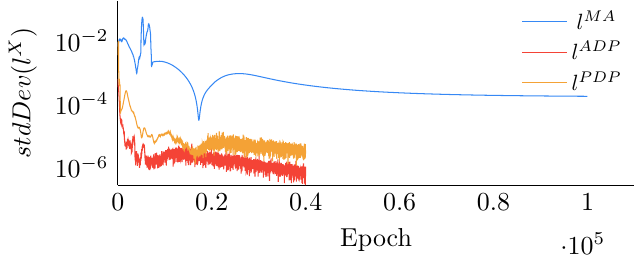}
	        \caption{$stdDev(l^X)$ over epochs}
            \label{fig:training:std_all}
	    \end{subfigure}
	    \hfill\phantom{*}
	    \label{fig:training}
        \caption{Numerical stability of our model over $10$ runs with differently seeded initialization. Note that $l^{ADP}$ and $l^{PDP}$ already converged within $10$k epochs and that $l^{MA}$\! approaches $l^{PDP}$\!, not $l^{ADP}$\!.}
    \end{figure}

\section{Discussion}

Our proposed algorithm outputs noise distributions for which we can guarantee an upper-bound for ($\eps,\delta$)-ADP. 
Simultaneously, we cannot provide a method to estimate how much our result deviates from the target optimum that minimizes our optimal noise requirements.
However, when reproducing previous analytical results (\cref{sec:results:reproduction}), we have shown in \cref{fig:truncation_effects-epsilon-delta} that the relative difference between the analytic and numerical optimal noise is smaller than $10^{-2}$.

\myparagraph{Closeness to truncated Gaussians}
Another argument for our soundness:
our algorithm produces noise with ($\eps,\delta$)-ADP guarantees similar to truncated Gaussians while also creating similar utility-loss as shown in \cref{fig:optimality_graph}. We are often below the ones of a truncated Gaussian, also for DP-SGD. For $L_2$-utility-loss and sensitivity-bounded queries, we generate almost perfect Gaussian-shaped numerical optimal noise.
These effects might indicate that (truncated) Gaussians are close to optimal in a simple setting, especially under independent sequential composition and $L_2$-utility-loss.
\redcomment{Is Gaussian optimal for l2-loss? we know that a Gaussian prior of a optimization parameter leads to l2-regularizer. Moreover, our utility-losses look suspiciously like regularizer. 

\url{https://stats.stackexchange.com/questions/163388/why-is-the-l2-regularization-equivalent-to-gaussian-prior}

David: Ok, wir wissen: Gaussian prior für beta fürt zum l2 loss. Die frage ist noch, wie sehr unser $delta_log_loss$ das beta verzerrt. Wir wissen auch, dass die Gaussian eine "invariante" ist unter Composition. Das heisst, bei Gaussian noise macht der l2 regularizer nichts kaputt. Die frage, die IMHO noch überigbleibt, ist, ob diese invariante ein maximu oder ein minimum der otpimirung ist. Unsere impirie suggestiert ein Minimum. Sonst würden wir nicht da hin konvergieren. 

die frage: was passiert mit p unter composiotn? Ich nehme an, das ist ein multinomiales polynom, welches aufgrund des zentralen grenzwertsatzes zu einer Gaussian konvergiert, mit einem linearen fehlerterm. FALSCH: l2 PB PDP PRODUZIERT AUCH GAUSSIANS: Und ich verumute, dass die bucketgewichtung mit dem  $-e^log_f*j - \eps$ diesen linearen fehler ausgleicht. das ergebnis ist die Gaussian. Und darum klappts nicht so gut bei PDP / Markov. 

AUf jeden fall sieht es so aus, als würde die Compositon irgendwas Gaussianartiges machen (da convolution $-->$ zentraler Grenzwertsatz) und daher den noise für l1 loss, der eigentlich eine Laplace Prior hat, "verbiegen". Und das optimum für l2 loss ist eine invariante der composition, das wissen wir von privacy loss classes paper. Wieso der unterschied zwischen AP und PDP zustande kommt, da kann ich atm nur spekulieren. 
}


\myparagraph{Why ADP} ADP is broadly accepted within the community because it is closed under \textit{post-processing}~\cite{dwork2014algorithmic}, i.e., other algorithms that take the output of an ADP mechanism cannot worsen the privacy guarantees, which is not valid for its more intuitive alternative formulation PDP. In addition, Moments Accountant~\cite{abadi2016deeplearning} analyzes its privacy leakage for ADP as well. 

\myparagraph{Why gradient descent}
We decided to search for optimal noise by applying gradient descent because it allows us to use complex (deep) noise-generating models for searching a full (discrete) output distribution, not only hyper-parameters. Next to its great tooling support and available documentation (PyTorch), it also allows us to try different models in an easy plug-and-play fashion. 

\myparagraph{Why Sigmoid-model} 
We tried several different models starting from a multi-layer-perceptron over convolutional neural networks to more constrained methods like our final Sigmoid-approach. We found that for producing intermediate horizontal lines like in \cref{fig:gengrepoduction:correct}, we require a strong dependence between neighboring x-axis points that neural networks were not able to deliver easily. In the end, we achieved the best results with the used Sigmoid-model, while higher values for the number of Sigmoids $K$ or x-axis discretization steps did not yield better results. 

\myparagraph{Numerical and Complexity Issues}
Due to the complexity and RAM requirement on our hardware, we were not able to produce more than $128$ compositions with $l^{ADP}$ and $l^{PDP}$. The simpler $l^{MA}$ can achieve up to $2^{10}$ compositions. We also tried to adapt the truncation range of the generated noise automatically during training. While easy to adapt $l^{ADP}$ and $l^{PDP}$, the loss $l^{MA}$ treats the exclusion of distinguishing events in a discrete non-differentiable manner. However, this induced for all methods unanimous additional instabilities leading to higher privacy leakage.
While analytic distributions (and sampling therefrom) might deliver better results in certain scenarios, our solution can adapt to more complex cases and provides a general method to find noise independent of the specific underlying problem.   

\new{
\myparagraph{Application scenarios} Truncated optimal noise is useful where a minimal and limited utility-loss is preferred. Examples are noised real-time systems with privacy requirements where a delay imposes utility-loss, e. g., privacy-aware operating system modules or anonymous communication networks (ACNs). Specifically, the ACN \textit{Stadium}~\cite{tyagi2017stadium} delays package forwarding by adding Poisson noise due to its closed analytical expression under composition.
Another interesting scenario is DP-SGD~\cite{abadi2016deeplearning} itself: In the unlikely case that a huge Gaussian noise is drawn that demolishes the model in training completely, the entire training will be repeated, restricting the width of the applied noise distribution more broadly. 


}

\myparagraph{Extension to non-binary worst-case distributions} In the first sections of our evaluation, we focus on the case \new{where the \textit{worst-case} query output distribution for neighboring inputs $|q(D) - q(D')|$, the \textit{input} to our optimization, are single events (either $0$ or $1$). For DP-SGD, we extended our approach to a mixture input distributions that are then convolved with the learned noise distribution.}
If the \new{worst-case} output-frequency distribution of the functional $|q(D) - q(D')|$ for neighboring inputs are known, a noise distribution with lower utility-loss than considering only the maximal deviation $s$ might be generated while still guaranteeing (average) ($\eps,\delta$)-ADP. Single events from certain input pairs might still have a higher privacy-loss than the average ADP-guarantee, but their occurrence probability is by design sufficiently small such that their weighted contribution to $\delta$ does not break the average ADP-guarantees.

We conjecture that for output-frequency distributions that are symmetric around $0$, our monotonicity and symmetric of noise result (\cref{thm:shift-invariance}) applies as well. If this is not the case, for example, with asymmetric worst-case frequency distributions of $q$, then our approach is still applicable, deploying a different non-monotonic and non-symmetric model. 
The same argumentation applied to privacy guarantees under limited background knowledge where the uncertainty of an attacker due to incomplete knowledge of the input datasets is modeled by a distribution of possible outputs of $q$~\cite{desfontaines2019differential}. 

\new{
\myparagraph{Beyond additive mechanisms} 
The exponential mechanism, a widely used and very flexible technique in differential privacy, has been shown to equal to adding Gumbal noise to a mechanism output and then picking the noisy maximum~\cite{dwork2014algorithmic}. Our method might be extended to search optimal noise for such post-processing as long as there are two resulting worst-case distributions, as long as the delta-bound and the utility-loss are derivable or can be sufficiently approximated by derivable functions and that the optimization converges. 
This might encompass a wide variety of DP mechanisms. 
}

\section{Related Work}
Differential privacy has many variants. The ones considered in this work are Rényi-differential Privacy/Moments-Accountant~\cite{abadi2016deeplearning,mironov2017renyidp}, approximate differential privacy~\cite{dwork2006our}, and probabilistic differential privacy~\cite{GoMaWaXiGe_09pdp}. Desfontaines et al. compiled a comprehensive list of other variants~\cite{desfontaines2020sok}.

\myparagraph{Composition.}
%
Vadhan et al.~\cite{dwork2010boosting} have shown that privacy will deteriorate as $\sqrt{n}\eps + n\eps^2$ under sequential composition, rather than the (trivial) worst-case $n\eps$ from previous work.
Meiser and Mohammadi~\cite{meisermohammadi2018PB} have recently introduced a numerical method for computing nearly tight ADP bounds (including lower bounds). Koskela et al.~\cite{koskela2021computing} extend PrivacyBuckets by convolving the privacy-loss random variables via FFT. While relevant to our approach, their error estimation is difficult to control with gradient descent at the time of writing.
Sommer et al.~\cite{sommer2019privacy} provided an analytical foundation for PrivacyBuckets and delivered many additional insights, such as the PDP bound used in this work. 
Kairouz et al.~\cite{kairouz2017composition} derived ADP bounds to prove upper ADP bounds for any mechanism, but their bounds are less tight than the Moments Accountant or PrivacyBuckets. 
Dong et al.~\cite{dong2020optimal} provide a nearly optimal composition theorem for the exponential mechanism. 

\myparagraph{Optimal noise and truncation.}
Ghosh et al.~\cite{ghosh2012universally} studied a very general utility-maximization framework for a single count query with a sensitivity one under $\eps$-differential privacy. 
Gupte et al.~\cite{gupte2010universally} derived the optimal noise probability distributions for a single count query with a sensitivity one for minimax (risk-averse) users.
Geng examined optimal noise-adding mechanisms with varying co-authors~\cite{geng2015optimal2,geng2017staircase,geng2015optimal,geng2019pmlr}, \new{focusing on optimal noise for $(\eps,0)$-ADP and $(0, \delta)$-ADP settings. In contrast to our work, they only considered one query (n=1).}
Soria-Comas et al.~\cite{soria2013optimal} also independently derived the staircase-shaped noise probability distribution under a different optimization framework.
Kumar et al.~\cite{kumar2019deriving} provided an optimal noise-adding mechanism for matrices in machine learning. 
Balle et al.~\cite{balle2018improving} optimized the Gaussian mechanism by bounding the privacy loss directly without using a tail-bound as used in Moments Accountant.
Bun et al.~\cite{bun2019composable} extended concentrated differential privacy to accept a sub-exponential instead of a sub-Gaussian tail, based on the Rényi-divergence. This kind of truncation, however, still does not incorporate distinguishing events. 
Geng et al.~\cite{geng2020tight} and Holohan et al.~\cite{holohan2018bounded} worked on the truncated (or bounded) Laplace mechanism. In contrast to our work, they searched only for optimal hyper-parameters for Laplace noise. 
Cesar et al.~\cite{cesar2021bounding} showed the superiority of Gaussian over Laplacian mechanism under composition when noising histograms. 
To our best knowledge, there is no other work aiming to search for general optimal noise numerically without optimizing only hyper-parameters of a predefined probability distribution. 


\section{Conclusion}
    We have introduced a novel tool for learning truncated additive DP mechanisms with strong (and sometimes near-optimal) utility-privacy trade-offs for sensitivity-bounded queries and DP-SGD. 
    We have proven that learning such mechanisms can be reduced to optimizing symmetric and from the mean decreasing noise distribution with gradient descent. Potentially of independent interest, we extended the Moments Accountant (MA) to incorporate distinguishing events. As that turned out to be non-trivial in general, we identified sufficient conditions for which MA can be feasibly computed, even if distinguishing events occur.
    For sensitivity-bounded queries, our learning method reproduces the proven optimal mechanism of Geng et al.~\cite{geng2017staircase} and (for a high number of compositions) generates -- using MA as privacy estimator -- a noise distribution that is comparable to and sometimes better than truncated Gaussian noise. For the worst-case output distribution of DP-SGD, we achieve similar privacy-utility trade-offs to the truncated Gaussian. 
    An interesting direction for future work is utilizing our gradient-descent approach to numerically find and validate noise distributions for complex, potentially randomized scenarios.
    \fv{We consider it interesting for future work to extend our model to non-binary, non-monotonous or asymmetric worst-case distributions.}
\bibliographystyle{plain}
\bibliography{main}

\begin{thebibliography}{10}

\bibitem{abadi2016deeplearning}
Martin Abadi, Andy Chu, Ian Goodfellow, H.~Brendan McMahan, Ilya Mironov, Kunal
  Talwar, and Li~Zhang.
\newblock Deep learning with differential privacy.
\newblock {\em Proceedings of the 2016 ACM SIGSAC Conference on Computer and
  Communications Security}, Oct 2016.

\bibitem{balle2018improving}
Borja Balle and Yu-Xiang Wang.
\newblock Improving the gaussian mechanism for differential privacy: Analytical
  calibration and optimal denoising.
\newblock In {\em International Conference on Machine Learning}, pages
  394--403. PMLR, 2018.

\bibitem{bun2019composable}
Mark Bun, Cynthia Dwork, Guy~N. Rothblum, and Thomas Steinke.
\newblock Composable and versatile privacy via truncated cdp.
\newblock In {\em Proceedings of the 50th Annual ACM SIGACT Symposium on Theory
  of Computing}, STOC 2018, page 74–86, New York, NY, USA, 2018. Association
  for Computing Machinery.

\bibitem{CDPBun}
Mark Bun and Thomas Steinke.
\newblock Concentrated differential privacy: Simplifications, extensions, and
  lower bounds.
\newblock In Martin Hirt and Adam Smith, editors, {\em Theory of Cryptography},
  pages 635--658, Berlin, Heidelberg, 2016. Springer Berlin Heidelberg.

\bibitem{cesar2021bounding}
Mark Cesar and Ryan Rogers.
\newblock Bounding, concentrating, and truncating: Unifying privacy loss
  composition for data analytics.
\newblock In {\em Algorithmic Learning Theory}, pages 421--457. PMLR, 2021.

\bibitem{desfontaines2019differential}
Damien Desfontaines, Esfandiar Mohammadi, Elisabeth Krahmer, and David Basin.
\newblock Differential privacy with partial knowledge.
\newblock {\em arXiv preprint arXiv:1905.00650}, 2019.

\bibitem{desfontaines2020sok}
Damien Desfontaines and Bal{\'a}zs Pej{\'o}.
\newblock Sok: Differential privacies.
\newblock {\em Proceedings on Privacy Enhancing Technologies}, 2:288--313,
  2020.

\bibitem{DiNi_03:confidence-gain}
Irit Dinur and Kobbi Nissim.
\newblock {Revealing Information While Preserving Privacy}.
\newblock In {\em {Proceedings of the Twenty-second ACM SIGMOD-SIGACT-SIGART
  Symposium on Principles of Database Systems (PODS)}}, pages 202--210. ACM,
  2003.

\bibitem{dong2020optimal}
Jinshuo Dong, David Durfee, and Ryan Rogers.
\newblock Optimal differential privacy composition for exponential mechanisms.
\newblock In {\em International Conference on Machine Learning}, pages
  2597--2606. PMLR, 2020.

\bibitem{dwork2006our}
Cynthia Dwork, Krishnaram Kenthapadi, Frank McSherry, Ilya Mironov, and Moni
  Naor.
\newblock Our data, ourselves: Privacy via distributed noise generation.
\newblock In {\em Annual International Conference on the Theory and
  Applications of Cryptographic Techniques}, pages 486--503. Springer, 2006.

\bibitem{dwork2006calibrating}
Cynthia Dwork, Frank McSherry, Kobbi Nissim, and Adam Smith.
\newblock Calibrating noise to sensitivity in private data analysis.
\newblock In Shai Halevi and Tal Rabin, editors, {\em Theory of Cryptography},
  pages 265--284, Berlin, Heidelberg, 2006. Springer Berlin Heidelberg.

\bibitem{dwork2014algorithmic}
Cynthia Dwork and Aaron Roth.
\newblock The algorithmic foundations of differential privacy.
\newblock {\em Foundations and Trends in Theoretical Computer Science},
  9(3-4):211--407, 2014.

\bibitem{DwRo_16:concentrated}
Cynthia Dwork and Guy~N. Rothblum.
\newblock {Concentrated Differential Privacy}.
\newblock {\em {CoRR}}, abs/1603.01887, 2016.

\bibitem{dwork2010boosting}
Cynthia Dwork, Guy~N. Rothblum, and Salil Vadhan.
\newblock Boosting and differential privacy.
\newblock In {\em 2010 IEEE 51st Annual Symposium on Foundations of Computer
  Science}, pages 51--60. IEEE, 2010.

\bibitem{geng2019pmlr}
Quan Geng, Wei Ding, Ruiqi Guo, and Sanjiv Kumar.
\newblock Optimal noise-adding mechanism in additive differential privacy.
\newblock In Kamalika Chaudhuri and Masashi Sugiyama, editors, {\em Proceedings
  of the Twenty-Second International Conference on Artificial Intelligence and
  Statistics}, volume~89 of {\em Proceedings of Machine Learning Research},
  pages 11--20. PMLR, 16--18 Apr 2019.

\bibitem{geng2020tight}
Quan Geng, Wei Ding, Ruiqi Guo, and Sanjiv Kumar.
\newblock Tight analysis of privacy and utility tradeoff in approximate
  differential privacy.
\newblock In {\em International Conference on Artificial Intelligence and
  Statistics}, pages 89--99. PMLR, 2020.

\bibitem{geng2017staircase}
Quan {Geng}, Peter {Kairouz}, Sewong {Oh}, and Pramod {Viswanath}.
\newblock The staircase mechanism in differential privacy.
\newblock {\em IEEE Journal of Selected Topics in Signal Processing},
  9(7):1176--1184, 2015.

\bibitem{geng2015optimal2}
Quan Geng and Pramod Viswanath.
\newblock The optimal noise-adding mechanism in differential privacy.
\newblock {\em IEEE Transactions on Information Theory}, 62(2):925--951, 2015.

\bibitem{geng2015optimal}
Quan Geng and Pramod Viswanath.
\newblock Optimal noise adding mechanisms for approximate differential privacy.
\newblock {\em IEEE Transactions on Information Theory}, 62(2):952--969, 2015.

\bibitem{ghosh2012universally}
Arpita Ghosh, Tim Roughgarden, and Mukund Sundararajan.
\newblock Universally utility-maximizing privacy mechanisms.
\newblock {\em SIAM Journal on Computing}, 41(6):1673--1693, 2012.

\bibitem{GoMaWaXiGe_09pdp}
Michaela G{\"{o}}tz, Ashwin Machanavajjhala, Guozhang Wang, Xiaokui Xiao, and
  Johannes Gehrke.
\newblock Privacy in search logs.
\newblock {\em CoRR}, abs/0904.0682, 2009.

\bibitem{gupte2010universally}
Mangesh Gupte and Mukund Sundararajan.
\newblock Universally optimal privacy mechanisms for minimax agents.
\newblock In {\em Proceedings of the twenty-ninth ACM SIGMOD-SIGACT-SIGART
  symposium on Principles of database systems}, pages 135--146, 2010.

\bibitem{holohan2018bounded}
Naoise Holohan, Spiros Antonatos, Stefano Braghin, and P{\'o}l Mac~Aonghusa.
\newblock The bounded laplace mechanism in differential privacy.
\newblock {\em arXiv preprint arXiv:1808.10410}, 2018.

\bibitem{kairouz2017composition}
Peter Kairouz, Sewoong Oh, and Pramod Viswanath.
\newblock {The composition theorem for differential privacy}.
\newblock {\em IEEE Transactions on Information Theory}, 63(6):4037--4049,
  2017.

\bibitem{kingma2017adam}
Diederik~P. Kingma and Jimmy Ba.
\newblock Adam: A method for stochastic optimization, 2017.

\bibitem{donald1969art}
Donald~E. Knuth.
\newblock The art of computer programming, vol. 2: Seminumerical algorithms,
  1969.

\bibitem{koskela2021computing}
Antti Koskela and Antti Honkela.
\newblock Computing differential privacy guarantees for heterogeneous
  compositions using fft, 2021.

\bibitem{kumar2019deriving}
Mohit Kumar, Michael Rossbory, Bernhard~A Moser, and Bernhard Freudenthaler.
\newblock Deriving an optimal noise adding mechanism for privacy-preserving
  machine learning.
\newblock In {\em International Conference on Database and Expert Systems
  Applications}, pages 108--118. Springer, 2019.

\bibitem{karaoke}
David Lazar, Yossi Gilad, and Nickolai Zeldovich.
\newblock Karaoke: Distributed private messaging immune to passive traffic
  analysis.
\newblock In {\em 13th {USENIX} Symposium on Operating Systems Design and
  Implementation ({OSDI} 18)}, pages 711--725, Carlsbad, CA, October 2018.
  {USENIX} Association.

\bibitem{meisermohammadi2018PB}
Sebastian Meiser and Esfandiar Mohammadi.
\newblock Tight on budget? tight bounds for r-fold approximate differential
  privacy.
\newblock In {\em Proceedings of the 2018 ACM SIGSAC Conference on Computer and
  Communications Security}, CCS '18, page 247–264, New York, NY, USA, 2018.
  Association for Computing Machinery.

\bibitem{mironov2017renyidp}
Ilya {Mironov}.
\newblock Rényi differential privacy.
\newblock In {\em 2017 IEEE 30th Computer Security Foundations Symposium
  (CSF)}, pages 263--275, 2017.

\bibitem{MuVa_16:psharp}
Jack Murtagh and Salil Vadhan.
\newblock The complexity of computing the optimal composition of differential
  privacy.
\newblock In {\em Proceedings, Part I, of the 13th International Conference on
  Theory of Cryptography (TCC)}, pages 157--175. Springer Berlin Heidelberg,
  2016.

\bibitem{narayanan2006break}
Arvind Narayanan and Vitaly Shmatikov.
\newblock How to break anonymity of the netflix prize dataset.
\newblock {\em arXiv preprint cs/0610105}, 2006.

\bibitem{pytorch}
Adam Paszke, Sam Gross, Francisco Massa, Adam Lerer, James Bradbury, Gregory
  Chanan, Trevor Killeen, Zeming Lin, Natalia Gimelshein, Luca Antiga, Alban
  Desmaison, Andreas Kopf, Edward Yang, Zachary DeVito, Martin Raison, Alykhan
  Tejani, Sasank Chilamkurthy, Benoit Steiner, Lu~Fang, Junjie Bai, and Soumith
  Chintala.
\newblock Pytorch: An imperative style, high-performance deep learning library.
\newblock In H.~Wallach, H.~Larochelle, A.~Beygelzimer, F.~d\textquotesingle
  Alch\'{e}-Buc, E.~Fox, and R.~Garnett, editors, {\em Advances in Neural
  Information Processing Systems 32}, pages 8024--8035. Curran Associates,
  Inc., 2019.

\bibitem{sommer2019privacy}
David~M. Sommer, Sebastian Meiser, and Esfandiar Mohammadi.
\newblock Privacy loss classes: The central limit theorem in differential
  privacy.
\newblock {\em Proceedings on privacy enhancing technologies},
  2019(2):245--269, 2019.

\bibitem{soria2013optimal}
Jordi Soria-Comas and Josep Domingo-Ferrer.
\newblock Optimal data-independent noise for differential privacy.
\newblock {\em Information Sciences}, 250:200--214, 2013.

\bibitem{stadium}
Nirvan Tyagi, Yossi Gilad, Derek Leung, Matei Zaharia, and Nickolai Zeldovich.
\newblock Stadium: A distributed metadata-private messaging system.
\newblock In {\em Proceedings of the 26th Symposium on Operating Systems
  Principles}, SOSP '17, page 423–440, New York, NY, USA, 2017. Association
  for Computing Machinery.

\bibitem{tyagi2017stadium}
Nirvan Tyagi, Yossi Gilad, Derek Leung, Matei Zaharia, and Nickolai Zeldovich.
\newblock Stadium: A distributed metadata-private messaging system.
\newblock In {\em Proceedings of the 26th Symposium on Operating Systems
  Principles}, SOSP '17, page 423–440, New York, NY, USA, 2017. Association
  for Computing Machinery.

\bibitem{vuvuzela}
Jelle van~den Hooff, David Lazar, Matei Zaharia, and Nickolai Zeldovich.
\newblock {\em Vuvuzela: Scalable Private Messaging Resistant to Traffic
  Analysis}, page 137–152.
\newblock Association for Computing Machinery, New York, NY, USA, 2015.

\end{thebibliography}
\appendix
\section{Utility vs. \texorpdfstring{$\delta^{PB}$}{delta} under Composition }  
    In addition to the Utility vs. $\delta^{PB}$ plots (\cref{fig:optimality_graph}), we present the results trained using the privacy accountant methods broken up in two sets, where one has \textit{utility weight decay} (UWD) activated while the other has not. When we applied \textit{utility weight decay} (UWD), we set the decay rate $\utilityweightdecayrate$ as such, that the resulting utility weight in the last epoch is approximately $\utilityweight_{start} \cdot 10^{-2}$, respectively $\utilityweightdecayrate=2500$ for PrivacyBuckets ADP/PDP and $\utilityweightdecayrate=16666$ for Moments Accountant. As seen in \cref{fig:optimality_graph:old:pb_adp_l1_uwd_true,fig:optimality_graph:old:pb_adp_l2_uwd_true,fig:optimality_graph:pb_pdp_l1_uwd_true,fig:optimality_graph:pb_pdp_l2_uwd_true,fig:optimality_graph:old:pb_renyi_markov_l1_uwd_true,fig:optimality_graph:old:pb_renyi_markov_l1_uwd_true}, were able to produce better noise with less $\delta^{PB}$ using this technique.

\begin{figure}[htb!]
    \centering
    \includegraphics[width=0.48\textwidth]{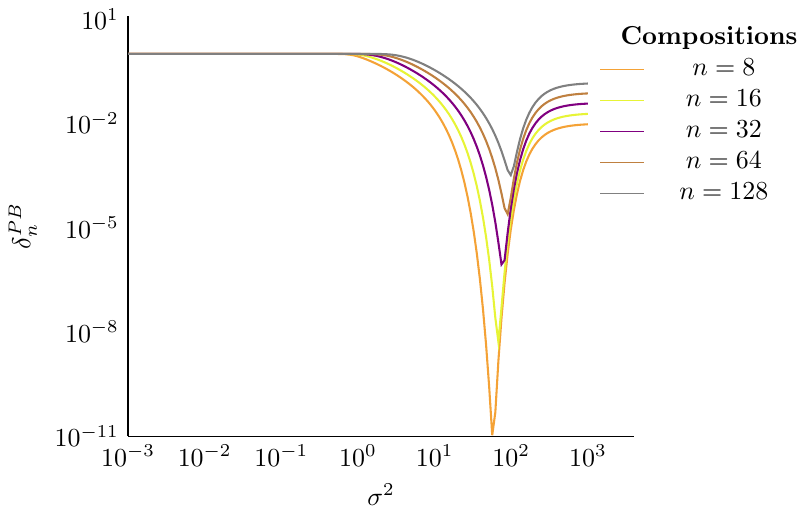}  
    \caption{Illustrating the (non-trivial) continuous mapping between the $\delta_{n}^{PB}$ and the standard deviation for truncated Gaussians.}
    \label{fig:std-dev-struncated-gaussian}
\end{figure}
\begin{figure}[hbt!]
    \centering
    \includegraphics[width=0.48\textwidth]{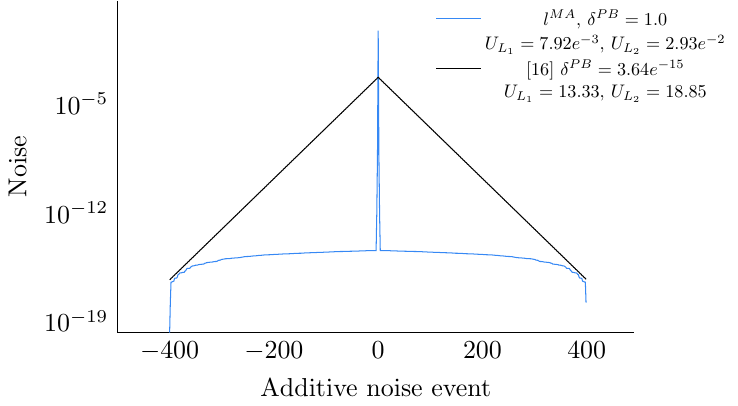}
    \caption{Illustration of a collapsed model trained using Extended Moments Accountant with $\eps=0.3$, $n=4$ compositions, and no utility weight decay, compared to Geng et al.~\cite{geng2017staircase} with $\eps=\frac{0.3}{4}$. While the utility-loss is small for a noise with a utility-loss that considerably greater than the $\delta$ (factor 1000+), the resulting $\delta^{PB}$ is almost equal to $1$, going against the purpose of finding suitable noise. \redcomment{Learning rate $\theta=0.01$.}}
    \label{fig:utility_domination}
\end{figure}
\begin{figure*}[htb!]
	    \centering
	    \begin{subfigure}[c]{0.29\textwidth}
	        \includegraphics[width=1.0\textwidth]{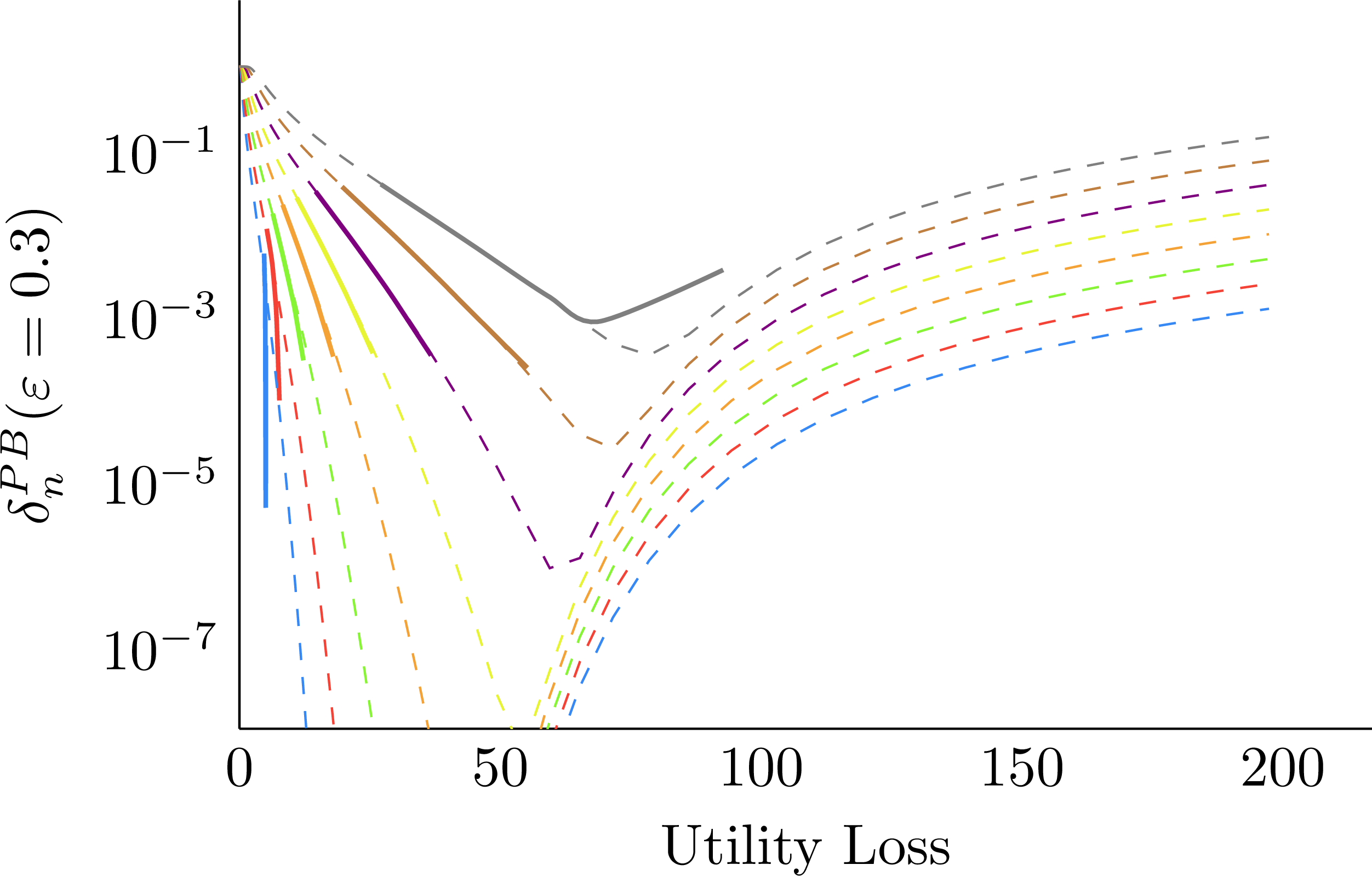}
	        \caption{PrivacyBuckets ADP with $L_1$ utility-loss}
            \label{fig:optimality_graph:old:pb_adp_l1_uwd_false}
	    \end{subfigure}
	    \begin{subfigure}[c]{0.29\textwidth}
    	    \includegraphics[width=1.0\textwidth]{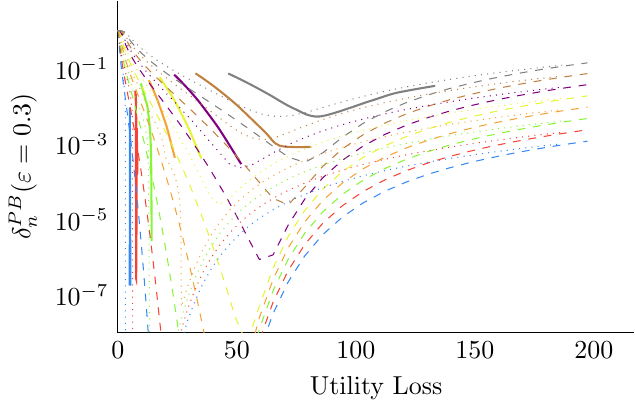}
	        \caption{PrivacyBuckets PDP  with $L_1$ utility-loss}
            \label{fig:optimality_graph:pb_pdp_l1_uwd_false}
	    \end{subfigure}
	    \begin{subfigure}[c]{0.29\textwidth}
    	    \includegraphics[width=1.0\textwidth]{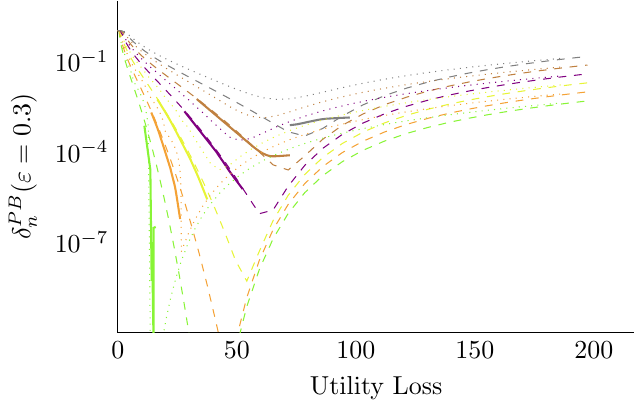}
	        \caption{Extended MA with $L_1$ utility-loss}
            \label{fig:optimality_graph:old:pb_renyi_markov_l1_uwd_false}
	    \end{subfigure}
	    \hfill\phantom{*}
	    \begin{subfigure}[c]{0.29\textwidth}
	        \includegraphics[width=1.0\textwidth]{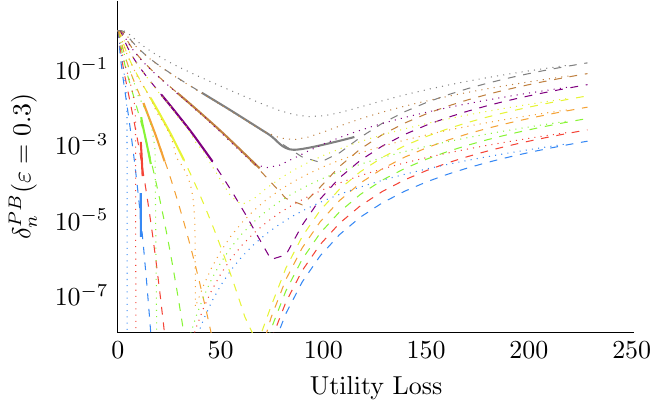}
	        \caption{PrivacyBuckets ADP with $L_2$ utility-loss}
            \label{fig:optimality_graph:old:pb_adp_l2_uwd_false}
	    \end{subfigure}
	    \begin{subfigure}[c]{0.29\textwidth}
    	    \includegraphics[width=1.0\textwidth]{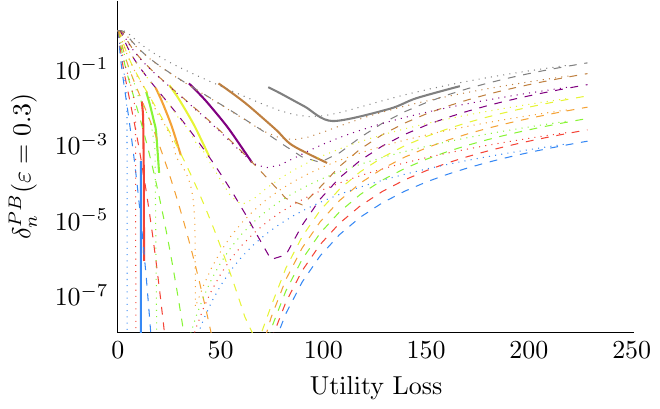}
	        \caption{PrivacyBuckets PDP with $L_2$ utility-loss}
            \label{fig:optimality_graph:pb_pdp_l2_uwd_false}
	    \end{subfigure}
	    \begin{subfigure}[c]{0.29\textwidth}
    	    \includegraphics[width=1.0\textwidth]{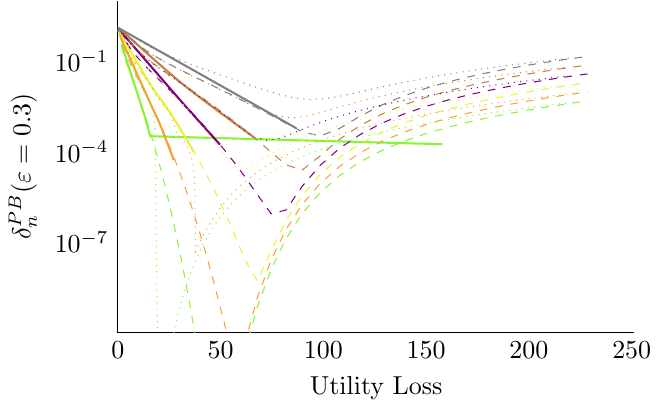}
	        \caption{Extended MA with $L_2$ utility-loss}
            \label{fig:optimality_graph:old:pb_renyi_markov_l2_uwd_false}
	    \end{subfigure}
	    \begin{subfigure}[c]{0.1\textwidth}\vspace{-10em}
    	    \includegraphics[width=1.0\textwidth]{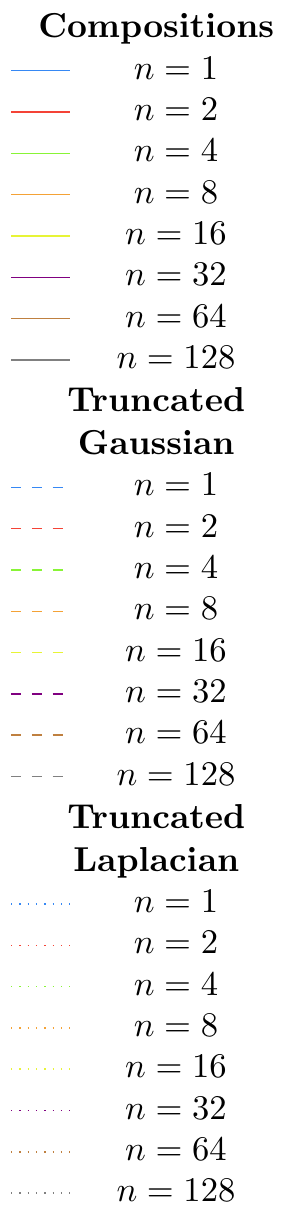}
	    \end{subfigure}
	    \begin{subfigure}[t]{0.29\textwidth}
	        \includegraphics[width=1.0\textwidth]{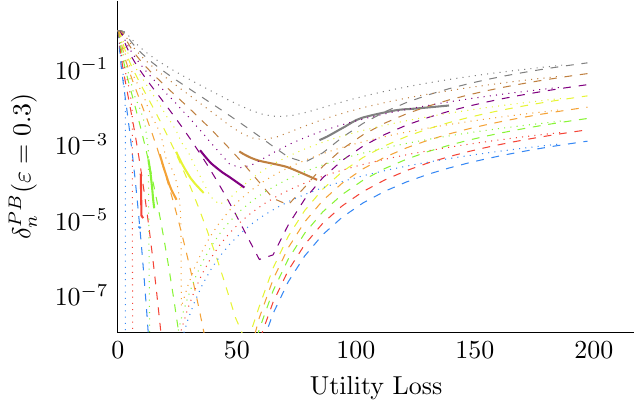}
	        \caption{PrivacyBuckets ADP with $L_1$ utility-loss and UWD}
            \label{fig:optimality_graph:old:pb_adp_l1_uwd_true}
	    \end{subfigure}
	    \begin{subfigure}[t]{0.29\textwidth}
            \includegraphics[width=1.0\textwidth]{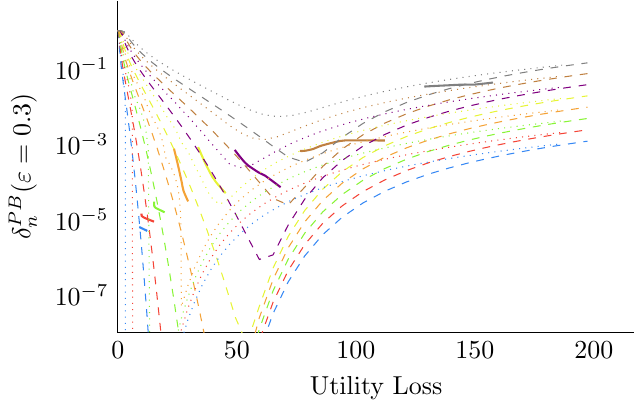}
	        \caption{PrivacyBuckets PDP with $L_1$ utility-loss and UWD}
            \label{fig:optimality_graph:pb_pdp_l1_uwd_true}
	    \end{subfigure}
	    \begin{subfigure}[t]{0.29\textwidth}
     	    \includegraphics[width=1.0\textwidth]{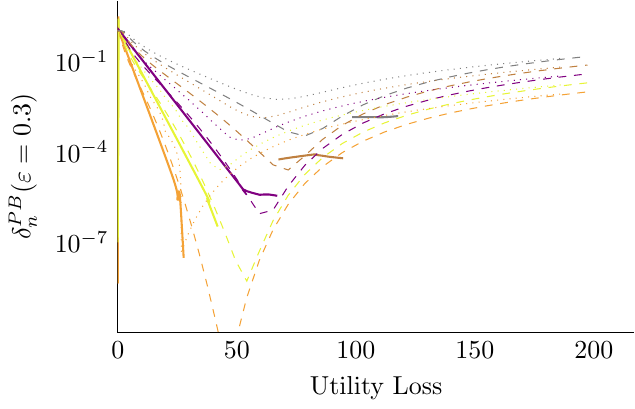}
	       \caption{Extended MA with $L_1$ utility-loss and UWD}
           \label{fig:optimality_graph:old:pb_renyi_markov_l1_uwd_true}
	   \end{subfigure}
	    \hfill \phantom{*}
	    \begin{subfigure}[t]{0.29\textwidth}
	        \includegraphics[width=1.0\textwidth]{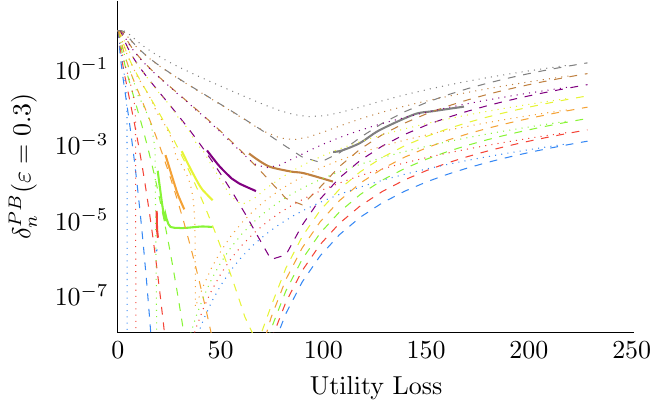}
	        \caption{PrivacyBuckets ADP with $L_2$ utility-loss and UWD}
            \label{fig:optimality_graph:old:pb_adp_l2_uwd_true}
	    \end{subfigure}
	    \begin{subfigure}[t]{0.29\textwidth}
     	    \includegraphics[width=1.0\textwidth]{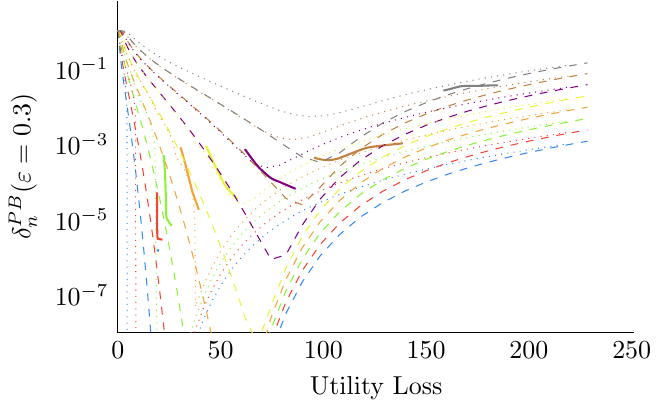}
	        \caption{PrivacyBuckets PDP with $L_2$ utility-loss and UWD}
            \label{fig:optimality_graph:pb_pdp_l2_uwd_true}
	    \end{subfigure}
	    \begin{subfigure}[t]{0.29\textwidth}
     	    \includegraphics[width=1.0\textwidth]{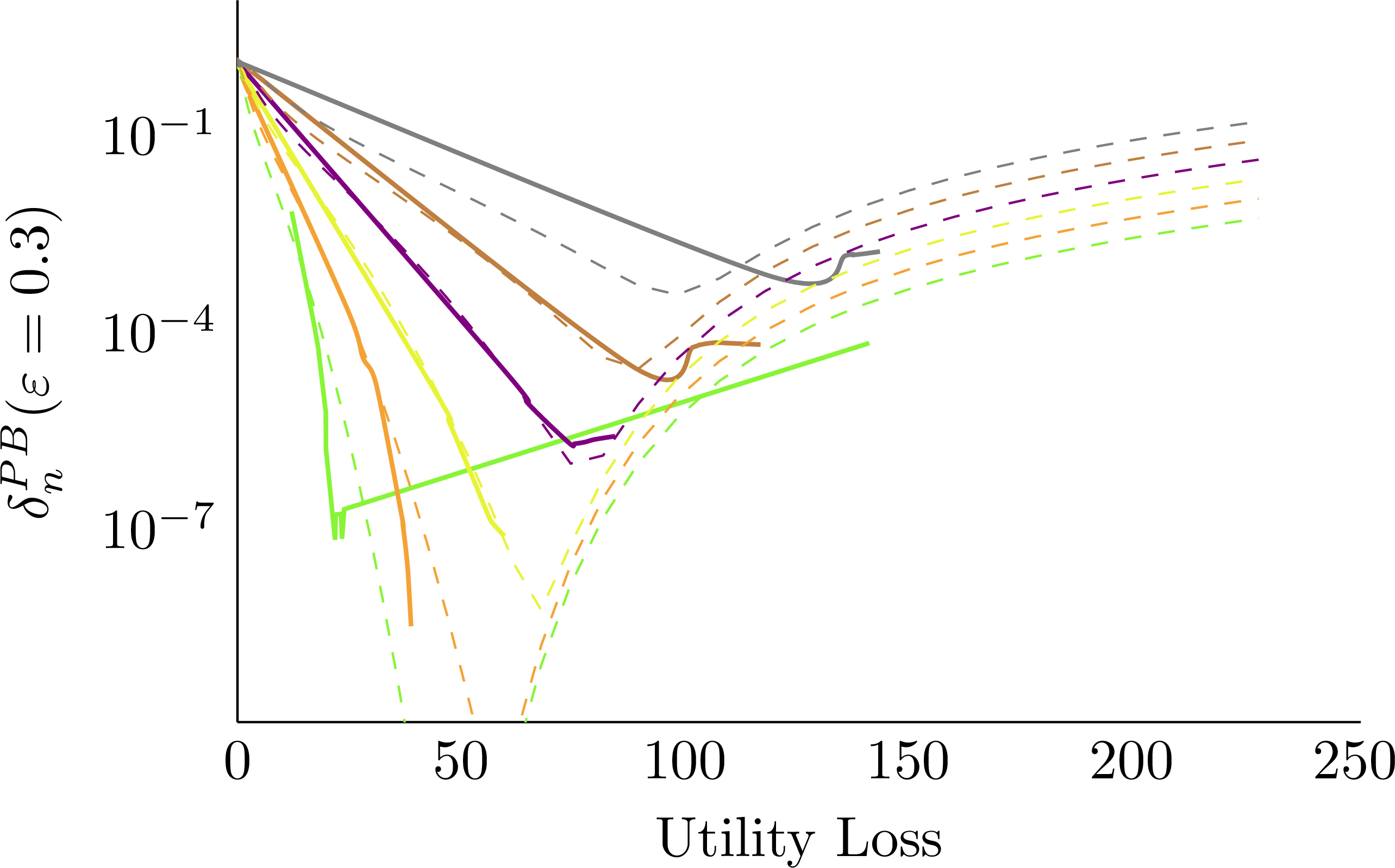}
	        \caption{Extended MA with $L_2$ utility-loss and UWD}
            \label{fig:optimality_graph:old:pb_renyi_markov_l2_uwd_true}
	    \end{subfigure}
	    \hfill \phantom{*}
        \caption{Utility-loss vs. privacy leakage for sensitivity-bounded queries (sensitivity $s\!=\!1$) and multiple number of compositions $n$. PrivacyBuckets-ADP $l^{ADP}$\!, PrivacyBuckets-PDP $l^{PDP}$\!, and Extended Moments Accountant (MA) $l^{MA}$\!, with and without utility-weight decay (UWD), are compared to the truncated Gaussian mechanism. Note that we have not proven \redemph{Shift Invariance} (\cref{thm:shift-invariance}) for the PDP-case but still apply it here. For comparative reasons, we show for $l^{PDP}$ an ADP-bound despite the PDP-formulation used during optimization. We removed Extended Moments Accountant (MA) with $n\in \{1,2,4\}$ due to numerical instability. 
        We also excluded data-Tuples with collapsing noise functions (producing $\delta \approx 1$). $\eps=0.3$.
        }
	    \label{fig:optimality_graph:old}
    \end{figure*} 
    \begin{figure*}[htb!]
	    \centering
	    \begin{subfigure}[t]{0.42\textwidth}
	        \includegraphics[width=1.0\textwidth]{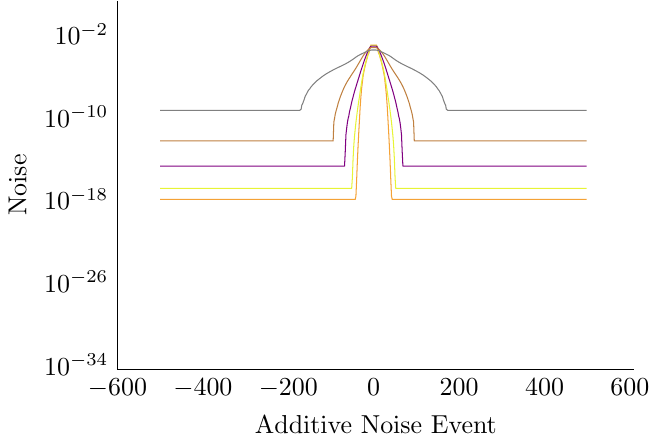}
	        \caption{PrivacyBuckets ADP trained on $l_{A,B}^{ADP}$ for multiple number of compositions.}
            \label{fig:dp_sgd:pb_adp_a_b}
	    \end{subfigure}
	    \begin{subfigure}[t]{0.42\textwidth}
	        \includegraphics[width=1.0\textwidth]{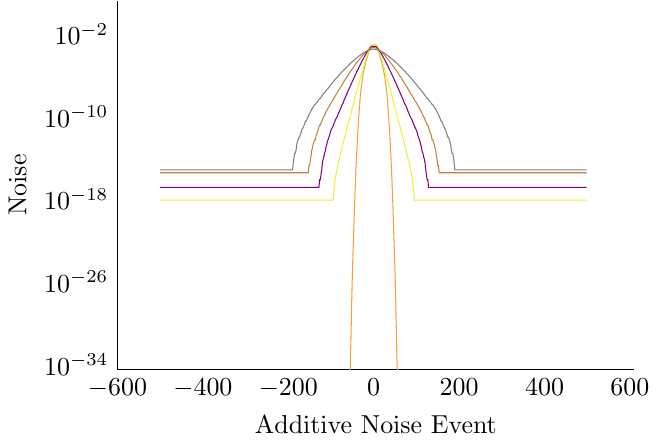}
	        \caption{MA trained on $l_{A,B}^{MA}$ for multiple number of compositions.}
            \label{fig:dp_sgd:ma_a_b}
	    \end{subfigure}
	    \begin{subfigure}[t]{0.12\textwidth}\vspace{-13em}
    	    \includegraphics[width=0.95\textwidth]{images/texfiles/legend/legend_dp_sgd}
	    \end{subfigure}
	    \hfill\phantom{*}
	    \begin{subfigure}[t]{0.42\textwidth}
	        \includegraphics[width=1.0\textwidth]{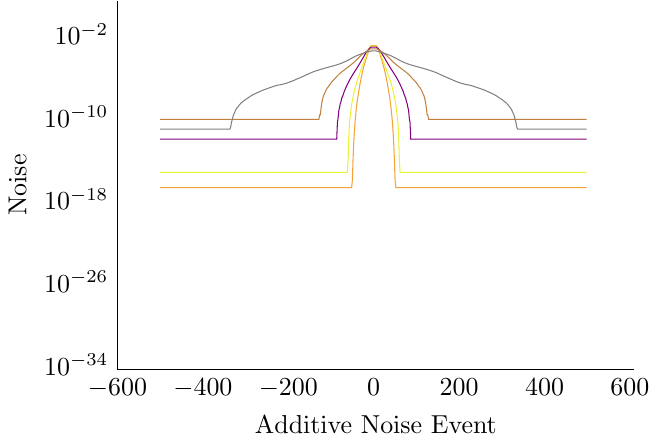}
	        \caption{PrivacyBuckets PDP trained on $l_{A,B}^{PDP}$ for multiple number of compositions.}
            \label{fig:dp_sgd:pb_pdp_a_b}
	    \end{subfigure}
	    \begin{subfigure}[t]{0.42\textwidth}
	        \includegraphics[width=1.0\textwidth]{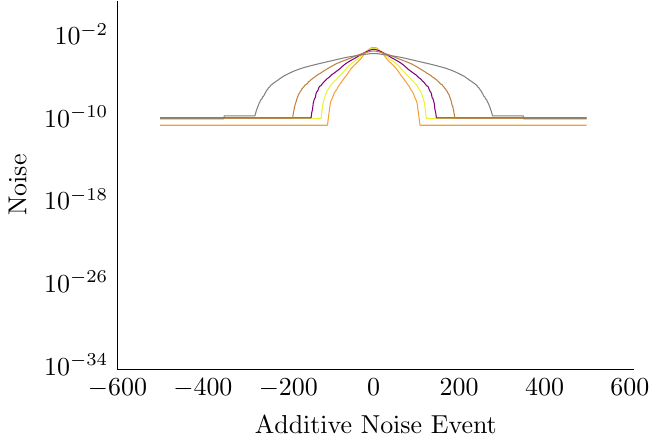}
	        \caption{PrivacyBuckets PDP trained on $l_{B,A}^{PDP}$ for multiple number of compositions. $q=0.1$, $C=1$, $\eps=0.3$}
            \label{fig:dp_sgd:pb_pdp_b_a}
	    \end{subfigure}
	    \hfill\phantom{*}
	    \caption{Additional generated noise $p$ for multiple number of compositions $n$ for completeness. Note that we have not proven \redemph{Shift Invariance} (\cref{thm:shift-invariance}) for the PDP-case but still apply it here.}
        \label{fig:dp_sgd_appendix}
    \end{figure*}

\begin{figure}[htb!]
    \centering
    \begin{subfigure}[t]{0.48\textwidth}
        \phantom{**************}
        \includegraphics[width=0.6\textwidth]{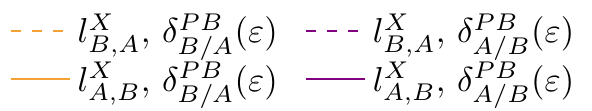}
    \end{subfigure}
    \begin{subfigure}[t]{0.48\textwidth}
        \includegraphics[width=1.0\textwidth]{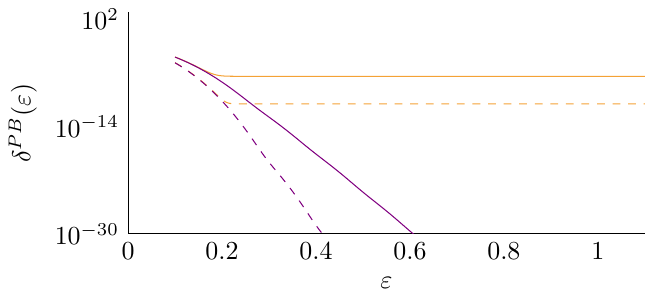}
        \caption{$l^{ADP}$}
    \end{subfigure}
    \begin{subfigure}[t]{0.48\textwidth}
        \includegraphics[width=1.0\textwidth]{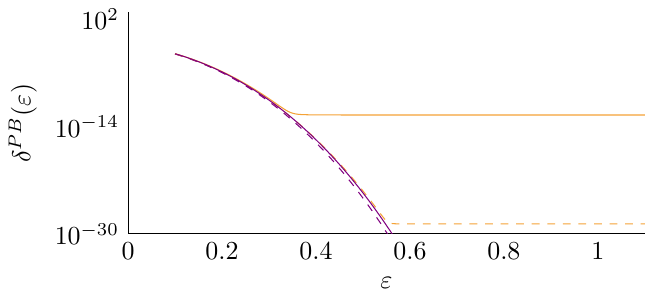}
        \caption{$l^{MA}$}
    \end{subfigure}
    \caption{$\delta^{PB}(\eps)$-graphs for generated DP-SGD worst-case distributions (\cref{eq:DP-SGD-optimzer-worst-case-distribtutions}) and $n=128$ compositions. For $X\in\{MA,ADP\}$, the preferable distribution is generated by $l_{B,A}^X$, resulting in a lower $\delta^{PB}$ than $l_{A,B}^X$. The final $\delta^{PB}$ is $\max(\delta_{A,B}^{PB}, \delta_{B,A}^{PB})$ with $p$ found according to $l_{B,A}$. The noise $p$ was generated by $\eps=0.3$, $q=0.1$, $C=1$.}
    \label{fig:delta-eps-graph-DP-SGD}
\end{figure}

\begin{figure}
	  \begin{subfigure}[t]{0.69\columnwidth}
	       \includegraphics[width=1.0\textwidth]{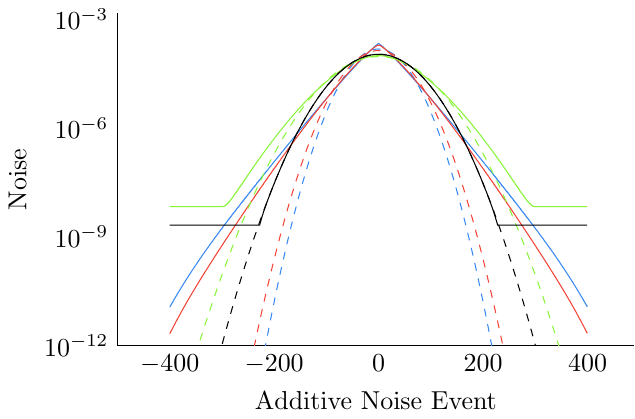}
	    \end{subfigure}
	    \begin{subfigure}[t]{0.3\columnwidth}
        \includegraphics[width=0.8\columnwidth]{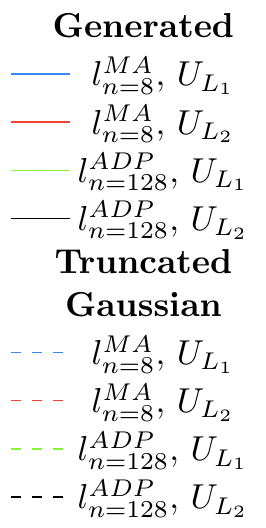}
	    \end{subfigure}
	    \caption{Selected generated noise distributions for $n$=$8,128$ with corresponding truncated Gaussians producing equal utility-loss. Illustrates the closeness for $l^{ADP}_{n\!=\!128}$.}\label{fig:optimalit:showcase}
\end{figure}

\new{
\section{CNN Model}\label{appendix:cnn_model}
    We also evaluated a noise model based on neural network design elements, namely linear-layers and 1D-convolutions, commonly referred to as \textit{CNN} model. However, we discovered significant numerical instabilities leading to overly high $\delta^{PB}$\!, especially but not limited to when we tried to enforce monotonicity. Therefore, we use this model to illustrate non-monotonic noise without the requirement to produce a low $\delta^{PB}$\!. Similarly to \cref{eq:model}, we evaluate the model by  considering only the points on the negative x-axis. For weight matrices $G$ and $H$ with bias vectors $I$ and $J$, filters $K_t$ ($t \in [0,10)$), $\circledast$ denoting convolution, and $i\in\{1,\ldots N\}$ and $j\in\{1,\ldots 100\}$ describing layer-widths, we obtain $p$ as follows:
    \newcommand{\CNNDenseONE}{\ensuremath{G}}
    \newcommand{\CNNDenseTWO}{\ensuremath{H}}
    \newcommand{\CNNBiasONE}{\ensuremath{I}}
    \newcommand{\CNNBiasTWO}{\ensuremath{J}}
    \newcommand{\CNNConvolution}{\ensuremath{K}}
        \begin{alignat*}{3}
            r_{j,0} &= x\cdot \CNNDenseONE^T + \CNNBiasONE_j && \quad\text{// linear layer}\\
            r_{j,t+1} &= r_{j,t} \circledast \CNNConvolution_t \text{ for } t\in [0,10) && \quad\text{// 10 conv layers}\\
            v_{i} &= r_{10}\cdot \CNNDenseTWO^T 
            + \CNNBiasTWO_i && \quad\text{// linear layer}\\
            p_i &= \frac{1}{2}\SoftMax\left(v_{i}; \{v_{0}, \ldots, v_{N}\}\right)&&\quad\text{// $1^{\text{st}}$\! noise half}\\
            p_k &= p_{2N-k+1}\quad\text{for } k\in\{N\!+\!1,\ldots,&& 2N\}  \,\,\,\text{// $2^{\text{nd}}$\! half}
        \end{alignat*}
}

\section{Extended Moments Accountant (Full Proof)}\label{sec:full-extended-MA}
The following theorem generalizes Moments Accountant by Abadi et al.~\cite{abadi2016deeplearning} to mechanisms producing distinguishing events. This extension, however, comes at a formal cost: we cannot simply include adaptive mechanisms by taking the largest moment $\alpha(\lambda)$ for a given $\lambda$ to capture the maximal privacy-loss, as we need to ensure a second value, the probability mass of distinguishing events ($\bucketSymbol_\infty$), to be bounded simultaneously. Moreover, these two values, the maximal $\alpha(\lambda)$ and the maximal $\bucketSymbol_\infty$ need to be produced by a single pair of worst-case output distributions $M(D_0)$ and $M(D_1)$ such that no inputs $D, D'$, and $aux$ to the mechanism lead to more privacy leakage. These output distributions may come from a worst-case input triple $(D_0, D_1, aux)$ but can also be defined freely, as long as they are worst-case.
In literature, many mechanisms implicitly assume such worst-case inputs, for example, by defining a sensitivity. The work by Abadi et al. reduces its privacy analysis to worst-case output distribution and compares Gaussian noise vs. a Gaussian-mixture. We now introduce extended definitions, which are required to prove our generalized Moments Accountant. 

\begin{definition}[Worst-case inputs for MA]\label{def:worst-case-MA}
For any neighboring $D, D'\in \mathcal D$, mechanism $M$, auxiliary input $aux$
\begin{align*}
    c(o_i;M_i,aux,D,D')&\phantom{**********************}
\end{align*}
\begin{equation*}
\hspace{-0.9pt}\!=\!\begin{cases}
\infty &\hspace{-1.0em}\text{if }  \Pr[M_{1:i}(D;aux)\!=\!o_{1:i}]\! \neq\! 0\\
       &\hspace{-1.0em}\text{and }\Pr[M_{1:i}(D';aux)\!=\!o_{1:i}]\! =\! 0\\
\ln{\!\frac{\Pr[M_{1:i}(D;aux) = o_{1:k}]}{\Pr[M_{1:i}(D';aux)\!=\! o_{1:i}]}} & \hspace{-1.0em}\text{if }  \Pr[M_{1:i}(D;aux)\!=\! o_{1:i}]\! \neq\! 0\\
&\hspace{-1.0em}\text{and }\Pr[M_{1:i}(D';aux)\!=\!o_{1:i}]\! \neq\!0\\
-\infty & \hspace{-1.0em}\text{else }
\end{cases}
\end{equation*}
\begin{gather*}
\alpha_{M_i}(\lambda; aux, D,\! D') \!=\! \log \expectation{\mathrlap{o_i \sim M_i(D)}}[e^{\lambda c(o_i;M_i,aux,D,D')}\vert o_i\!\neq\!\infty, \text{-}\infty]\\
\bucket[\infty]{M_i,aux,D,D'}\! =\! \Pr_{\mathrlap{o\sim M_{i}(D)}}[c(o;M_i,aux,D,D') = \infty] 
\end{gather*}
Let $D_0,D_1$ be worst-case inputs such that the following is true for either $k=0$ or $k=1$:
\begin{align*}
    aux_k &\!=\! \argmax_{aux'} \alpha_{M_i}(\lambda; aux', D_k, D_{1\!-\!k})\\
    aux_k &\!=\! \argmax_{aux'} \bucketSymbol_{\hspace{-0.9pt}\infty}\hspace{-0.9pt}(M_i,aux',D_k,D_{1\!-\!k})\\
    \alpha_{M_i}(\lambda; aux_k, D_k, D_{1\!-\!k}) &\!\geq\! \max_{\hspace{1.8em}\mathclap{D,D',aux'}} \alpha_{M_i}(\lambda; aux', D, D')\\
    \bucket[\infty]{M_i,aux_k,D_k,D_{1\!-\!k}} &\!\geq\! \max_{\hspace{1.8em}\mathclap{D,D',aux'}} \bucket[\infty]{M_i,aux',D,D'}
\end{align*}
Further let 
\begin{align*}
    \alpha_{M_i}(\lambda; D_k, D_{1-k})\! =\! \alpha_{M_i}(\lambda; aux_k, D_k, D_{1-k})\\
    \bucket[\infty]{M_i, D_k,D_{1-k}}\! =\! \bucket[\infty]{M_i,aux_k,D_k,D_{1-k}}
\end{align*}
\end{definition}
Contrarily to the theorem by Abadi et al., we defined $\alpha_M(\lambda)$ to be the expectation over a normalised output distribution generated by $M(D_0)$: $\Pr[M(D_0;aux)\!=\!o\mid o\neq\infty,\minfty] =  \frac{\Pr[M(D;aux)\!=\!o]}{1-\bucket[\infty]{M,aux,D_0,D_1}}$ (the events with $o=\minfty$ do not occur). The privacy loss $c(o)$, however, is generated by non-normalised distributions. Also note that only one pair ($\alpha_{M_i}(\lambda; D_l, D_{1-l}),\bucket[\infty]{M_i, D_l,D_{1-l}}$)$_{i\in\{0,1\}}$ needs to dominate all other values. Which one, however, might depend on the required $\eps$. See \cref{fig:delta-eps-graph-DP-SGD} for an illustrative example.

Under the assumptions that the algorithm we consider allows such worst-case inputs or a reduction to output distributions that lead to an always higher $\alpha (\lambda)$ and $\bucketSymbol_\infty$, we can prove ADP-guarantees for such an algorithm. 
\begin{theorem}[Generalized Moments Accountant]\label{thm:apprixmate-rdp-composition}
Let the variables be defined as in \cref{def:worst-case-MA}. Then for $k\in\{0,1\}$, 
\begin{enumerate}
    \item \textbf{[Composability]} Suppose that a mechanism M consists of a sequence of adaptive mechanisms $M_1,\ldots,M_n$ where $M_i=\prod_{j=1}^{i-j} R_j\times \mathcal D\rightarrow R_i$. Then, for any $\lambda > 0$ 
    \begin{align*}
        \alpha_M(\lambda; D_k, D_{1-k}) &\!\leq\! \sum_i^n \alpha_{M_i}(\lambda; D_k, D_{1-k})\\[-6pt]
        \bucket[\infty]{M,D_k, D_{1-k}} &\!=\! 1 - \prod_i^n \left[1-\bucket[\infty]{M_i; D_k, D_{1-k}}\right]
    \end{align*}
    \item \textbf{[Tail Bound]} For any $\eps>0$, $M$ is $(\eps, \delta)$-differentially private for $\delta = \max(\delta_{D_0, D_1}, \delta_{D_1, D_0})$ with 
    \begin{align*}
        \delta_{D_k, D_{1-k}} =&\, \bucket[\infty]{M,D_k, D_{1-k}} \\
        &+ \min_\lambda \left(1\!-\!\bucket[\infty]{M,D_k, D_{1-k}}\right)\\[-6pt]
        &\hphantom{\,\,+ \min_\lambda}\,\cdot \exp{ \left(\alpha_M(\lambda;D_k, D_{1-k} )\!-\! \lambda\cdot\eps\right)}
    \end{align*}
\end{enumerate}
\end{theorem}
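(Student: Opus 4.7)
The plan is to prove the two parts in turn, following the skeleton of the original Moments Accountant proof by Abadi et al.\ but with the probability mass of distinguishing events carried as a separate book-keeping quantity throughout.

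For \textbf{composability}, I would treat the distinguishing-event part and the finite-moment part in parallel. Under the worst-case choice of $(D_k, D_{1-k})$ from \cref{def:worst-case-MA}, a realized trajectory $o_{1:n}$ of the composed mechanism is distinguishing iff at least one coordinate $o_i$ is distinguishing for $M_i$ given auxiliary input $o_{1:i-1}$. Conditioned on $o_{1:i-1}$, the worst-case dominance property upper-bounds the conditional probability of coordinate $i$ being distinguishing by $\bucket[\infty]{M_i, D_k, D_{1-k}}$ uniformly over the realized $aux$; iterating via the tower property of expectations gives the product formula for $1 - \bucket[\infty]{M, D_k, D_{1-k}}$. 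For the $\alpha$-bound, I decompose the composed privacy loss as $c(o_{1:n}) = \sum_{i=1}^n c(o_i; M_i, o_{1:i-1}, D_k, D_{1-k})$ on the event $B$ that no coordinate is distinguishing, so that $e^{\lambda c} = \prod_i e^{\lambda c_i}$. Taking conditional expectations from the outside in and applying the single-step worst-case bound $\alpha_{M_i}(\lambda; aux, D, D') \leq \alpha_{M_i}(\lambda; D_k, D_{1-k})$ uniformly over $aux = o_{1:i-1}$ telescopes into the additive bound. The subtle point is that $\alpha_{M_i}$ is defined with respect to the \emph{renormalized} distribution conditioned on $o_i \neq \pm\infty$; this is consistent with the joint conditioning on $B$ because the non-distinguishing events at each step factor in the same way that the products above do, so conditioning per-step and conditioning on $B$ agree.

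For the \textbf{tail bound}, I would decompose $\Pr[M(D_k) \in S]$ according to whether the composed privacy loss $c$ is infinite, exceeds $\eps$, or is at most $\eps$:
\[
\Pr[M(D_k) \in S] \leq \Pr[M(D_k) \in S,\, c \leq \eps] + \Pr[c = \infty] + \Pr[\eps < c < \infty].
\]
The first term is bounded by $e^\eps \Pr[M(D_{1-k}) \in S]$ directly from the definition of the privacy loss, since on $\{c \leq \eps\}$ the pointwise density ratio is at most $e^\eps$. The second term equals $\bucket[\infty]{M, D_k, D_{1-k}}$ by construction. For the third term, condition on $B$ and apply Markov's inequality to $e^{\lambda c}$ under the renormalized measure, yielding $\Pr[\eps < c < \infty] \leq (1 - \bucket[\infty]{M, D_k, D_{1-k}}) \cdot \exp(\alpha_M(\lambda; D_k, D_{1-k}) - \lambda \eps)$. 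Minimizing over $\lambda$ and then taking the max over $k \in \{0,1\}$ to cover both orientations gives exactly the claimed $\delta$.

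The main obstacle I anticipate is justifying that the single-step worst-case guarantee in \cref{def:worst-case-MA}, which is stated uniformly over $aux$, truly transfers to the adaptive setting where the realized $aux$ at step $i$ is itself random and correlated with previous coin flips of $M$. The clean resolution is induction on $n$: the inductive hypothesis on $M_{1:n-1}$ delivers the composed $\alpha$ and $\bucket[\infty]$ for the first $n-1$ steps, and the inductive step conditions on $o_{1:n-1}$ and invokes the pointwise worst-case bound on $M_n$ to add the $n$-th term without loss, advancing both quantities in lockstep. A secondary concern is ensuring that the same $k \in \{0,1\}$ that dominates at every step suffices for the composed bound; since \cref{def:worst-case-MA} permits $k$ to depend on $\lambda$ but not on the step $i$, the induction preserves this by simply fixing $k$ once $\lambda$ is fixed before minimizing in the tail-bound step.
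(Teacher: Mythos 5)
Your tail-bound argument is essentially the paper's: split $\Pr[M(D_k)\in S]$ on the set where the loss exceeds $\eps$, identify $\Pr[c=\infty]$ with $\omega_\infty$, apply Markov's inequality to $e^{\lambda c}$ under the distribution renormalized to $c\neq\infty,\minfty$, and finish with the standard ADP set-decomposition and the maximum over $k$. That part is sound and matches the paper step for step.

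The gap is in your composability argument, at exactly the point you flag and then wave away. You claim that because $\alpha_{M_i}$ is defined w.r.t.\ the renormalized (non-distinguishing) distribution, "conditioning per-step and conditioning on $B$ agree" and the per-step worst-case bounds of \cref{def:worst-case-MA} telescope through the adaptive conditioning on $o_{1:i-1}$. But \cref{def:worst-case-MA} only gives you \emph{separate} dominance of $\alpha_{M_i}(\lambda;aux,D,D')$ and of $\omega_\infty(M_i,aux,D,D')$ by the worst-case pair. What the adaptive telescoping actually requires is dominance of the unnormalized finite-loss moment $\bigl(1-\omega_\infty(aux)\bigr)\exp\bigl(\alpha(aux)\bigr)$ for every realized $aux$, and separate dominance of the two factors does not imply this: an $aux$ attaining the maximal $\alpha$ but a strictly smaller $\omega_\infty$ makes this product \emph{larger} than its worst-case value, so the bound can fail in the direction you need. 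Relatedly, your tower-property argument can only deliver an inequality for the composed $\omega_\infty$, whereas the theorem asserts equality — a sign that the intended object is not the adaptively composed mechanism but the $n$-fold \emph{independent} composition of the fixed worst-case output-distribution pair, which is how the paper proceeds: it invokes the worst-case reduction to treat the steps as independent copies of $(M(D_k),M(D_{1-k}))$ (so the product measure makes both composed formulas exact), enumerates the nine combinations of per-step loss categories $\{\infty,\text{finite},\minfty\}$, discards those containing $\minfty$ (zero mass, Sommer et al.), cites Abadi et al.\ for the finite$\times$finite moment bound, and shows every combination containing an $\infty$ lands in the composed $\omega_\infty$. If you instead assume the paper's independent worst-case setting, your factorization claim becomes trivially true but your argument is no longer a genuinely adaptive one; as written, deriving the adaptive statement from \cref{def:worst-case-MA} alone is unproven and, with only those hypotheses, not salvageable without an additional assumption coupling $\alpha$ and $\omega_\infty$ per step.
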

\begin{proof}[\Cref{thm:apprixmate-rdp-composition}]
This proof is inspired by the proof of Theorem 2 in \cite{abadi2016deeplearning} with considerations from Theorem 1 in~\cite{sommer2019privacy}.
\textbf{Composability:} 
For $k\in\{0,1\}$, there are three sets an output of $M_i(D_k)$ can occur:
\begin{align*}
    U_{M_i}^\infty &= \left\{o_i\mid c(o_i;M_i,aux_k,D_k,D_{1-k}) = \infty \right\}\\
    U_{M_i}^+ &= \left\{o_i\mid c(o_i;M_i,aux_k,D_k,D_{1-k}) \notin \{\infty,\infty\} \right\}\\
    U_{M_i}^\minfty &= \left\{o_i\mid c(o_i;M_i,aux_k,D_k,D_{1-k}) = \minfty \right\}
\end{align*}
These sets might differ depending on $i$. We now consider only the composition of $M_i$ and $M_{i+1}$. There, outputs $o_i$ and $o_{i+1}$ each originating from one of their individual three sets, resulting in 9 possible combinations: 
$U_{M_i}^\infty \times U_{M_{i+1}}^\infty$, $U_{M_i}^\infty \times U_{M_{i+1}}^+$, $U_{M_i}^\infty \times U_{M_{i+1}}^\minfty$,
$U_{M_i}^+ \times U_{M_{i+1}}^\infty$,  $U_{M_i}^+ \times U_{M_{i+1}}^+$, $U_{M_i}^+ \times U_{M_{i+1}}^\minfty$, 
$U_{M_i}^\minfty \times U_{M_{i+1}}^\infty$, $U_{M_i}^\minfty \times U_{M_{i+1}}^+$, $U_{M_i}^\minfty \times U_{M_{i+1}}^\minfty$.
Sommer et al.~\cite{sommer2019privacy} have shown in Theorem 1 that all sets including a privacy loss of $\minfty$ have zero occurrence probability and can be neglected~\cite{sommer2019privacy}. Moreover, Abadi et al. have already shown the case $U_{M_i}^+ \times U_{M_{i+1}}^+$:
\begin{equation*}
    \alpha_{M_{i+1} \circ M_i}(\lambda) \leq \alpha_{M_i}(\lambda) + \alpha_{M_{i+1}}(\lambda)
\end{equation*}
Their proof requires only independence of output events in the analysed worst-case setting, which is valid for us as well. 

We show now that the combinations $U_{M_i}^\infty \times U_{M_{i+1}}^\infty$, $U_{M_i}^\infty \times U_{M_{i+1}}^+$, and $U_{M_i}^+ \times U_{M_{i+1}}^\infty$ all contribute solely to $\bucketSymbol_\infty$. The case $U_{M_i}^\infty \times U_{M_{i+1}}^\infty$ was already shown by~\cite{sommer2019privacy} (proof Theorem 1).
In the case of $U_{M_i}^\infty \times U_{M_{i+1}}^+$, we have by definition: $c(o_i;M_i,aux_l,D_l,D_{1-l}) = \infty$ and $c(o_{i+1};M_{i+1},aux_k,D_k,D_{1-k}) \neq \minfty$. It follows
\begin{align*}
&c(o_i;aux_k,D_k,D_{1-k}) = \infty \\
&\Longrightarrow \hphantom{\text{an}}\quad \Pr[M_{i}(D_k)\!=\!o_{i}|aux_k] \!\neq \!0\\
&\hphantom{\Longrightarrow}\text{and}\quad \Pr[M_{i}(D_{k-1})\!=\!o_{i}|aux_k] \!= \!0\\
\hspace{-4.3em}\makebox[0pt][l]{\text{and}}\hspace{4.3em}
&c(o_{i+1};M_{i+1},aux_k,D_k,D_{1-k}) \neq \minfty \\
&\Longrightarrow \pr{M_{i+1}(D_k)=o_{i+1}|aux_k} \neq 0
\end{align*}

We show now $c(o_{i:i+1};M_{i:i+1},aux_k,D_k,D_{1-k}) = \infty$:
\begin{align*}
&c(o_i;M_i,aux_k,D_k,D_{1-k}) = \infty \\
&\Longrightarrow\hphantom{\text{an}}\quad \Pr[M_{i}(D_k)\!=\!o_{i}|aux_k]\!\neq\! 0 \\
&\hphantom{\Longrightarrow}\text{and}\quad \Pr[M_{i}(D_{k\!-\!1})\!=\!o_{i}|aux_k]\!= \!0\\
&\Longrightarrow\hphantom{\text{an}}\quad  \pr{M_{i}(D_k)=o_{i}|aux_k} \neq 0 \\
&\hphantom{\Longrightarrow}\text{and}\quad  \prod_{j=i}^{i+1}\pr{M_{j}(D_{k-1})=o_{k}|aux_k} = 0\\
&\stackrel{(i)}{\Longrightarrow}\hphantom{\text{an}}\quad \prod_{j=i}^{i+1}\pr{M_{j}(D_k)=o_{k}|aux_k} \neq 0 \\
&\hphantom{\Longrightarrow}\text{and}\quad  \prod_{j=i}^{i+1}\pr{M_{j}(D_{k-1})=o_{k}|aux_k} = 0\\
&\Longrightarrow\hphantom{\text{an}}\quad  \Pr[M_{i:i+1}(D_k;aux_k)\!=\!o_{i:i+1}] \!\neq \!0 \!\\
&\hphantom{\Longrightarrow}\text{and}\quad \Pr[M_{i:i+1}(D_{k\!-\!1};aux_k)\!=\!o_{i:i+1}]\! =\! 0\\
&\Longrightarrow\hphantom{\text{an}}\quad  c(o_{i:i+1};M_{i:i+1},aux_k,D_k,D_{1-k}) = \infty
\end{align*}
where we used (i) $\pr{M_{i+1}(D_k)=o_{i+1}|aux_k} \neq 0$. The claim for $U_{M_i}^+ \times U_{M_{i+1}}^\infty$ follows analogously to $U_{M_i}^\infty \times U_{M_{i+1}}^+$ by switching indices. 
The statement of composability follows by induction.

\textbf{Tail-Bound:} For a fixed $k\in\{0,1\}$, we show 
$\delta_{D_k,D_{k-1}}= \min_{\lambda} (\bucketSymbol_\infty) + (1-\bucketSymbol_\infty))\cdot\exp(\alpha_M(\lambda) - \lambda \eps))$.
In accordance with \cite{sommer2019privacy}, we consider a privacy loss $c(o) = \infty$ to be larger than any $\epsilon\in \mathbb{R}$. Then we have, 
\begin{align*}
&\Pr_{o \sim M(D_k)}[c(o) \geq \eps] \\
&= (1-\Pr_{\mathclap{o \sim M(D_k)}}[c(o)= \infty])\cdot \Pr_{\mathclap{o \sim M(D_k)}}[c(o) \geq \eps|c(o) \neq \infty]\\
&\quad + \Pr_{\mathclap{o \sim M(D_k)}}[c(o)= \infty]\cdot \Pr_{\mathclap{o \sim M(D_k)}}[c(o) \geq \eps|c(o) = \infty] \\
&= (1-\bucketSymbol_\infty)\cdot \Pr_{\mathclap{o \sim M(D_k)}}[c(o) \geq \eps|c(o) \neq \infty] + \bucketSymbol_\infty \\
&= (1\!-\!\bucketSymbol_\infty)\cdot \Pr_{\mathclap{o \sim M(D_k)}}[\exp (\lambda c(o)\! \geq\! \exp (\lambda \eps)|c(o)\! \neq\! \infty])\! +\! \bucketSymbol_\infty\\
&\stackrel{MI}{\leq} (1-\bucketSymbol_\infty)\frac{E_{o \sim M(D_k)}[\exp (\lambda c(o)))|c(o) \neq \infty]}{\exp(\lambda \eps)} + \bucketSymbol_\infty\\
&\stackrel{\substack{\alpha_M(\lambda)\\= \max \alpha_M(\lambda; aux, D_k, D_{1-k})}}{\leq} (1\!-\!\bucketSymbol_\infty)\exp(\alpha_M(\lambda) - \lambda \eps) + \bucketSymbol_\infty
\end{align*}
Where we used Markov's inequality (MI). 
After this we can use a similar ADP reduction as seen in~\cite{abadi2016deeplearning}.
\begin{align*}
    \intertext{Let $B=\{o: c(o) \geq \epsilon\}$. Then for all $S$.}
    &\Pr[M(D)\! \in \!S]\! = \!\Pr[M(D')\!\in\! S \cap B^{C}]\! + \!\Pr[M(D)\! \in S \!\cap B]\\
    &\leq e^\epsilon \cdot\Pr[M(D')\!\in\! S \cap B^{C}]\! +\! \Pr[M(D)\!\in\! B]\\
    &\leq e^\epsilon\cdot \Pr[M(D')\!\in\! S]\! +\! (1\!-\!\bucketSymbol_\infty)\exp(\alpha_M(\lambda)\! - \!\lambda \eps)\! +\! \bucketSymbol_\infty
\end{align*}
The claim follows as the result is valid for all $\lambda$. 
Finally, we take the maximum $\delta = \max(\delta_{D_0, D_1}, \delta_{D_1, D_0})$, fulfilling ADP-guarantees for $aux_k, D_0$, and $D_1$. For a given $\epsilon$, we argue that for any inputs $D,D',aux$ to $M$, the resulting privacy-loss is smaller or equal to $\delta$ because by definition we have chosen the worst-case inputs such that $\forall \lambda$ no other inputs lead to a larger $\bucketSymbol_\infty$ or $\alpha(\lambda)$. As we have only used upper bounds or equality to express the impact on the privacy-loss, the ADP bound is guaranteed for any inputs $D,D',aux$ to $M$. 
\end{proof}

\begin{proof}[\Cref{lemma:delta-from-rdp}]
This is a special case of the more general \cref{thm:apprixmate-rdp-composition} for worst-case inputs $D_0, D_1$. Let $A(o)$=$\pr{o\!\leftarrow\! M(D_0)}$ and $B(o)$=$\pr{o\!\leftarrow\! M(D_1)}$.
\begin{align*}
\alpha_{M_i}(\lambda; aux, D_0, D_1) &= \log \sum_{o} \frac{A(o)}{1-\bucketSymbol_\infty} \exp\left( \lambda \log \frac{A(o)}{B(o)} \right) \\
&= \log \left[ \frac{1}{1-\bucketSymbol_\infty} \sum_{o} A(o) \left(\frac{A(o)}{B(o)}\right)^\lambda\right] \\
&=  \log \left[\frac{1}{1-\bucketSymbol_\infty}\right] + \Alphadiv{\lambda}{A}{B} 
\end{align*}
Applying $n$ compositions and the tail-bound of thm. \ref{thm:apprixmate-rdp-composition},
\begin{align*}
\delta_{D_k,D}&\vphantom{s}_{_{1\modifiedminus k}} \!\!=\!\bucketSymbol_{\infty,M}^{D_k,D_{1\modifiedminus k}}\! + \! \min_\lambda \left(1\!-\!\bucketSymbol_{\infty,M}^{D_k,D_{1\modifiedminus k}}\right)\!\cdot\! e^{ \left(\alpha_M^{D_k,D_{1\modifiedminus k}}(\lambda) \!- \!\lambda\cdot\eps\right)}\\
&=\!\bucketSymbol_{\infty,M}^{D_k,D_{1\modifiedminus k}}\! +\! \min_\lambda \left(1\!-\!\bucketSymbol_{\infty,M}^{D_k,D_{1\modifiedminus k}}\right)\!\cdot\! e^{ \left(n\cdot\alpha_{M_i}^{D_k,D_{1\modifiedminus k}}(\lambda) \!- \!\lambda\cdot\eps\right)}\\
&=\! \bucketSymbol_{\infty,M}^{D_k,D_{1\modifiedminus k}}\! + \min_\lambda \left(1\!-\!\bucketSymbol_{\infty,M}^{D_k,D_{1\modifiedminus k}}\right)\\
&\quad\cdot\! e^{ n\cdot\left( \log \left[\frac{1}{1-\bucketSymbol_\infty}\right] + \Alphadiv{\lambda}{A}{B}\right)  \!- \!\lambda\cdot\eps} \\
&=\! \bucketSymbol_{\infty,M}^{D_k,D_{1\modifiedminus k}}\! + \min_\lambda e^{ \left( n\cdot\Alphadiv{\lambda}{A}{B}  \!-\! \lambda\cdot\eps\right)} 
\end{align*}
because $\bucketSymbol_{\infty,M}^{D_k,D_{1\modifiedminus k}} = 1 - [1- \bucketSymbol_{\infty}]^n $. The claim follows.
\end{proof}

\section{Other Proofs}
 \begin{proof}[\Cref{lemma:symmetric-noise}]
 Define $\hat{p}(S)$ as follows: 
 $\forall$ measurable sets $S\subseteq R$: $\hat{p}(S) = \frac{p(S)+P(-S)}{2}$
This $\hat{p}$ is symmetric and as the Loss function $u$ is symmetric $p(S)$ and $\hat{p}$ have the same utility loss. Next we show that $\hat{p}$ also satisfies ($\epsilon$, $\delta$)-differential privacy:
\begin{align*}
    &|\hat{p}(S) - e^{\epsilon} \hat{p}(S+d)|\\
    =&|(\frac{p(S) - p(-S)}{2} - e^{\epsilon} \frac{p(S+d) + p(-S-d)}{2}|\\
    =&|\frac{p(S) - e^{\epsilon}p(S+d)}{2} - \frac{p(-S) - e^{\epsilon}p(-S-d)}{2}|\\
    \leq&|\frac{p(S) - e^{\epsilon}p(S+d)}{2}| + |\frac{p(-S) - e^{\epsilon}p(-S-d)}{2}|\\
    \leq& \frac{\delta}{2} + \frac{\delta}{2} =  \delta
\end{align*}
 similar to Geng et al.~\cite{geng2019pmlr}.
 \end{proof}
 
 \new{
\begin{proof}[\Cref{lemma:discretization-reduction}]
\newcommand{\vol}{\text{vol}}
\newcommand{\intd}{\text{d}}
\newcommand{\pdfp}[1]{\text{pdf}_p(#1)}
Let $\eps$ be fixed. Let $\nu = \min \{x_i - x_j | i,j\in\mathbb Z\}$ be the equidistant discretization step. We use $p(x)$ to denote the probability to sample a discrete $x\in X$ from $p$, i.e., the probability mass function.    
By definition, the continuous probability density function of $p$ has the shape of horizontal segments (plateaus), one for each $x\in X$ where that $x$ is in its center, and with width $\nu$:
\begin{equation*}
    \pdfp{q}\,dx = \sum_{x\in X} \indicatorfunc{q\in [x - \frac{\nu}{2}, x+\frac{\nu}{2})} \frac{p(x)}{\nu}\, dx
\end{equation*}
 It is sufficient to show that $\forall r, |r| \leq s$, and $\forall Q\subseteq R$ 
\begin{equation*}
     \pr[p]{Q} - e^\eps \pr[p]{Q+r} \leq \delta  
\end{equation*}
where we define $Q + r = \{q + r\,|\,q\in Q\}$. 

Let the the next lower and next higher integer multiple of $\nu$ of $r$ be defined by $r_l = \nu \cdot \lfloor \frac{r}{\nu} \rfloor$ and $ r_r = \nu \cdot \lfloor \frac{r}{\nu} + 1\rfloor$. 
For each $x\in X$, we define two subsets, depending on what segment (either $r_l$ or $r_r$ distant from $x$) the points $q$ residing on segment $x$ are shifted. 
\begin{align*}
    Q_l^x &= \{q\, |\, q -x  \in [-\frac{\nu}{2}, -\frac{\nu}{2} + r_r - r),\,q \in Q\}\\
    Q_r^x &= \{q\, |\, q-x  \in [-\frac{\nu}{2} + r_r - r, \frac{\nu}{2}),\,q \in Q\}
\end{align*}
Note that $\forall x\in X,\, \forall q,q' \in Q_l^x,\, \pdfp{q}$ = $\pdfp{q'}$ as all events on the same segment are distributed uniformly. Same for $Q_r^x$. We denote $Q_l = \bigcup_{x \in X} Q_l^x$ and $Q_r = \bigcup_{x \in X} Q_r^x$ and note that $Q = Q_l\cup Q_r$, and that all $Q_l^x$ and $Q_r^x$ are pairwise distinct. 

Let $\xi_l = \max_{x\in X} \vol(Q_l^x) / \nu \leq \vol([-\frac{\nu}{2}, -\frac{\nu}{2} + r_r - r)) / \nu$ and $\xi_r = \max_{x\in X} \vol(Q_r^x) / \nu \leq \vol([-\frac{\nu}{2} + r_r - r, \frac{\nu}{2})) / \nu$. By construction, $\xi_l + \xi_r \leq \vol([-\frac{\nu}{2}, \frac{\nu}{2})\,) / \nu = \frac{\nu}{\nu} = 1$.
Let $S_l = \{x | Q_l^x \neq \{\}, x\in X\}$ containing all relevant segment centers for $Q_l$. Similarly, let $S_r = \{x | Q_r^x \neq \{\}, x\in X\} = S_l + \nu$. Then,
\begin{align*}
    \pr{Q}& - e^\eps\, \pr{Q+r} \\
    =& \pr{Q_l} - e^\eps\, \pr{Q_l+r} + \pr{Q_r} - e^\eps\, \pr{Q_r+r}\\
    \stackrel{(i)}{=}& \sum_{x\in S_l} \int_{\mathrlap{Q_l^x}} \pdfp{q} - e^\eps\, \pdfp{q +r_l} \,\intd x\\
    &+ \sum_{x\in S_r} \int_{\mathrlap{Q_r^x}} \pdfp{q} - e^\eps\, \pdfp{q+r_r} \,\intd x\\
    \stackrel{(ii)}{\leq}&  \sum_{x\in S_l} \xi_l \cdot\int_{\mathrlap{[x - \frac{\nu}{2}, x+\frac{\nu}{2})}} \pdfp{q} - e^\eps\, \pdfp{q+r_l} \,\intd x\\
    &+ \sum_{x\in S_r} \xi_r \cdot\int_{\mathrlap{[x - \frac{\nu}{2}, x+\frac{\nu}{2})}} \pdfp{q} - e^\eps\, \pdfp{q+r_r} \,\intd x\\
    \stackrel{(iii)}{=}&  \xi_l \cdot \sum_{x\in S_l} p(x) - e^\eps\, p(x + r_l) \\
    &+ \xi_r \cdot \sum_{x\in S_r} p(x) - e^\eps\, p(x + r_r) \\
    \stackrel{(iv)}{\leq}& \xi_l\cdot \delta +  \xi_r \cdot \delta \,\,\,\leq \delta 
\end{align*}
with $(i)$ countability and absolute convergence for infinite sums is given as we consider well-defined probability densities always greater than $0$, $(ii)$ by construction, $(iii)$ applying definition of $p$, and $(iv)$ by initial assumption for any $S$. 
\end{proof}
}

    \begin{figure*}[t]
	    \centering
	    \begin{subfigure}[t]{0.41\textwidth}
	        \includegraphics[width=1.0\textwidth]{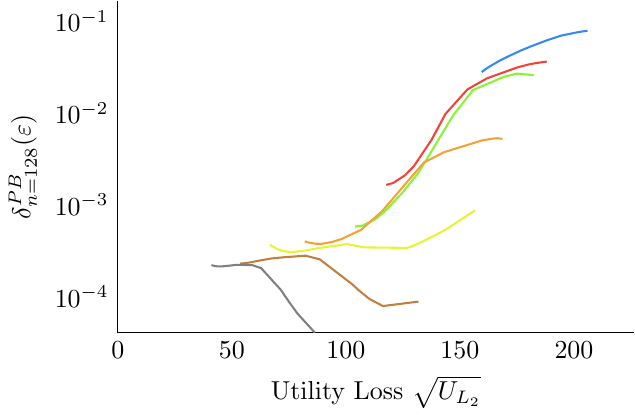}
           \caption{$L_2$ utility-loss vs. $\delta^{PB}_{n=128}$}
           \label{fig:different_eps:utility_delta}
	    \end{subfigure}
	    \begin{subfigure}[t]{0.41\textwidth}
    	    \includegraphics[width=1.0\textwidth]{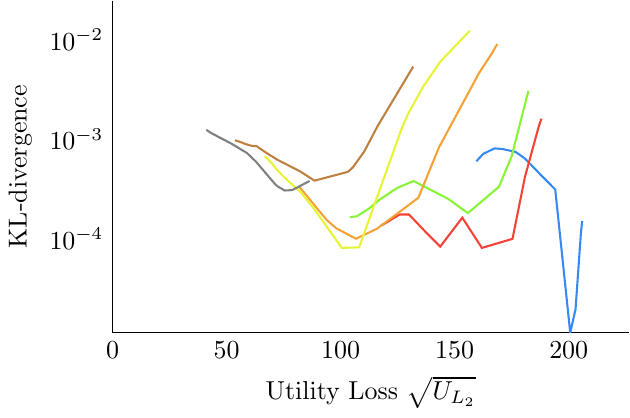}
            \caption{$D_{KL}(\text{\setstackgap{S}{1pt}\tiny\stackanchor{truncated}{Gaussian}}||\text{generated})$, decreasing with $\eps$.} 
            \label{fig:different_eps:kl}
	    \end{subfigure}
	    \begin{subfigure}[t]{0.15\textwidth}\vspace{-13em}
	        \includegraphics[width=1.0\textwidth]{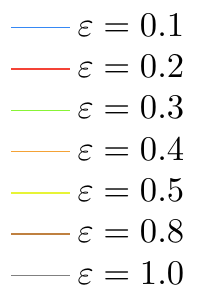}
	    \end{subfigure}
	    \caption{More detailed results for different $\eps$ $l^{ADP}_{n=128}\,U_{L_2}$, utility-weight decay applied.}
        \label{fig:different_eps}
	\end{figure*}

\begin{proof}[\Cref{thm:shift-invariance}]
We need to show that $\forall s'.\:\! -s \!\leq \!s' \!\leq\! s$ we have $\sum_{x\in S} p(x)\! \leq\! \delta \!+ e^\eps \sum_{x\in S} p(x+s)\!\Longrightarrow\!\sum_{x\in S} \!p(x) \!\leq\! \delta \!+\! e^\eps\! \sum_{x\in S} p(x\!+\!s')$. 
Since p is symmetric and the statement is trivially true for $s'=0$ we can limit ourselves to showing the statement $\forall s'.\: 0<s'\leq s$. We define the set $S' \subseteq S$ as the set of $x$ with $p(x) > e^\eps p(x+s)$ and rewrite our assumption to $\sum_{x\in S'} p(x) + \sum_{\substack{x \in S;\\x\notin S'}} p(x)  \! \leq\! \delta \!+ e^\eps \sum_{x\in S'} p(x+s) + e^\eps \sum_{\substack{x \in S;\\x\notin S'}} p(x+s)$.

\noindent As the statement must hold for any set $S$, specifically also for sets containing only one $x$, this can be split into two separate assumptions: $\forall x\in S\setminus S', \: p(x) \leq p(x+s)$ and $\sum_{x\in S'} p(x) \leq \delta \!+ e^\eps \sum_{x\in S'} p(x+s)$. 

\noindent Therefore, it is sufficient to show $\forall x\in S\setminus S', \: p(x) \leq p(x+s) \Longrightarrow p(x) \leq p(x+s')$ and $\sum_{x\in S'} (p(x) - e^\eps p(x+s)) \leq \delta \Longrightarrow \sum_{x \in S'} (p(x) - e^\eps p(x+s'))$.

\noindent First, we define the following statements for clarity

$A_1$: $p(x) \leq e^\eps p(x+s)$

$A_2$:  $x \notin S' \land 0<s' \leq s$ $\land$ p is symmetric  $\land$ p is a monotonously decreasing function from $p(0)$

B: $p(x) \leq e^\eps p(x+s')$

\noindent The first part of the proof shows that $\forall s, s', p, x. \: A_1 \land A_2 \Rightarrow B$. We prove this by contraposition, i.e., prooving that $\forall s, s', p, x. \: A_1 \land \neg B \Rightarrow \neg A_2$. We first show that the contraposition statement is equivalent to the required statement:
\begin{align*}
A_1 \land A_2 \Rightarrow B &\Leftrightarrow \neg (A_1 \land A_2) \Leftarrow \neg B& &|\text{De Morgan}\\
&\Leftrightarrow (\neg A_1 \lor \neg A_2) \lor B& &|\text{implication rule}\\
&\Leftrightarrow \neg A_2 \lor (\neg A_1 \lor B)\\
&\Leftrightarrow \neg A_2 \Leftarrow \neg(\neg A_1 \lor B)& &|\text{implication rule}\\
&\Leftrightarrow A_1 \land \neg B \Rightarrow \neg A_2
\end{align*}

\noindent Let $x\in \mathbb{R}$ and $x\notin S'$. We use a case distinction on $x$ to prove $p(x) \leq p(x+s) \Longrightarrow p(x) \leq p(x+s')$. Because of the symmetry assumption, we can always consider shifts in the negative direction as well.

\noindent \textbf{Case 1: $x+s' > 0$}.\\~\\
Since $x+s' > 0$ and $s \geq s'$, we have $x+s > 0$. 
Since $x+s \geq x+s'$, by monotonicity of $p$, we know that $p(x+s')\geq p(x+s)$. Furthermore, by design, we have $x-s\leq x-s'\leq x+s$. 
If $p(x-s')$ would be smaller than both $p(x-s)$ and $p(x+s)$, then $x-s'$ has to be either smaller than $x-s$ or larger than $x+s$ both of which is not possible. Thus, formulating this we get:
\begin{align*}
A_2 &\Rightarrow p(x+s') \geq p(x+s) \land\\
&\quad\left[p(x-s') \geq p(x+s) \lor p(x-s') \geq p(x-s)\right]
\end{align*}

\noindent Furthermore, we get the following expressions for $A_1$ and $\neg B$:\\
$A_1 \Leftrightarrow p(x) \leq e^\epsilon p(x+s) \land p(x) \leq e^\epsilon p(x-s)$\\
$\neg B \Leftrightarrow p(x) > e^\epsilon p(x+s') \lor p(x) > e^\epsilon p(x-s')$\\
We use the above defined statements to prove the contraposition:
\begin{alignat*}{3}
A_1 \land \neg B &\Rightarrow &&A_1 \land \left[p(x) > e^\epsilon p(x\!+\!s') \lor p(x) > e^\epsilon p(x\!-\!s')\right]&&\\
&\Rightarrow &&[\left\{p(x) \leq e^\epsilon p(x\!+\!s) \land p(x) \leq e^\epsilon p(x\!-\!s)\right\}\\
& &&\land p(x) > e^\epsilon p(x\!+\!s')]&&\\ 
& &&\lor [\left\{p(x) \leq e^\epsilon p(x\!+\!s) \land p(x) \leq e^\epsilon p(x\!-\!s)\right\}\\
& &&\land p(x) > e^\epsilon p(x\!-\!s')] &&\\
& \Rightarrow &&\left[p(x) \leq e^\epsilon p(x\!+\!s) \land p(x) > e^\epsilon p(x\!+\!s')\right] &&\\
& &&\lor [\left\{p(x) \leq e^\epsilon p(x\!+\!s) \land p(x) > e^\epsilon p(x\!-\!s')\right\}&&\\
& &&\land \left\{p(x) \leq e^\epsilon p(x\!-\!s) \land p(x) > e^\epsilon p(x\!-\!s')\right\}]&&\\
& \Rightarrow && p(x\!+\!s') < p(x\!+\!s)&&\\
& &&\lor \left[p(x\!-\!s') < p(x\!+\!s) \land p(x\!-\!s') < p(x\!-\!s)\right]\\
& \Rightarrow && \neg A_2
\end{alignat*}
\noindent \textbf{Case 2: $x+s' \leq 0$.}\\~\\
Since $x+s' \leq 0$ and $s, s' > 0$, we have $x, x-s \leq 0$. 
Since $x-s \leq x+s'$ as well as $x-s \leq x-s'$, by monotonicity of $p$, we know that $p(x+s')\geq p(x-s)$ as well as $p(x-s') \geq p(x-s)$.Thus, formulating this we get:
\begin{align*}
A_2 &\Rightarrow p(x+s') \geq p(x-s) \land p(x-s') \geq p(x-s)
\end{align*}

\noindent Furthermore, we get the following expressions for $A_1$ and $\neg B$:\\
$A_1 \Leftrightarrow p(x) \leq e^\epsilon p(x+s) \land p(x) \leq e^\epsilon p(x-s)$\\
$\neg B \Leftrightarrow p(x) > e^\epsilon p(x+s') \lor p(x) > e^\epsilon p(x-s')$\\
We use the above defined statements to prove the contraposition:
\begin{alignat*}{3}
A_1 \land \neg B &\Rightarrow &&\left[p(x) \leq e^\epsilon p(x+s) \land p(x) \leq e^\epsilon p(x-s)\right] \land \neg B&&\\
&\Rightarrow && p(x) \leq e^\epsilon p(x-s)\\
& &&\land \left[p(x) > e^\epsilon p(x+s') \lor p(x) > e^\epsilon p(x-s')\right]&&\\
&\Rightarrow && \left[p(x) \leq e^\epsilon p(x-s) \land p(x) > e^\epsilon p(x+s')\right]&&\\
& &&\lor \left[p(x) \leq e^\epsilon p(x-s) \land p(x) > e^\epsilon p(x-s')\right]&&\\
& \Rightarrow && p(x+s') < p(x-s) \lor p(x-s') < p(x-s)&&\\
& \Rightarrow && \neg A_2
\end{alignat*}

\noindent These two cases prove the sub-statement $\forall x\in S\setminus S'. \: p(x) \leq p(x+s) \Longrightarrow p(x) \leq p(x+s')$. Following this we show $\sum_{x\in S'} (p(x) - e^\eps p(x+s)) \leq \delta \Longrightarrow \sum_{x \in S'} (p(x) - e^\eps p(x+s'))\leq \delta$

\noindent By the monotonicity and symmetry assumption we can infer $\forall x\in S'. \: p(x+s) < p(x+s')$. It follows
\begin{align*}
    \Rightarrow& \sum_{x\in S'}  e^\eps p(x+s) \leq \sum_{x \in S'} e^\eps p(x+s')\\
    \Rightarrow& \sum_{x\in S'} (p(x) - e^\eps p(x+s)) \geq \sum_{x \in S'} (p(x) - e^\eps p(x+s'))\\
    \Rightarrow& \sum_{x\in S'} (p(x) - e^\eps p(x+s)) \leq \delta\\
    & \Longrightarrow \sum_{x \in S'} (p(x) - e^\eps p(x+s'))\leq \delta
\end{align*}

\noindent By showing that $\forall x\in S\setminus S'. \: p(x) \leq p(x+s) \Longrightarrow p(x) \leq p(x+s')$ and $\sum_{x\in S'} (p(x) - e^\eps p(x+s)) \leq \delta \Longrightarrow \sum_{x \in S'} (p(x) - e^\eps p(x+s'))\leq \delta$ we have shown that $\sum_{x\in S} p(x)\! \leq\! \delta \!+ e^\eps \sum_{x\in S} p(x+s)\!\Longrightarrow\!\sum_{x\in S} \!p(x) \!\leq\! \delta \!+\! e^\eps\! \sum_{x\in S} p(x\!+\!s')$. \new{Applying \cref{lemma:discretization-reduction} generalises this result to any query output and drawn noise in $\mathbb R$} which concludes the proof.
\end{proof}

	\begin{figure*}[htb!]
	    \centering
	    \begin{subfigure}[t]{0.42\textwidth}
    	    \includegraphics[width=1.0\textwidth]{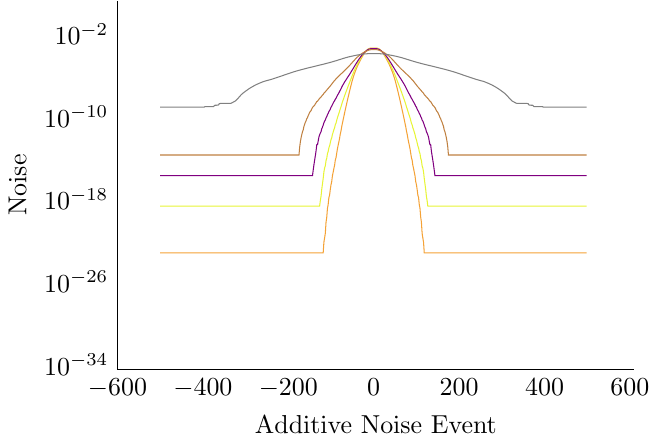}
	        \caption{$l_{B,A}^{ADP}$ for multiple number of compositions}
            \label{fig:dp_sgd_l2:pb_adp_b_a}
	    \end{subfigure}
	    \begin{subfigure}[t]{0.42\textwidth}
    	    \includegraphics[width=1.0\textwidth]{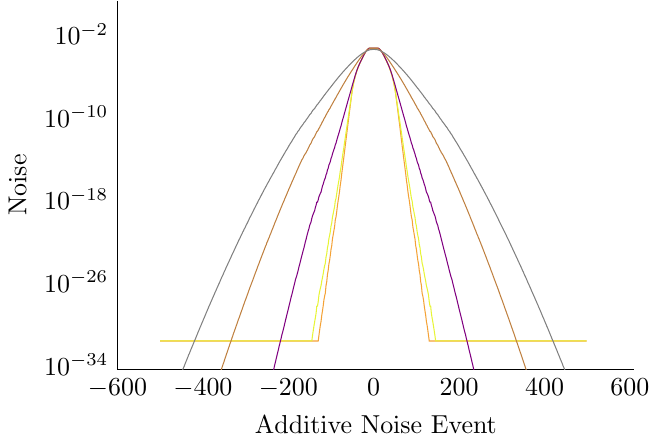}
	        \caption{$l_{B,A}^{MA}$ for multiple number of compositions}
            \label{fig:dp_sgd_l2:ma_b_a}
	    \end{subfigure}
	    \begin{subfigure}[t]{0.12\textwidth}\vspace{-14em}
    	    \includegraphics[width=0.9\textwidth]{images/texfiles/legend/legend_dp_sgd}
	    \end{subfigure}
	    \hfill\phantom{*}
	    \begin{subfigure}[t]{0.42\textwidth}
    	    \includegraphics[width=1.0\textwidth]{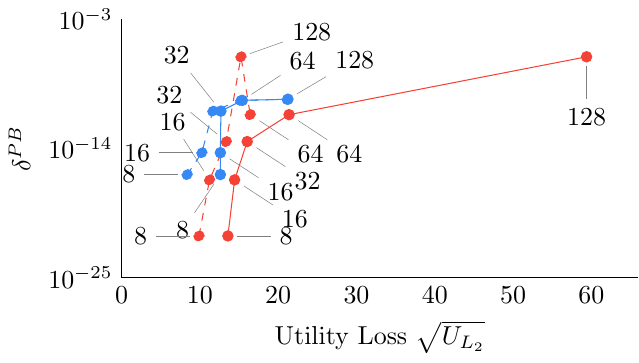}
	        \caption{Generated noise distributions compared to a truncated Gaussian with same $\delta^{PB}$\!\!. Numbers indicate compositions.}
            \label{fig:dp_sgd_l2:optimality}
	    \end{subfigure}
	    \begin{subfigure}[t]{0.42\textwidth}
    	    \includegraphics[width=1.0\textwidth]{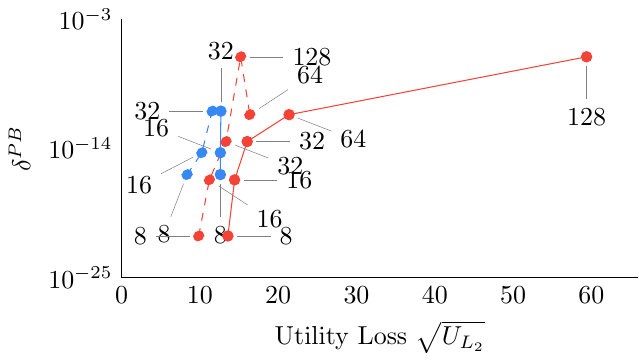}
	        \caption{Same as (c) but generated noise distributions and the corresponding truncated Gaussian are equally range-reduced.}
            \label{fig:dp_sgd_l2:optimality_cut}
	    \end{subfigure}
	    \begin{subfigure}[t]{0.12\textwidth}\vspace{-12em}
    	    \includegraphics[width=0.9\textwidth]{images/texfiles/legend/legend_dp_sgd_truncated}
	    \end{subfigure}
	    \hfill\phantom{*}
	    \begin{subfigure}[t]{0.42\textwidth}
    	    \includegraphics[width=1.0\textwidth]{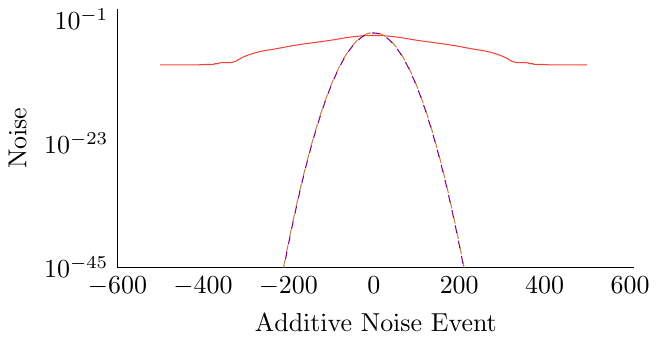}
	        \caption{
	        Illustration of generated vs. truncated vs. range-reduced truncated noise with equal $\delta^{PB}$\!\! for $l^{ADP}$\!\!\!, $n$=$128$ and $U_{L_2}$. 
	        Relative difference between truncated Gaussians std-dev: $0.23\%$.
	       }
            \label{fig:dp_sgd_l2:comparison:pb_adp}
	    \end{subfigure}
	    \begin{subfigure}[t]{0.42\textwidth}
    	    \includegraphics[width=1.0\textwidth]{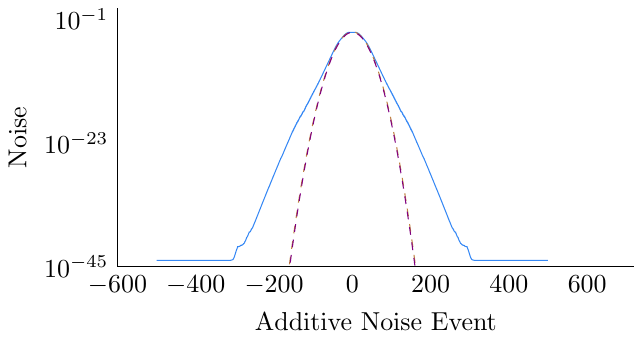}
	        \caption{Same as (e) $l^{MA}$ but for $n=32$ compositions (because no plateaus for $n$$\geq$$64$). Relative difference between truncated Gaussians std-dev: $0.75\%$.
	        }
            \label{fig:dp_sgd_l2:comparision:ma}
	    \end{subfigure}
	    \begin{subfigure}[t]{0.12\textwidth}\vspace{-12em}
    	    \includegraphics[width=0.9\textwidth]{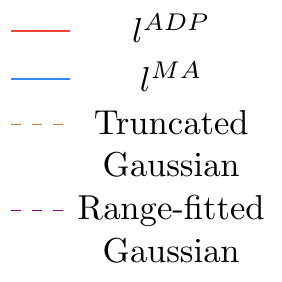}
	    \end{subfigure}
	    \hfill\phantom{*}
	    \begin{subfigure}[t]{0.42\textwidth}
    	    \includegraphics[width=1.0\textwidth]{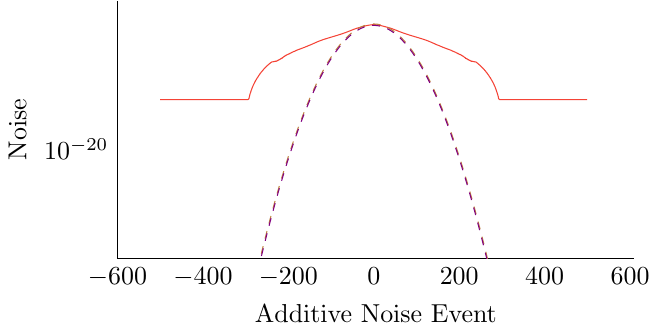}
	        \caption{
	        Illustration of generated vs. truncated vs. range-reduced truncated noise with equal $\delta^{PB}$\!\! for $l^{ADP}$\!\!\!, $n$=$128$ and $U_{L_1}$. 
	        Relative difference between truncated Gaussians std-dev: $0.33\%$.
	       }
           \label{fig:dp_sgd:comparison:pb_adp}
	    \end{subfigure}
	    \begin{subfigure}[t]{0.42\textwidth}
    	    \includegraphics[width=1.0\textwidth]{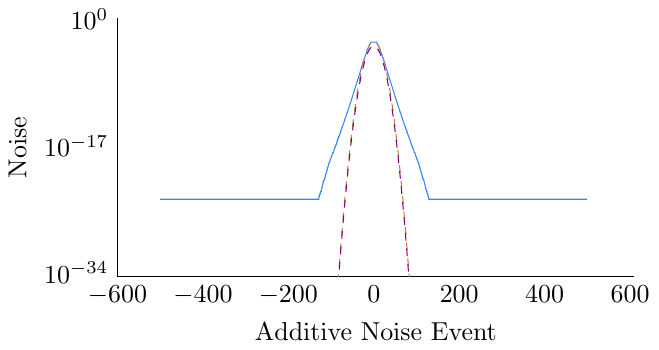}
	        \caption{Same as (g) $l^{MA}$ but for $n=8$ compositions (because no plateaus for $n$$\geq$$16$). Relative difference between truncated Gaussians std-dev: $0.15\%$.
	        }
            \label{fig:dp_sgd:comparision:ma}
	    \end{subfigure}
	    \hfill\phantom{*}
	    \caption{Illustration of DP-SGD (sub-sampling probability $q=0.1$) for multiple number of compositions $n$ and clipping distance $C=1$ for $U_{L_2}$ and comparison between the truncated and range-reduced Gaussian to the generated noise for $U_{L_1}$. PrivacyBuckets-ADP and Extended Moments Accountant (MA) are compared to the truncated Gaussian mechanism with same $\delta^{PB}$\!. Note that \textbf{(a)-(f)}: $U_{L_2}$, and \textbf{(g), (h)}: $U_{L_1}$. $\eps=0.3$}
        \label{fig:dp_sgd_l2}
    \end{figure*}

\vfill
\end{document}